\title[Superconstant Inapproximability of Decision Tree Learning]{Superconstant Inapproximability of Decision Tree Learning}
\pgfplotsset{width=8cm,compat=newest}
\def\colorful{0}
\newcommand{\violet}[1]{{\color{violet}{#1}}}
\newcommand{\violet}[1]{{{#1}}}
\newcommand{\error}{\mathrm{error}}
\newcommand{\Sens}{\mathrm{Sens}}
\newcommand{\IsEdge}{\mathrm{\sc IsEdge}}
\def\SAT{{\sc SAT}} 
\newcommand{\NP}{\mathrm{NP}} 
\newcommand{\VC}{\mathrm{VC}}
\newcommand{\Cert}{\mathrm{Cert}}
\newcommand{\dtsize}{\mathrm{DT}} 
\newcommand{\ind}{\mathrm{Ind}}
\begin{document}

\title{Superconstant inapproximability of decision tree learning \vspace{10pt}}

\newcommand{\citep}{\cite}

\author{ 
Caleb Koch \vspace{6pt} \\ 
{{\sl Stanford}} \and 
\hspace{5pt} Carmen Strassle \vspace{6pt} \\
\hspace{5pt} {{\sl Stanford}} \vspace{10pt} \and 
Li-Yang Tan \vspace{6pt}  \\
\hspace{-10pt} {{\sl Stanford}}
}

\date{\small{\today}}

\maketitle

\begin{abstract}
    We consider the task of properly PAC learning decision trees with queries. Recent work of Koch, Strassle, and Tan showed that the strictest version of this task, where the hypothesis tree~$T$ is required to be {\sl optimally} small, is NP-hard. Their work leaves open the question of whether the task remains intractable if $T$ is only required to be close to optimal, say within a factor of $2$, rather than exactly optimal. 

We answer this affirmatively and show that the task indeed remains NP-hard even if~$T$ is allowed to be within {\sl any} constant factor of optimal. More generally, our result allows for a smooth tradeoff between the hardness assumption and inapproximability factor. As Koch et al.’s techniques do not appear to be amenable to such a strengthening, we first recover their result with a new and simpler proof, which we couple with a new XOR lemma for decision trees. While there is a large body of work on XOR lemmas for decision trees, our setting necessitates parameters that are extremely sharp and are not known to be attainable by existing such lemmas. Our work also carries new implications for the related problem of {\sc Decision Tree Minimization}. 
\end{abstract}


 \thispagestyle{empty}
 \newpage 
 \setcounter{page}{1}

\section{Introduction}

Decision trees are a basic and popular way to represent data. Their simple logical structure makes them the prime example of an interpretable model. They are also the base model at the heart of powerful ensemble methods, such as XGBoost and random forests, that achieve state-of-the-art performance in numerous settings. Owing in part to their practical importance, the task of efficiently constructing decision tree representations of data has been intensively studied in the theory community for decades, in a variety of models and from both algorithmic and hardness perspectives. Indeed, on the heels of Cook and Karp’s papers on the theory of NP-hardness,~\cite{HR76} proved that a certain formulation of decision tree learning is NP-hard. Quoting their introduction, ``While the proof to be given is relatively simple, the importance of this result can be measured in terms of the large amount of effort that has been put into finding efficient algorithms for constructing optimal binary decision trees." This effort has only compounded over the years, with a recent surge of interest coming from the interpretable machine learning community; the 2022 survey~\citep{RCCHSZ22} lists decision tree learning as the very first of the field’s ``10 grand challenges".

We consider the problem within the model of PAC learning with queries~\citep{Val84,Ang88}. In this model, the learner is given query access to a function $f$ and i.i.d.~draws from a distribution $\mathcal{D}$, along with the promise that $f$ is computable by a size-$s$ decision tree. Its task is to output a size-$s'$ decision tree that achieves high accuracy with respect to $f$ under $\mathcal{D}$, where $s'$ is as close to~$s$ as possible. Motivation for studying query learners for this problem is twofold. First, it models the task of converting an {\sl existing} trained model $f$, for which one has query access to, into its decision tree representation---a common post-processing step for interpretability reasons. The second, more intrinsic, motivation comes from the fact that computational lower bounds against query learners, for {\sl any} learning task, have generally been  elusive. We are aware of only one such result outside of decision tree learning, on the NP-hardness of learning DNF formulas with queries~\citep{Fel06}---this resolved a longstanding open problem of~\cite{Val84,Val85}.

\paragraph{\cite{KST23}.} For decision tree learning, recent work of Koch, Strassle, and Tan showed that the strictest version of the problem, where $s' = s$, is NP-hard. This  resolved an open problem that had been raised repeatedly over the years~\citep{Bsh93,GLR99,MR02,Fel16}, but still left open the possibility of efficient algorithms achieving $s'$ that is slightly larger than $s$. \cite{KST23} listed this as an natural avenue for further research, while also pointing to challenges in extending their techniques to even rule out $s' = 2s$. 

\subsection{This work} We show that the problem remains NP-hard even for $s' = Cs$ where $C$ is an arbitrarily large constant:

\medskip 

\begin{tcolorbox}[colback = white,arc=1mm, boxrule=0.25mm]
\begin{theorem}
\label{thm:main}
For every constant $C>1$, there is a constant $\eps>0$ such that the following holds. If there is an algorithm running in time $t(n)$ that, given queries to an $n$-variable function $f$ computable by a decision tree of size $s = O(n)$ and random examples $(\bx,f(\bx))$ drawn according to a distribution $\mathcal{D}$, outputs w.h.p.~a decision tree of size $Cs$ that is $\eps$-close to $f$ under $\mathcal{D}$, then \textnormal{SAT} can be solved in randomized time $O(n^2)\cdot t(\poly(n))$. 
\end{theorem}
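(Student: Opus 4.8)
The plan is to set up a ``run the learner, then verify'' reduction in which essentially all the work is hidden in the construction of a hard instance. From a \textsc{SAT} formula $\phi$ on $n$ variables I would construct, in time $O(n^2)$, a function $f_\phi$ on $N=\poly(n)$ variables, an efficiently samplable distribution $\mathcal D_\phi$ under which $f_\phi$ is $\poly(n)$-time evaluable, and a size bound $s=O(N)$, with the following gap: if $\phi\in\textsc{SAT}$ then $f_\phi$ is computed \emph{exactly} by a decision tree of size $s$ (so the learner's promise holds); and if $\phi\notin\textsc{SAT}$ then every decision tree of size at most $Cs$ has error at least a fixed constant $c_0$ with respect to $f_\phi$ under $\mathcal D_\phi$. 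Fixing $\eps<c_0/4$, \textsc{SAT} is decided by running the hypothesized learner on $(f_\phi,\mathcal D_\phi)$ to obtain a hypothesis $h$ (the only place the $t(\poly(n))$ term enters), drawing $\poly(n)$ fresh labeled examples to estimate $\error_{\mathcal D_\phi}(h,f_\phi)$ to accuracy $c_0/4$, and accepting iff $h$ has size at most $Cs$ and the estimate is below $c_0/2$. In the satisfiable case the learner outputs, w.h.p., a size-$Cs$ tree with error $\le\eps<c_0/4$ and we accept; in the unsatisfiable case every tree of size at most $Cs$ has error $\ge c_0$, so whatever $h$ is, the test fails and we reject. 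The procedure is randomized (the learner is only w.h.p.\ correct, and we sample) and runs in time $O(n^2)+t(\poly(n))=O(n^2)\cdot t(\poly(n))$.

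Building $(f_\phi,\mathcal D_\phi,s)$ uses the two ingredients flagged in the abstract. \emph{Stage 1 (base gadget).} Following but simplifying \cite{KST23}, and in the lineage of \cite{HR76}, reduce \textsc{SAT} --- most naturally by first passing through a gap version of \textsc{Vertex-Cover} (NP-hard via the PCP theorem) --- to producing a function $g_\phi$ on $m=\poly(n)$ variables, a distribution $\mu_\phi$, and a parameter $s_0=\poly(n)$ such that $\phi\in\textsc{SAT}$ $\Rightarrow$ $g_\phi$ has an exact size-$s_0$ decision tree, while $\phi\notin\textsc{SAT}$ $\Rightarrow$ $g_\phi$ is $\eps_0$-far under $\mu_\phi$ from every decision tree of size $c\cdot s_0$, for fixed constants $c>1$, $\eps_0>0$. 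This slightly strengthens the $s'=s$ hardness of \cite{KST23}; the reason for a fresh proof is that the amplification below needs the base hardness in a form robust under powering, whereas \cite{KST23}'s argument squeezes it out of a single extra leaf. \emph{Stage 2 (XOR amplification).} Set $f_\phi:=g_\phi^{\oplus k}$ on $km$ variables, $\mathcal D_\phi:=\mu_\phi^{\otimes k}$, with $k=\Theta(\log C)$ (and $k\to\infty$ as $C$ becomes superconstant, which is exactly why one must then strengthen the hypothesis to \textsc{SAT} being hard on superpolynomial-size instances --- the smooth tradeoff). In the satisfiable case the product of the $k$ size-$s_0$ trees computes $f_\phi$ and has size $s_0^k=:s$, which is $\poly(n)$ and can be padded with dummy variables so that $s=O(N)$; in the unsatisfiable case the new XOR lemma should give that $f_\phi$ is $c_0$-far from every tree of size $(cs_0)^k=c^k s_0^k\ge Cs$.

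The crux, and the reason a \emph{new} XOR lemma is required, is Stage~2. The lemma must simultaneously (i) multiply the size threshold by $c^k$, i.e.\ by $k$ full powers of the base gap, losing nothing in the exponent --- a mere $\poly(n)$ slack in the size would wipe out the gap between $s$ and $Cs$, or violate $s=O(N)$; (ii) keep the distance bounded below by a fixed constant $c_0$, starting from a base value $\eps_0$ we cannot assume is large; and (iii) hold against trees that only \emph{approximate} $g_\phi^{\oplus k}$, not merely compute it exactly. Existing XOR and direct-sum lemmas for decision trees --- via rank, via randomized query complexity, or via hardcore-set/corruption arguments --- all leak constant or logarithmic factors (in the exponent, in the distance, or in the $\eps_0$-dependence) that are fatal at this precision; this is the ``extremely sharp'' regime the abstract refers to. I expect the proof to go by a direct structural recursion: given a size-$S$ tree $T$ claiming to $\eps$-approximate $g_\phi^{\oplus k}$, condition on the part of $T$ reading the first block of coordinates; argue that for a typical conditioning the residual is a $(k-1)$-fold XOR of near-copies of $g_\phi$ against a proportionally smaller tree, with only a small additive error increment; and recurse, the base case being Stage~1. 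Keeping the size and error bookkeeping tight enough to land at exactly $c^k$ and a constant $c_0$ is the main obstacle, and the technical heart of the paper.
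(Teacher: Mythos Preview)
Your high-level scaffolding matches the paper: reduce \textsc{SAT} to gap \textsc{Vertex Cover} on bounded-degree graphs, build a base gadget (the paper's $\ell\text{-}\IsEdge$) with a $(1+\delta)$ multiplicative size gap, amplify via an $r$-fold XOR with $r=\Theta(\log C)$, then run the learner and test the hypothesis's size and error. The padding to force $s=O(N)$ and the observation that the admissible $\eps$ depends on $C$ are both right.

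The substantive gap is your plan for the XOR step. You propose a block-by-block recursion: condition on the queries into the first block, reduce to a $(k-1)$-fold XOR, iterate. That is exactly the shape of the existing decision-tree XOR lemmas (Drucker's in particular), and the paper states explicitly that this route loses a constant factor in the size exponent---one gets hardness only against size $(c s_0)^{c'k}$ for some $c'<1$ rather than $(c s_0)^k$---which is fatal here (the gap between $s$ and $Cs$ evaporates). The queries to different blocks are interleaved arbitrarily in $T$, and peeling off one block cleanly is where the leak enters. You have correctly flagged this as the main obstacle, but the resolution is not a sharper version of the same recursion.

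The paper does \emph{not} prove a black-box XOR lemma. Instead it proves two generic lemmas and exploits a special property of the base function to fuse them without loss. A \emph{patch-up lemma} shows that any tree $T$ can be completed to compute $f^{\oplus r}$ exactly at additive cost $2^r\sum_{x\in f^{-1}(1)^r}\prod_i \max\{1,\Cert(f_{\pi(x)},x^{(i)})\}$, so $\dtsize(f^{\oplus r})-|T|$ is controlled by products of \emph{certificate complexities} along the paths of $T$. A \emph{hard-distribution lemma} constructs a canonical distribution (pick a uniform $1$-input, then with probability $\tfrac12$ output it and otherwise output a uniform sensitive neighbor) under which $\error_{\mathcal D^{\otimes r}}(T,f^{\oplus r})$ is \emph{lower}-bounded by the same sum with \emph{sensitivities} $|\Sens(f_{\pi(x)},x^{(i)})|$ in place of certificates. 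For a generic $f$ the slack between $\Cert$ and $\Sens$ is precisely where an exponent loss would reappear. The crux is that for $\ell\text{-}\IsEdge$, on every $1$-input and under every restriction, sensitivity and certificate complexity \emph{coincide}. Chaining the two lemmas then yields, with no loss in the exponent,
\[
|T|\ \ge\ \big[(\ell+1)(k'+m)\big]^r\ -\ \eps\,\big[8m(\ell+1)\big]^r,
\]
together with the exact-computation fact $\dtsize(f^{\oplus r})=\dtsize(f)^r$. So the amplification is function-specific, routed through $\Sens=\Cert$ for $\ell\text{-}\IsEdge$, rather than a generic block recursion; your Stage~2 plan would need to be replaced by this mechanism.
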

\end{tcolorbox}
\medskip 

Consequently, assuming $\mathrm{NP} \ne \mathrm{RP}$, any algorithm for the problem has to either be inefficient with respect to time (i.e.~take superpolynomial time), or inefficient with respect to representation size (i.e.~output a hypothesis of size much larger than actually necessary).  

\Cref{thm:main} is a special case of a more general result that allows for a smooth tradeoff between the strength of the hardness assumption on one hand and the inapproximability factor on the other hand:

\medskip 

\begin{tcolorbox}[colback = white,arc=1mm, boxrule=0.25mm]
\begin{theorem}
    \label{thm:main-general}
     Suppose for some $r\ge 1$ there is a time $t(s,1/\eps)$ algorithm which given queries to an $n$-variable function $f$ computable by a decision tree of size $s$ and random examples $(\bx,f(\bx))$ drawn according to a distribution $\mathcal{D}$, outputs w.h.p.~a decision tree of size $2^{O(r)}\cdot s$ that is $\eps$-close to $f$ under $\mathcal{D}$. Then $\mathrm{SAT}$ can be solved in randomized time $\Tilde{O}(rn^2)\cdot t(n^{O(r)},2^{O(r)})$. 
\end{theorem}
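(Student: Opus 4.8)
The plan is to reduce \textnormal{SAT} to a gap version of the learning problem in two stages. \textit{Stage 1 (base hardness; the new proof of \cite{KST23}).} Build a polynomial-time reduction sending a \textnormal{SAT} instance $\varphi$ on $n$ variables to an $n'$-variable function $g_\varphi$ (with $n' = \poly(n)$), an efficiently samplable distribution $\mathcal{D}_\varphi$, and a size parameter $s_0 = O(n')$, such that: (i) if $\varphi\in\textnormal{SAT}$ then $g_\varphi$ is computed \emph{exactly} by a decision tree of size $s_0$; and (ii) if $\varphi\notin\textnormal{SAT}$ then every decision tree of size $< q\cdot s_0$ is more than $\eps_0$-far from $g_\varphi$ under $\mathcal{D}_\varphi$, where $q>1$ and $\eps_0>0$ are absolute constants. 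Obtaining an honest \emph{multiplicative} gap in (ii) — as opposed to the additive gap of one behind the $s'=s$ hardness of \cite{KST23} — is the first of the two new ingredients; concretely I would design a self-contained combinatorial gadget in which unsatisfiability forces not one extra node but a constant-factor blow-up.

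\textit{Stage 2 (XOR amplification).} Take $r$ fresh copies $g^{(1)},\dots,g^{(r)}$ of $g_\varphi$ on pairwise disjoint blocks of variables and set $F := g^{(1)}\oplus\cdots\oplus g^{(r)}$, $\mathcal{D} := \mathcal{D}_\varphi^{(1)}\times\cdots\times\mathcal{D}_\varphi^{(r)}$, and $s := s_0^{\,r} = n^{O(r)}$. If $\varphi\in\textnormal{SAT}$, substituting the exact size-$s_0$ tree for each $g^{(i)}$ into the leaves of a tree for the outer XOR shows that $F$ is computed exactly by a tree of size $\prod_i\dtsize(g^{(i)}) = s$, so the hypothesized learner, invoked with size parameter $s$ and accuracy $\eps$, receives a legal instance and w.h.p.\ returns a size-$2^{O(r)}s$ tree $T$ with $\Pr_{\bx\sim\mathcal{D}}[T(\bx)\ne F(\bx)]\le\eps$. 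If $\varphi\notin\textnormal{SAT}$, the XOR lemma below shows that every decision tree of size $< (q s_0)^r = q^{\,r} s$ is more than $\eps$-far from $F$; provided the constant hidden in the learner's size bound $2^{O(r)}s$ is at most $\frac{1}{2}\log_2 q$ we have $2^{O(r)}s < q^{\,r}s$, so in the unsatisfiable case the learner cannot output a hypothesis that is simultaneously small enough and $\eps$-accurate. We therefore decide $\varphi$ by running the learner once and testing its output $T$: reject if $\dtsize(T)$ exceeds the size bound, and otherwise estimate $\Pr_{\bx\sim\mathcal{D}}[T(\bx)\ne F(\bx)]$ from fresh samples, each query to $F$ and each sample being computable in $\poly(n,r)$ time (leaving a constant-factor margin in the accuracy gap of Stage 1(ii) so that sampling distinguishes the two cases). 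Amplifying the success probability and accounting for the sampling and query-answering overhead gives the claimed bound $\Tilde{O}(rn^2)\cdot t(n^{O(r)},2^{O(r)})$; \Cref{thm:main} then follows by taking $r=\Theta(\log C)$ (so that $2^{\Omega(r)}\ge C$) and padding the final instance with dummy variables so that its size is linear in its number of variables.

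The crux — and where I expect the real difficulty to lie — is the XOR lemma: \emph{if for every $i$ no size-$b$ decision tree is $\eps_0$-close to $g^{(i)}$, then no size-$b^{\,r}$ decision tree is $\eps$-close to $F=\bigoplus_i g^{(i)}$}, for a suitable $\eps$ with $1/\eps = 2^{O(r)}$ (and with a constant-factor margin as above). The parameters must be unusually sharp. First, the size lower bound has to be essentially lossless in the exponent: a multiplicative slack of even a \emph{constant factor per copy} replaces $q^{\,r}$ by $(q/c)^r$, which is worthless once $c\ge q$, so one cannot afford the constant- or polynomial-factor losses produced by boosting-based amplification lemmas, and a naive restriction argument is useless since restricting a tree to a single block only yields $\dtsize(T)\ge\dtsize(\text{one sub-instance})$, the wrong direction. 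Second, the lemma must survive the learner's $\eps$ error. The approach I would take is a round-by-round peeling argument: given a tree $T$ purportedly $\eps$-close to $F$, average over random settings $\rho$ of the first $r-1$ blocks to obtain, for a typical $\rho$, a subtree $T|_\rho$ on the last block that computes $g^{(r)}$ — up to a global bit flip — to accuracy $\eps_0$, hence of size $\ge b$; then argue that since this happens for a $1-o(1/r)$ fraction of $\rho$ (which is where one must exploit combinatorial structure of $g_\varphi$, e.g., control of its certificate/minterm complexity so that the restricted subproblem is hard not merely on average), the tree $T$ must contain $\ge b$ essentially disjoint large subtrees, reducing to the same statement for $\bigoplus_{i<r}g^{(i)}$ and giving $\dtsize(T)\ge b\cdot(\text{bound for }r-1\text{ copies})$. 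Carrying this out while (a) incurring no per-copy multiplicative loss and (b) keeping the $\eps_0$-errors from compounding catastrophically over the $r$ rounds is precisely the regime that existing decision-tree XOR lemmas do not reach, and is the technical heart of the proof.
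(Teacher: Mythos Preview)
Your two-stage outline matches the paper, but the crux you identify in Stage~2 is precisely where your plan has a genuine gap. The peeling argument you sketch---restrict $r-1$ blocks, conclude the residual tree on the last block has size $\ge b$, and then argue these large subtrees are ``essentially disjoint'' across restrictions---is the template behind existing decision-tree XOR lemmas, and it is known to lose a constant factor per copy (this is the inherent loss in \cite{Dru12} that the paper discusses). You acknowledge this, but you offer no mechanism for avoiding it; hoping that ``combinatorial structure of $g_\varphi$'' will rescue disjointness is not a plan, and without it you get $(q/c)^r$ instead of $q^r$, which is fatal.

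The paper does \emph{not} prove a black-box XOR lemma of the form you state. Instead it develops two tools---a \emph{patch-up lemma} upper-bounding $\dtsize(f)-|T|$ by the sum of $\Cert(f_{\pi(x)},x)$ over $1$-inputs $x$, and a \emph{hard-distribution lemma} lower-bounding $\error_\mathcal{D}(T,f)$ by a normalized sum of $|\Sens(f_{\pi(x)},x)|$---and chooses the base function $\ell\text{-}\IsEdge_G$ so that on $1$-inputs $\Cert=|\Sens|$ pointwise. Chaining the two lemmas then gives $|T|\ge\dtsize(f)-O(\eps\cdot\Sens(f)\cdot|f^{-1}(1)|)$, i.e.\ a direct size lower bound for any $\eps$-approximator. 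Crucially, both lemmas have XOR-ed versions whose right-hand sides are \emph{products} over blocks, and the exact decision tree complexity of $f^{\oplus r}$ is \emph{exactly} $\dtsize(f)^r$ by Savick\'y's theorem---no loss at all. Combining these yields $|T|\ge\dtsize(f)^r-\eps\cdot(\text{something})^r$ for any $\eps$-approximator of $f^{\oplus r}$, with the correct exponent on both terms. The point is that by tracking certificate and sensitivity \emph{along each path of $T$}, one never needs to argue disjointness of subtrees across restrictions; the additivity over paths and the multiplicativity of Savick\'y do all the work. Your proposal is missing this framework entirely, and the restriction-based route you describe will not reach the required parameters.
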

\end{tcolorbox}
\medskip 

By taking $r$ to be superconstant in \Cref{thm:main-general}, we obtain superconstant inapproximability ratios at the price of stronger yet still widely-accepted hardness assumptions. For example, assuming SAT cannot be solved in randomized {\sl quasipolynomial} time, we get a near-polynomial inapproximability ratio of $2^{(\log s)^\gamma}$ for any constant $\gamma < 1$. 

Our work also carries new implications for the related problem of {\sc Decision Tree Minimization}: Given a decision tree $T$, construct an equivalent decision tree $T'$ of minimal size. This problem was first shown to be NP-hard by~\cite{ZB00}, and subsequently~\cite{Sie08} showed that it is NP-hard even to approximate. We recover~\cite{Sie08}'s inapproximability result, and in fact strengthen it to hold even if $T'$ is only required to {\sl mostly} agree with $T$ on a given {\sl subset} of inputs (rather than fully agree with $T$ on all inputs as in~\cite{Sie08}).   See \Cref{sec:dt-dataset-min} for details. 


\section{Background and Context}

\subsection{Algorithms for properly learning decision trees}

In the language of learning theory, we are interested in the task of properly PAC learning decision trees. We distinguish between {\sl strictly-proper} learning, where the size $s'$ of the hypothesis decision tree has to exactly match the optimal size $s$ of the target decision tree, and {\sl weakly-proper} learning, where $s'$ can be larger than $s$.~\cite{KST23} therefore establishes the hardness of strictly-proper learning whereas our work establishes the hardness even of weakly-proper learning even when $s'$ is larger than $s$ by any constant.
We now overview the fastest known algorithms for both settings. \vspace{-7pt}

\paragraph{Strictly-proper learning via dynamic programming.} There is a simple $2^{O(n)}$ time algorithm for strictly-properly learning decision trees: draw a dataset of size $O(s\log n)$, dictated by the VC dimension of size-$s$ decision trees, and run a $2^{O(n)}$-time dynamic program to find a size-$s$ decision tree that fits the dataset perfectly. (See~\cite{GLR99,MR02} for a description of this dynamic program.) There are no known improvements to this naive algorithm if one insists on strictly-proper learning, and indeed,~\cite{KST23}'s result strongly suggests that there are probably none. 
\vspace{-7pt}

\paragraph{Weakly-proper learning via Ehrenfeucht--Haussler.} The setting of weakly-proper learning allows for a markedly faster algorithm: a classic algorithm of~\cite{EH89} runs in $n^{O(\log s)}$ time and outputs a decision tree hypothesis of size $s' = n^{O(\log s)}$. \cite{EH89} listed as an open problem that of designing algorithms that output smaller hypotheses, i.e.~ones where $s'$ is closer to $s$. There has been no algorithmic progress on this problem since 1989, and prior to our work, there were also no hardness results ruling out efficient algorithms achieving say $s' = 2s$.

\subsection{Lower bounds for random example learners}

\begin{figure}[!ht]

\begin{tcolorbox}[colback = white,arc=1mm, boxrule=0.25mm]

    \centering
    \begin{tikzpicture}[]
        \def\x{5}
        \def\c{0}
        \def\t{6}
        \def\tt{5}
        \draw[black, |->] (\c,-4) node[right] {} -- (\c,\tt) node[right]{};

        \draw[xshift=0 cm] (\c,\t) node[above,fill=white,text width=4cm,text centered] { {\textbf{Inapproximability factor}}};

        \draw[xshift=-\x cm] (\c,\t) node[above,fill=white,text width=6cm,text centered] { {\textbf{Lower bounds against random-example learners}}};
        \draw[xshift=\x cm] (\c,\t) node[above,fill=white,text width=5cm,text centered] { {\textbf{Lower bounds against query learners}}};


        \draw[black, -] (\c,-3.5) node[] {} -- (-4,-3.5) node[]{};
        \draw[black, -] (\c,-1.5) node[] {} -- (-4,-1.5) node[]{};
        \draw[black, -] (\c,.5) node[] {} -- (-4,.5) node[]{};
        \draw[black, -] (\c,2.5) node[] {} -- (-4,2.5) node[]{};
        \draw[black, -] (\c,4.5) node[] {} -- (-4,4.5) node[]{};

        \draw[black, -] (\c,-3.5) node[] {} -- (4,-3.5) node[]{};
        \draw[black, -] (\c,-1.5) node[] {} -- (4,-1.5) node[]{};

        \draw[color=black] (\c,-3.5) node[fill=white,text width=3.5cm,text centered] {\color{black} Exact \\ ({\footnotesize i.e.~strictly-proper learning})};
        \draw[color=black] (\c,-1.5) node[fill=white,text width=3.5cm,text centered] {\color{black} Superconstant};
        \draw[color=black] (\c,.5) node[fill=white,text width=3.5cm,text centered] {\color{black} Polynomial};
        \draw[color=black] (\c,2.5) node[fill=white,text width=3.5cm,text centered] {\color{black} Superpolynomial};
        \draw[color=black] (\c,4.5) node[fill=white,text width=3.5cm,text centered] {\color{black} Quasipolynomial \\ ({\footnotesize matching~\cite{EH89}'s guarantee})};
        \draw[color=black,xshift=-\x cm] (\c,-3.5) node[fill=white,text width=4.1cm,text centered] {\color{black} \cite{PV88} \\ (under $\NP\neq \mathrm{RP}$)};
        \draw[color=black,xshift=-\x cm] (\c,-1.5) node[fill=white,,text width=4.1cm,text centered] {\color{black} \cite{HJLT96} \\ (under $\NP\neq\mathrm{RP}$)};
        \draw[color=black,xshift=-\x cm] (\c,.5) node[fill=white,text width=4.1cm,text centered] {\color{black} \cite{ABFKP09}\\ (under ETH)};
        \draw[color=black,xshift=-\x cm] (\c,2.5) node[fill=white,text width=4.1cm,text centered] {\color{black} \cite{KST23soda}\\ (under ETH)};
        \draw[color=black,xshift=-\x cm] (\c,4.5) node[fill=white,text width=4.1cm,text centered] {\color{black} \cite{KST23soda}\\ (under inapproximability of parameterized {\sc Set Cover})};

        \draw[color=black,xshift=\x cm] (\c,-3.5) node[fill=white,text width=4.1cm,text centered] {\color{black} \cite{KST23} \\ (under $\NP\neq \mathrm{RP}$)};
        \draw[color=black,xshift=\x cm] (\c,-1.5) node[fill=white,text width=4.1cm,text centered] {\color{teal} \textbf{This work  \\ (under $\NP\ne \mathrm{RP}$) }};


    \end{tikzpicture}

\end{tcolorbox}
    
    \caption{Summary of lower bounds for decision tree learning.}
    \label{fig:lowerbounds}
\end{figure}
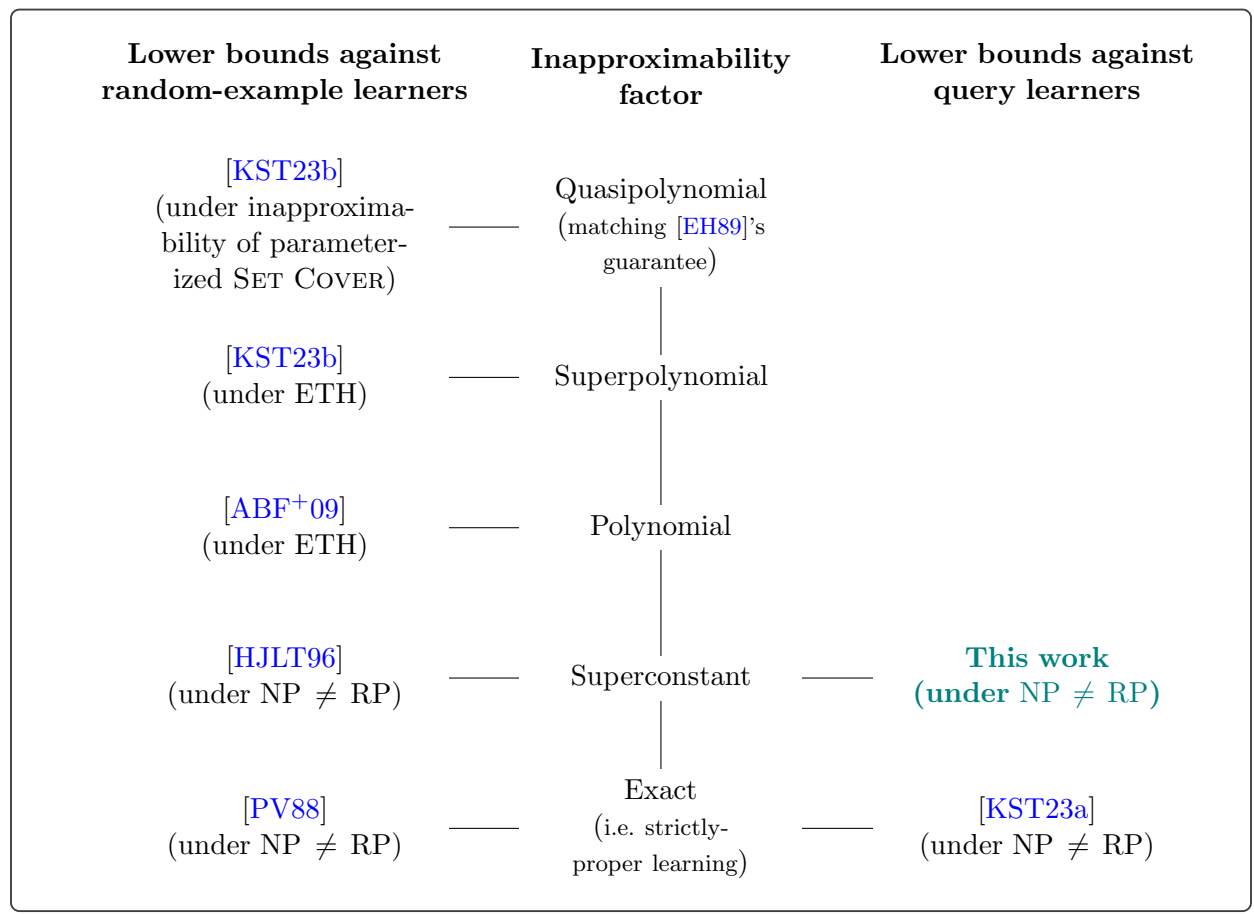

The problem is also well-studied in the model of PAC learning from {\sl random examples}, where the algorithm is only given labeled examples $(\boldsymbol{x},f(\boldsymbol{x}))$ where $\boldsymbol{x} \sim \mathcal{D}$. Lower bounds against random example learners are substantially easier to establish, and a sequence of works has given strong evidence of the optimality~\cite{EH89}’s weakly-proper algorithm under standard complexity-theoretic assumptions.

\cite{PV88}, in an early paper on the hardness of PAC learning, showed that strictly-proper learning of decision trees from random examples is NP-hard; they attributed this result to an unpublished manuscript of~\cite{Ang}.~\cite{HJLT96} then established a superconstant inapproximability factor (assuming $\mathrm{NP} \ne \mathrm{RP}$), which was subsequently improved to polynomial by~\cite{ABFKP09} (assuming the Exponential Time Hypothesis (ETH)). Recent work of~\cite{KST23soda} further improves the inapproximability factor to superpolynomial (assuming ETH) and quasipolynomial (assuming the inapproximability of parameterized {\sc Set Cover}), the latter of which exactly matches~\cite{EH89}’s performance guarantee.

It is reasonable to conjecture that~\cite{EH89}'s algorithm is optimal even for query learners. If so, our work is a step forward for proving lower bounds in the more challenging setting of query learners so that these bounds might ``catch up" with those in the random example setting; historically, the race has not been close---query-learner lower bounds have lagged far behind.
Just as~\cite{KST23} can be viewed as establishing the query-learner analogue of~\cite{PV88}’s result (i.e.~the hardness of strictly-proper learning), our work can be viewed as establishing 
 the query-learner analogue of~\cite{HJLT96}’s result (i.e.~superconstant inapproximability of weakly-proper learning).  \Cref{fig:lowerbounds} summarizes the current landscape of decision tree lower bounds and shows how our work fits into it.

We refer the reader to Section 2.1 of~\cite{KST23} for a discussion of the technical challenges involved in proving lower bounds against query learners, and why the techniques developed in~\citep{PV88,HJLT96,ABFKP09,KST23soda} for random-example learners do not carry over.

\subsection{Other related work: improper learning of decision trees}

There is also vast literature on {\sl improper} learning of decision trees, where the target function is assumed to be a small decision tree but the hypothesis does not have to be one (see e.g.~\citep{Riv87,Blu92,KM93,Han93,Bsh93,BFJKMR94,HJLT96,JS06,OS07,KS06,GKK08,KST09,HKY18,CM19}). Examples of hypotheses that are constructed by existing algorithms include the sign of low-degree polynomials and small-depth boolean circuits. 


\section{Technical Overview}

At a high level, our proof proceeds in two steps:

\begin{itemize}[leftmargin=12pt]
\item[$\circ$] {\bf Step 1: Slight inapproximaibility.} We first give a new proof of~\cite{KST23}’s result. In fact, we prove a statement that is (very)  slightly stronger than  the hardness of strictly-proper learning: we show that it is NP-hard for query learners to construct a decision tree of size $s' = (1+\delta)\cdot s$ for small constant $\delta < 1$.
 While such a slight strengthening is not of much independent interest, it is important for technical reasons because it establishes {\sl some} inapproximability factor, albeit a small one, which we then amplify in the next step. 

\item[$\circ$] {\bf Step 2: Gap amplification.} We give a reduction that for any integer $r$ runs in time $n^{O(r)}$ and amplifies the inapproximability factor of $s'/s = 1+\delta$ from the step above into $(1+\delta)^r$. In particular, for any arbitrarily large constant $C$ this is a reduction that runs in polynomial time and amplifies the inapproximability factor to $C$. 

At the heart of this reduction is a new XOR lemma for decision trees: roughly speaking, this lemma says that if decision trees of size $s'$ incur large error when computing $f$, then decision trees of size $(s')^r$ incur large error when computing the $r$-fold XOR $f^{\oplus r}(x^{(1)},\ldots,x^{(r)}) \coloneqq f(x^{(1)})\oplus \cdots \oplus f(x^{(r)})$. 
\end{itemize}

There is a large body of work on XOR lemmas for decision tree complexity~\citep{IRW94,NRS94,Sav02,Sha04,KSdW07,JKS10,Dru12,BDK18,BB19,BKLS20}, but our setting necessitates extremely sharp parameters that are not known to be achievable by any existing ones. Most relevant to our setting is one by~\cite{Dru12}, but it only reasons about the error of decision trees of size $(s')^{cr}$ for some $c < 1$ instead of $(s')^r$. This constant factor loss is inherent to~\cite{Dru12}'s proof technique, and we explain in~\Cref{remark:sharp} why we cannot afford even a tiny constant factor loss in the exponent.

\renewcommand{\Ind}{\mathrm{Ind}}

\subsection{Step 1: Slight inapproximability} 

Like~\cite{KST23}, we reduce from the NP-complete problem {\sc Vertex Cover}. For every graph $G$ there is an associated {\sl edge indicator function} $\IsEdge$ defined as follows:


\begin{definition}[$\mathrm{\IsEdge}_G$]\label{def: isedge}
Let $G$ be a graph with vertex set $V=\{v_1,\ldots,v_n\}$. We write $\Ind[e] \in \zo^n$ for the encoding of an edge $e\in E$ in $\zo^n$. That is, $\Ind[e]_i=1$ if and only if the vertex $v_i$ is in $e$. The {\sl edge indicator function 
of $G$} is the function $\mathrm{\IsEdge}_G : \zo^n \to \zo$, 
\[ \mathrm{\IsEdge}_G(x) =
\begin{cases} 
1 & x =\Ind[e] \text{ for some $e \in E$} \\
0 & \text{otherwise.} 
\end{cases}
\]
\end{definition}

For technical reasons, we work with a generalization of $\IsEdge$ called $\ell$-$\IsEdge$ where $\ell \in \N$ is a tuneable ``padding parameter" (we defer the definition of $\ell$-$\IsEdge$ to the main body; see \Cref{def:ell-isedge}). We prove that the decision tree complexity of $\ell\text{-}\mathrm{\IsEdge}_G$ scales with the vertex cover complexity of $G$ with fairly tight quantitative parameters: 

\begin{claim}
\label{claim:slight-inapprox}
    Let $G$ be a graph on $n$ vertices and $m$ edges. For all $\ell\ge 1$ and $\eps>0$, the following two cases hold.
    \begin{itemize}
        \item[$\circ$] Yes case: if $G$ has a vertex cover of size $k$, then there is a decision tree $T$ computing $\ell\text{-}\mathrm{\IsEdge}_G$ whose size satisfies
        $$
        |T|\le (\ell+1)(k+m)+mn.
        $$
        \item[$\circ$] No case: there is a distribution $\mathcal{D}$ such that if every vertex cover of $G$ has size at least $k'$, then any decision tree $T$ that is $\eps$-close to $\ell\text{-}\mathrm{\IsEdge}_G$ over $\mathcal{D}$ has size at least
        $$
        |T|\ge (\ell+1)\left(k'+(1-4\eps)m\right).
        $$
    \end{itemize}
\end{claim}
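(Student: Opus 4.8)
My plan is to prove the two cases separately: the Yes case by explicitly compiling a vertex cover into a decision tree, and the No case by exhibiting a distribution $\mathcal D$ and showing that any small tree that is $\eps$-close over $\mathcal D$ can be decoded into a small vertex cover of $G$. I will use the (deferred) structure of $\ell\text{-}\IsEdge_G$ (see \Cref{def:ell-isedge}) only through two features of its block-insertion: the ``core'' of every $1$-input still records an edge $\Ind[e]$ together with its endpoint pair, and the padding inflates each ``probe'' of the natural cover-to-tree construction (``is vertex $v$ an endpoint?'') into a block of $\Theta(\ell)$ queries, while a negative answer on an irrelevant coordinate stays cheap to certify.

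\emph{Yes case.} Order the cover as $C=\{v_1,\dots,v_k\}$ and build $T$ as a backbone with edge gadgets attached. The backbone probes $v_1,v_2,\dots$ in order, descending along the branch in which each is a non-endpoint; if it exhausts $C$ this way it outputs $0$, which is correct since $C$ covers every edge. The first time a probe reports that some $v_j$ is an endpoint, control passes to a gadget $G_j$ which (i) resolves the $\Theta(\ell)$ companion coordinates of $v_j$ (rejecting if they are inconsistent), (ii) scans the remaining coordinates for the unique ``on'' one, rejecting if it is a non-neighbor of $v_j$ and otherwise resolving that neighbor $w$ together with its $\Theta(\ell)$ companions, and (iii) verifies the rest of the coordinates are off and outputs $1$ if so. Steps (i) and the companion-resolutions in (ii) cost $\Theta(\ell)$ per vertex resolved; $k$ vertices are resolved on the backbone and $\sum_{v\in C}\deg(v)=O(m)$ across all scans, contributing $O(\ell(k+m))$; the on/off tests contribute $O(k+m)$ single probes, and step (iii) contributes $O(mn)$. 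Choosing the scan orders so that the step-(iii) paths are shared among the neighbors of a fixed $v_j$ and tallying carefully yields $|T|\le(\ell+1)(k+m)+mn$.

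\emph{No case.} Let $\mathcal D$ be built from the edges: for each $e=\{u,v\}$ put mass on the $1$-input $p_e$ (the core $\Ind[e]$ with its companions set) and on the constantly-many negative inputs obtained from $p_e$ by switching off $x_u$ or $x_v$, normalized so that each bundle $\{p_e\}\cup N_e$ carries mass $\tfrac1m$ with $p_e$ carrying a constant fraction of it. Suppose $T$ is $\eps$-close to $\ell\text{-}\IsEdge_G$ over $\mathcal D$. Call $e$ \emph{good} if $T$ is correct on all of $\{p_e\}\cup N_e$; since a bad edge forces an error on one of a constant number of mass-$\Theta(1/m)$ inputs, at most $4\eps m$ edges are bad, so $\ge(1-4\eps)m$ are good, and for a good edge the path of $T$ on $p_e$ must probe both its endpoints. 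Now recurse on $(T,G)$: if the root of $T$ is a probe touching some vertex coordinate $x_v$, add $v$ to a set $S$, route the ``endpoint'' side to the edge gadgets of the good edges at $v$, and recurse on the other side with $v$ deleted from $G$; if the root probes only a companion coordinate, recurse on whichever side still contains good edges. When no edges remain, $S$ is a vertex cover of all of $G$, hence $|S|\ge k'$; moreover the padding makes each vertex in $S$ carry a private block of $\Theta(\ell)$ probes and each good edge a further private block of $\Theta(\ell)$ probes, and a careful accounting shows these blocks are disjoint, so $|T|\ge(\ell+1)\bigl(k'+(1-4\eps)m\bigr)$.

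\emph{Main obstacle.} The crux is the No-case accounting: making the recursive charging assign genuinely disjoint portions of $T$ to the different vertices and good edges (so that the $\Theta(\ell)$ blocks of several good edges sharing an endpoint do not overlap on a shared prefix), and making sure the recursion deletes enough vertices that $S$ covers all of $G$ rather than only the good subgraph. I expect the clean way to handle both is a single induction on $(T,G)$ governed by a potential that credits $(\ell+1)$ for each still-active vertex and each still-unresolved edge and carries the bad edges as an additive $(\ell+1)\cdot 4\eps m$ slack; verifying that this potential is subadditive across the two children of every kind of root probe --- in particular, probes of the padding coordinates --- is the real work, and is where the constant ``$4$'' in ``$4\eps m$'' is pinned down.
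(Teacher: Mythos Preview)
Your Yes-case sketch is in the same spirit as the paper's explicit construction and is fine modulo the bookkeeping you defer.

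Your No-case plan is genuinely different from the paper's, and as written it has a real gap. The paper never extracts a vertex cover from the approximate tree $T$ directly. Instead it proves two general lemmas: a \emph{patch-up lemma} (\Cref{lem:patchup-intro}) bounding $\dtsize(f)\le |T|+\sum_{x\in f^{-1}(1)}\Cert(f_{\pi(x)},x)$, and a \emph{hard distribution lemma} (\Cref{lem:hard-distribution-intro}) showing that under the canonical distribution --- a uniform $1$-input, then with probability $\tfrac12$ a uniformly chosen one of its $2(\ell+1)$ sensitive neighbors --- one has $\error_{\mathcal D}(T,f)\ge\frac{1}{2|C|\Sens(f)}\sum_x|\Sens(f_{\pi(x)},x)|$. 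For $\ell\text{-}\IsEdge$ the two sums coincide (sensitivity equals certificate complexity on $1$-inputs), so combining gives $|T|\ge\dtsize(\ell\text{-}\IsEdge)-4\eps m(\ell+1)$, and the only vertex-cover extraction happens inside the \emph{zero-error} bound $\dtsize(\ell\text{-}\IsEdge)\ge(\ell+1)(k'+m)$, quoted from~\cite{KST23}. This modularity is the whole point of the paper's new proof: it is what lets the argument XOR-ify in Step~2.

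The concrete gap in your plan is the distribution. You place mass only on $p_e$ and the ``constantly-many'' flips $p_e^{\oplus u},p_e^{\oplus v}$ of the two endpoint coordinates. But then the tree that computes the \emph{unpadded} $\IsEdge_G$ on the $n$ core coordinates and ignores all $\ell n$ padding coordinates is perfectly correct on every point in your support, and its size is independent of $\ell$. Nothing in your setup forces $T$ to touch a single padding bit, so the ``private block of $\Theta(\ell)$ probes'' you later charge to each vertex and each good edge need not exist, and the claimed $(\ell+1)$ factor evaporates. The paper's distribution includes \emph{all} $2(\ell+1)$ sensitive neighbors of each $p_e$; that is exactly what makes $\Sens(f)=2(\ell+1)$ enter the error bound and produces the $(\ell+1)$ in the final inequality. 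Separately, the obstacle you flag --- that $S$ only covers the good subgraph --- is also real: repairing it by throwing in bad-edge endpoints costs an additive $\Theta(\eps m)$ in the $k'$ term rather than the $m$ term, and handling it properly pushes you toward \textsc{Partial Vertex Cover}, which is precisely the complication the paper's patch-up route was designed to sidestep.
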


It is known that there is a constant $\delta >0$ such that deciding whether a graph has a vertex cover of size $\le k$ or requires vertex cover size $\ge (1+\delta)k$ is NP-hard \citep{PY91,Has01,DS05}. With an appropriate choice of parameters,~\Cref{claim:slight-inapprox} translates this into a gap of $\le s$ versus $\ge (1+\delta')s$ for some other constant $\delta'>0$ in the decision tree complexity of $\ell\text{-}\mathrm{\IsEdge}$. The NP-query hardness of learning size-$s$ decision trees with hypotheses of size $(1+\delta')s$ follows as a corollary.



\paragraph{Key ingredients in the proof of~\Cref{claim:slight-inapprox}: Patch up and hard distribution lemmas.} As is often the case in reductions such as \Cref{claim:slight-inapprox}, the upper bound in the Yes case is straightforward to establish and most of the work goes into proving the lower bound in the No case. \cite{KST23}’s analysis of their No case is rather specific to the $\ell\text{-}\IsEdge$ function, whereas we develop a new technique for proving such lower bounds on decision tree complexity. In addition to being more general and potentially useful in other settings, our technique lends itself to an ``XOR-ed generalization" which we will need for gap amplification. (\cite{KST23}'s technique does not appear to be amenable to such a generalization, despite our best efforts at obtaining it.\footnote{On a more technical level,~\cite{KST23}'s technique requires them to reason about the complexity of {\sc Partial Vertex Cover}, a generalization of {\sc Vertex Cover}, whereas our simpler approach bypasses the need for this.})  

There are two components to our technique, both of which are generic statements concerning a decision tree $T$ that imperfectly computes a function $f$. The first is a {\sl patch up lemma} that shows how $T$ can be patched up so that it computes $f$ perfectly. The cost of this patch up operation, i.e.~how much larger $T$ becomes, is upper bounded by the certificate complexity of $f$, a basic and well-studied complexity measure of functions. (We defer the formal definitions of the technical terms and notation used in these lemmas to~\Cref{sec:prelims}.)     

\begin{lemma}[Patch up lemma]
    \label[lemma]{lem:patchup-intro}
    Let $f:\zo^n\to\zo$ be a function and let $T$ be a decision tree. Then
    $$
    \dtsize(f)\le |T|+\sum_{x\in f^{-1}(1)}\Cert(f_{\pi(x)},x)
    $$
    where $\pi(x)$ denote the path followed by $x$ in $T$ and $f_{\pi(x)}$ is the restriction of $f$ by $\pi(x)$. 
\end{lemma}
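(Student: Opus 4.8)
The plan is to first isolate the combinatorial core, which is the special case where $T$ is a single leaf, and then prove the general statement by a grafting argument. Concretely, it suffices to establish that for every Boolean function $g$ (on any finite set of variables),
\[
  \dtsize(g)\;\le\; 1+\sum_{y\in g^{-1}(1)}\Cert(g,y)\qquad(\star).
\]
Granting $(\star)$, the lemma follows: the leaves of $T$ induce a partition of $\zo^n$ into subcubes, one per root‑to‑leaf path $\pi$, and on the subcube of $\pi$ the function $f$ agrees with its restriction $f_\pi$. Replace each leaf of $T$ on path $\pi$ by an optimal decision tree for $f_\pi$; by $(\star)$ this tree has at most $1+\sum_{x:\pi(x)=\pi,\,f(x)=1}\Cert(f_{\pi(x)},x)$ leaves (using that the $1$‑inputs of $f_\pi$ biject with $\{x\in f^{-1}(1):\pi(x)=\pi\}$, and that $\Cert$ is preserved under this identification). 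The resulting tree computes $f$; summing the per‑leaf bounds over all $|T|$ leaves gives total size at most $|T|+\sum_{x\in f^{-1}(1)}\Cert(f_{\pi(x)},x)$, which upper bounds $\dtsize(f)$.

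\textbf{Proof of $(\star)$.}
I would induct on the number of variables of $g$. If $g$ is constant then $\dtsize(g)=1$ and the right‑hand side equals $1$ (if $g\equiv 0$ the sum is empty; if $g\equiv 1$ the empty set certifies every input, so the sum is $0$). Otherwise $g$ is non‑constant; pick any $y^\star\in g^{-1}(1)$ and a \emph{minimum‑size} certificate $C$ for $g$ at $y^\star$, so that $c:=|C|=\Cert(g,y^\star)\ge 1$. Fix a coordinate $i\in C$ and, without loss of generality, $y^\star_i=1$. Splitting on $x_i$ gives $\dtsize(g)\le \dtsize(g|_{x_i=0})+\dtsize(g|_{x_i=1})$, and each restriction has one fewer variable, so the inductive hypothesis applies to both. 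Adding the two bounds and then comparing to $(\star)$, everything reduces to two elementary facts about certificates: (i) restricting a coordinate never increases certificate complexity, i.e.\ $\Cert(g|_{x_i=b},y)\le\Cert(g,y)$ whenever $y_i=b$ (delete $i$ from a minimal certificate of $y$); and (ii) $\Cert(g|_{x_i=1},y^\star)\le c-1$, since $C\setminus\{i\}$ certifies $g|_{x_i=1}$ at $y^\star$ and $i\in C$. Fact (i) lets us replace every summand $\Cert(g|_{x_i=b},y)$ by $\Cert(g,y)$ at no cost, while fact (ii) shaves one off the term for $y^\star$, which exactly cancels the surplus $+1$ coming from the coordinate split ($2$ versus the desired $1$). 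The result is $\dtsize(g)\le 1+\sum_{y\in g^{-1}(1)}\Cert(g,y)$, closing the induction.

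\textbf{Where the difficulty lies.}
The proof is short, so the only place needing care is the bookkeeping in the inductive step: one must identify $y^\star$ as the \emph{single} $1$‑input whose certificate pays for the loss incurred by the split, and this is precisely why $i$ must be chosen from a minimum certificate of some $1$‑input rather than, say, an arbitrary relevant variable of $g$. Facts (i) and (ii) and the grafting step are then routine consequences of the definitions.
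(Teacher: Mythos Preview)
Your proof is correct. The outer ``grafting'' reduction to the single-leaf case $(\star)$ is exactly what the paper does: replace each leaf of $T$ on path $\pi$ by an optimal tree for $f_\pi$, and sum the per-leaf bound $\dtsize(f_\pi)\le 1+\sum_{x:f_\pi(x)=1}\Cert(f_\pi,x)$ over all $|T|$ leaves.

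Where you diverge is in the proof of $(\star)$ itself. The paper proves it by an explicit iterative construction: process the $1$-inputs one at a time, and for each $x\in g^{-1}(1)$ extend the current tree at the leaf reached by $x$ by querying all (not-yet-queried) coordinates of a minimum certificate for $x$. Each such step adds at most $\Cert(g,x)$ internal nodes, so the final tree has at most $\sum_{x}\Cert(g,x)$ internal nodes and hence at most $1+\sum_{x}\Cert(g,x)$ leaves; correctness follows because every $1$-input has its certificate fully queried along its path. You instead induct on the number of variables, splitting on a single coordinate chosen from a minimum certificate of some $1$-input $y^\star$, and use the drop $\Cert(g|_{x_i=1},y^\star)\le \Cert(g,y^\star)-1$ to absorb the extra $+1$ from the split. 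Both arguments are short and elementary; the paper's is a touch more direct (it builds the witnessing tree in one pass and the size bound is just an internal-node count), whereas yours has the virtue that the same induction-on-variables template is what one naturally reaches for when proving the matching lower bound $\dtsize(f^{(1)}\oplus\cdots\oplus f^{(r)})\ge\prod_i\dtsize(f^{(i)})$ used later in the paper.
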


The second component is a {\sl hard distribution lemma} that shows how a hard distribution $\mathcal{D}$ can be designed so that the error of $T$ with respect to $f$ under $\mathcal{D}$ is large. Roughly speaking, the more weight that $\mathcal{D}$ places on ``highly sensitive" points, the larger the error is:

\begin{lemma}[Hard distribution lemma]
\label[lemma]{lem:hard-distribution-intro}
Let $f:\zo^n\to\zo$ be a nonconstant function. Then for all nonempty $C \sse f^{-1}(1)$, there is a distribution over $C$ and all of its sensitive neighbors such that for any decision tree $T$, we have
$$
\error_{\mathcal{D}}(T,f)\ge \frac{1}{2|C|\Sens(f)}\sum_{x\in S}|\Sens(f_{\pi(x)},x)|
$$
where $\pi(x)$ is the path followed by $x$ in $T$ and $f_{\pi(x)}$ is the restriction of $f$ by $\pi(x)$. 
\end{lemma}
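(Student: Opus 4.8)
The plan is to exhibit the hard distribution $\mathcal{D}$ \emph{before} (and independently of) $T$ — this is forced by the quantifier order in the statement — and then argue that every $T$ is compelled to err on a large fraction of the support of $\mathcal{D}$. Concretely, I would let $\mathcal{D}$ be the balanced mixture of two distributions: (i) the uniform distribution over $C$; and (ii) the distribution obtained by drawing $\bx$ uniformly from $C$, then drawing a coordinate $\bi$ uniformly from $\Sens(f,\bx)$ (the $f$-sensitive coordinates at $\bx$; if this set is empty — which can happen even for nonconstant $f$ — just re-output $\bx$, which is harmless since such $x$ contribute $0$ to the right-hand side), and outputting $\bx^{\oplus\bi}$. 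By construction the support of $\mathcal{D}$ is exactly $C$ together with all of its sensitive neighbors, as required, and each genuine sensitive neighbor $x^{\oplus i}$ of a fixed $x\in C$ receives mass at least $\frac{1}{2|C|}\cdot\frac{1}{|\Sens(f,x)|}\ge \frac{1}{2|C|\Sens(f)}$ from part (ii).

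Next I would fix an arbitrary decision tree $T$ and, for each $x\in C$, write $b_x\in\zo$ for the label of the leaf of $T$ reached by $x$; the key structural facts are that $T$ outputs $b_x$ on the \emph{entire} subcube of inputs consistent with the path $\pi(x)$, and that $f_{\pi(x)}$ is $f$ restricted to exactly that subcube. Partition $C = A\sqcup B$ according to whether $b_x=0$ or $b_x=1$. For $x\in A$ we have $T(x)=0\ne 1=f(x)$, so $T$ errs on $x$ itself. For $x\in B$, I claim $T$ errs on every point $x^{\oplus i}$ with $i\in\Sens(f_{\pi(x)},x)$: such an $i$ is by definition not queried along $\pi(x)$, so $x^{\oplus i}$ lies in the same subcube and $T(x^{\oplus i})=b_x=1$, while $f(x^{\oplus i})=0$ since $i$ is $f$-sensitive at $x$ and $f(x)=1$. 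The bridge between the $T$-free distribution (built from $\Sens(f,x)$) and the $T$-dependent conclusion (phrased via $\Sens(f_{\pi(x)},x)$) is the containment $\Sens(f_{\pi(x)},x)\subseteq\Sens(f,x)$, which guarantees these error points are genuine sensitive neighbors lying in the support of part (ii).

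I would then lower-bound $\error_{\mathcal{D}}(T,f)=\Pr_{\bz\sim\mathcal{D}}[T(\bz)\ne f(\bz)]$ by the probability of the two events ``part (i) is chosen and $\bx\in A$'' and ``part (ii) is chosen and $\bx\in B$ and $\bi\in\Sens(f_{\pi(\bx)},\bx)$.'' These events are disjoint (the first outputs a point of $f^{-1}(1)$, the second a point of $f^{-1}(0)$) and on each of them $T$ provably errs; crucially, working with event probabilities rather than summing pointwise mass automatically avoids any double counting that would otherwise arise when a single point is a sensitive neighbor of several elements of $C$. The first event has probability $\frac{|A|}{2|C|}\ge\frac{1}{2|C|\Sens(f)}\sum_{x\in A}|\Sens(f_{\pi(x)},x)|$ using the trivial bound $|\Sens(f_{\pi(x)},x)|\le\Sens(f)$; the second has probability $\frac{1}{2|C|}\sum_{x\in B}\frac{|\Sens(f_{\pi(x)},x)|}{|\Sens(f,x)|}\ge\frac{1}{2|C|\Sens(f)}\sum_{x\in B}|\Sens(f_{\pi(x)},x)|$. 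Adding the two and combining $A\sqcup B=C$ yields exactly $\frac{1}{2|C|\Sens(f)}\sum_{x\in C}|\Sens(f_{\pi(x)},x)|$, so the set ``$S$'' appearing in the statement is $C$.

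I expect there to be no genuine obstacle, only one subtlety to handle carefully: reconciling the $T$-independence of $\mathcal{D}$ with a conclusion that speaks about the $T$-dependent restrictions $f_{\pi(x)}$. This is resolved entirely by the observation that restricting a function can only \emph{kill} sensitive coordinates, so every restricted-sensitive neighbor of $x$ is still in the original subcube of $\pi(x)$ and is still an $f$-sensitive neighbor of $x$. The remaining points — treating the degenerate zero-sensitivity inputs, and phrasing the error lower bound in terms of probabilities of disjoint events — are routine bookkeeping.
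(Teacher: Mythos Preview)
Your proposal is correct and is essentially the paper's own argument: the distribution you construct is exactly the paper's ``canonical hard distribution,'' and your case split on the leaf label $b_x$ is precisely the $\min$ the paper takes in its conditional-error lemma (conditioned on first sampling $x$). The only cosmetic difference is that you globally partition $C=A\sqcup B$ by leaf label before summing, whereas the paper conditions on each sampled $x$ and bounds the conditional error by $\tfrac{1}{2}\cdot|\Sens(f_{\pi(x)},x)|/\max\{1,|\Sens(f,x)|\}$ before summing; both routes use the same containment $\Sens(f_{\pi(x)},x)\subseteq\Sens(f,x)$ and the same final inequality $\max\{1,|\Sens(f,x)|\}\le\Sens(f)$.
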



The No case of~\Cref{claim:slight-inapprox} follows by applying~\Cref{lem:patchup-intro,lem:hard-distribution-intro} to the $\ell$-$\IsEdge$ function and reasoning about its certificate complexity and sensitivity. 

\subsection{Step 2: Gap amplification} As alluded to above, a key advantage of our approach is that the patch up and hard distribution lemmas lend themselves to “XOR-ed generalizations”:

\begin{lemma}[XOR-ed version of Patch Up Lemma, see \Cref{lem:patchup-xor} for the exact version] 
\label[lemma]{lem:patchup-xor-intro}
    Let $f:\zo^n\to\zo$ be a function and let $T$ be a decision tree. Then for all $r\ge 1$,
    $$
    \dtsize(f^{\oplus r})\le |T|+2^r\sum_{x\in f^{-1}(1)^r}\prod_{i=1}^r\max\{1,\Cert(f_{\pi(x)},x^{(i)})\}
    $$
    where $\pi(x)$ is the path followed by $x$ in $T$ and $f_{\pi(x)}$ is the restriction of $f$ by $\pi(x)$.
\end{lemma}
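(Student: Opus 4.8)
The plan is to bootstrap the non-XOR Patch Up Lemma (\Cref{lem:patchup-intro}) into its $r$-fold version by an inductive/recursive construction. The key structural observation is that a decision tree $T$ on the ground set $\zo^n$ (the first block of variables, say) can be viewed as computing a ``coarse'' guess for $f(x^{(1)})$, and once we have walked a point $x^{(1)}$ down to a leaf $\ell$ reached via path $\pi$, the residual problem is to compute $f_\pi(x^{(1)}) \oplus f^{\oplus (r-1)}(x^{(2)},\dots,x^{(r)})$ — and $f_\pi$ is a restriction of $f$ for which we know how to patch things up cheaply.

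Concretely, first I would build a decision tree for $f^{\oplus r}$ as follows. Start with $T$ operating on the first block $x^{(1)}$. At each leaf $\ell$ of $T$, with $\pi = \pi(\ell)$ the path and $b = T(x^{(1)})$ the label, I need to compute $f(x^{(1)}) \oplus f^{\oplus(r-1)}(x^{(2)},\dots,x^{(r)})$. I would branch on whether $x^{(1)}$ is a $1$-input of $f_\pi$ by reading off a certificate: for each $x^{(1)} \in f_\pi^{-1}(1)$ consistent with the leaf, there is a certificate of size $\Cert(f_\pi, x^{(1)})$, and by splaying out over all such $1$-inputs I can, at cost $\sum_{x^{(1)} \in f_\pi^{-1}(1)} \Cert(f_\pi, x^{(1)})$ extra nodes (this is exactly the accounting in \Cref{lem:patchup-intro}), reach sub-leaves each of which knows the true value $f(x^{(1)})$. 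At each such sub-leaf I recursively graft in a decision tree for either $f^{\oplus(r-1)}(x^{(2)},\dots,x^{(r)})$ or its negation, depending on $b \oplus f(x^{(1)})$. The negation costs nothing (swap leaf labels). This yields the recursion
\[
\dtsize(f^{\oplus r}) \le |T| + \sum_{x^{(1)} \in f^{-1}(1)} \Big(1 + \max\{1,\Cert(f_{\pi(x^{(1)})}, x^{(1)})\}\Big)\cdot \dtsize\big((f_{\pi(x^{(1)})})^{\oplus (r-1)}\big),
\]
or something of this shape — the exact bookkeeping of the ``$+1$''s and the $\max\{1,\cdot\}$ (to handle $1$-inputs whose certificate is trivially small, and leaves with no consistent $1$-input) is what produces the $2^r$ factor and the product of $\max\{1,\Cert\}$ terms. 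Unrolling the recursion across all $r$ blocks, the leaf set of $T$ (shared across all blocks, or re-instantiated at each level) multiplies out to $f^{-1}(1)^r$, and the certificate costs multiply to $\prod_{i=1}^r \max\{1,\Cert(f_{\pi(x)}, x^{(i)})\}$, while the factor-of-two slack at each of the $r$ levels compounds to $2^r$.

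The main obstacle I anticipate is getting the recursion's constants to telescope correctly into a clean $2^r$ rather than, say, $3^r$ or $r!$, and in particular making sure the ``$|T|$'' term appears only once (additively) rather than getting multiplied through the recursion. The fix is to be careful about what the recursion is really on: the top copy of $T$ contributes $|T|$ once, and everything below the leaves of that top $T$ is charged entirely to the certificate-splaying and the recursively-constructed sub-trees, whose sizes are bounded purely in terms of restrictions of $f$ — so the $|T|$ does not recur. A secondary subtlety is that after restricting by $\pi(x^{(1)})$, subsequent blocks $x^{(2)},\dots$ are untouched by $\pi$, so $f_{\pi(x^{(1)})}$ really is a function of the appropriate variables and the XOR structure is preserved; this is what lets the product over $i$ range with the same path $\pi(x)$ (the concatenation of the per-block paths) appearing in each factor. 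I would also double-check the base case $r=1$ recovers \Cref{lem:patchup-intro} up to the harmless $2\,\max\{1,\Cert\} \ge \Cert$ slack, which confirms the bound is stated with the right form.
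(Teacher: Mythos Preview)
There is a genuine gap: you have misread the role of $T$. In the lemma, $T$ is an arbitrary decision tree on all $rn$ variables (it is the learner's hypothesis for $f^{\oplus r}$), and $\pi(x)$ is the path that the full input $x=(x^{(1)},\ldots,x^{(r)})\in(\zo^n)^r$ follows in $T$; this path may query variables from every block, and the expression $\Cert(f_{\pi(x)},x^{(i)})$ really means $\Cert(f_{\pi(x)^{(i)}},x^{(i)})$ where $\pi(x)^{(i)}$ is the block-$i$ part of $\pi(x)$. Your construction instead treats $T$ as a tree on the first block only and unrolls a block-by-block recursion. That cannot handle a $T$ that interleaves queries from different blocks, which is exactly what the lemma must allow --- and what the application in \Cref{thm:ellisedge-lb-xor} needs, since there $T$ is the arbitrary hypothesis tree for $\ell\text{-}\IsEdge^{\oplus r}$.

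The paper's proof avoids induction on $r$ altogether. At every leaf $\pi$ of the given $T$ it replaces the label by an optimal tree for the residual function $f_\pi^{\oplus r}=f_{\pi^{(1)}}\oplus\cdots\oplus f_{\pi^{(r)}}$, giving $\dtsize(f^{\oplus r})\le \sum_{\pi}\dtsize(f_\pi^{\oplus r})$. The step your plan is missing is Savick\'y's product formula $\dtsize(g_1\oplus\cdots\oplus g_r)=\prod_i \dtsize(g_i)$ (\Cref{thm:zero-error-xor}), applied at each leaf to factor the XOR; only then does the single-function bound $\dtsize(g)\le 1+\sum_{y\in g^{-1}(1)}\Cert(g,y)$ (\Cref{claim:zero-error-patch-up}) get applied to each factor. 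The $2^r$ comes from $1+a\le 2\max\{1,a\}$ once per block, and the additive $|T|$ appears exactly once because leaves where $f_\pi^{\oplus r}$ is already constant contribute $1$ each.
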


\begin{lemma}[XOR-ed version of Hard Distribution Lemma, see \Cref{lem:hard-distribution-lemma-xor} for the exact version]
\label[lemma]{lem:hard-distribution-xor-intro}
    \violet{Let $f:\zo^n\to\zo$ be a nonconstant function, $C\sse f^{-1}(1)$ be nonempty, and $T$ be a decision tree.} There is a distribution $\mathcal{D}$ over the inputs in $C$ and their sensitive neighbors such that for all $r\ge 1$, 
    $$
    \error_{\mathcal{D}^{\otimes r}}(T,f^{\oplus r})\ge \left(\frac{1}{2|C|\Sens(f)}\right)^r\sum_{x\in C^r}\prod_{i=1}^r\max\{1,|\Sens(f_{\pi(x)},x^{(i)})|\}
    $$
    where $\pi(x)$ is the path followed by $x$ in $T$ and $f_{\pi(x)}$ is the restriction of $f$ by $\pi(x)$.
\end{lemma}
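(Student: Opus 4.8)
The plan is to build the hard distribution for $f^{\oplus r}$ as the $r$-fold product of a single-coordinate hard distribution, and to extract error from $T$ by a blockwise sensitivity-plus-parity argument. Concretely, take $\mathcal{D}$ on $\zo^n$ to be the distribution that samples a uniform $x\in C$ and then outputs $x$ with probability $\tfrac12$ and a uniformly random sensitive neighbor of $x$ (i.e.\ $x^{\oplus j}$ for $j$ uniform among the coordinates sensitive for $f$ at $x$) with probability $\tfrac12$; this is supported on $C$ and its sensitive neighbors, as required, and the hard distribution for $f^{\oplus r}$ is $\mathcal{D}^{\otimes r}$. For each $x=(x^{(1)},\dots,x^{(r)})\in C^r$ define the \emph{perturbation cluster} $P_x\sse\zo^{rn}$ of all $y$ obtained from $x$ by, in each block $i$, either leaving $x^{(i)}$ unchanged or flipping one coordinate of $x^{(i)}$ lying in $\Sens(f_{\pi(x)},x^{(i)})$ (a coordinate that is unqueried along $\pi(x)$ and sensitive for $f$ at $x^{(i)}$); here $d_i\coloneqq|\Sens(f_{\pi(x)},x^{(i)})|$, so $|P_x|=\prod_i(1+d_i)$.

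Two structural facts drive the argument. First, every $y\in P_x$ agrees with $x$ on all variables queried along $\pi(x)$, since the flipped coordinates are unqueried there; hence $y$ reaches the same leaf of $T$ as $x$, so $T$ is constant on $P_x$. Second, flipping a sensitive coordinate in block $i$ toggles $f(x^{(i)})$ from $1$ to $0$, so $f^{\oplus r}(y)=f^{\oplus r}(x)\oplus|S_y|\pmod 2$, where $S_y\sse[r]$ is the set of blocks in which $y$ differs from $x$. Therefore the points of $P_x$ that $T$ misclassifies are exactly those whose $|S_y|$ has one fixed parity $b_x$ (determined by whether $T$ is right or wrong on $x$) --- a ``half'' of $P_x$. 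One then lower bounds the $\mathcal{D}^{\otimes r}$-mass of this misclassified sub-cluster: for $y\in P_x$ with flip set $S$, $\Pr_{\mathcal{D}^{\otimes r}}[y]\ge(\tfrac1{2|C|})^{r-|S|}\prod_{i\in S}\tfrac{1}{2|C|\,|\Sens(f,x^{(i)})|}$, and since $|\Sens(f_{\pi(x)},x^{(i)})|\le|\Sens(f,x^{(i)})|\le\Sens(f)$, summing over the bad-parity class reduces to the signed-counting estimate $\tfrac12\big(\prod_i(1+c_i)\pm\prod_i(1-c_i)\big)\ge\prod_i c_i$ with $c_i=d_i/|\Sens(f,x^{(i)})|\in[0,1]$. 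This delivers exactly the per-$x$ contribution $\big(\tfrac{1}{2|C|\Sens(f)}\big)^r\prod_i\max\{1,d_i\}$.

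The remaining step is to sum these per-cluster bounds over all $x\in C^r$ without overcounting the true error, which is $\Pr_{\mathcal{D}^{\otimes r}}$-mass of the set of misclassified $y$. A single misclassified $y$ can lie in many clusters $P_x$, but the choice of $\mathcal{D}$ (a \emph{uniformly random} sensitive neighbor) makes the bookkeeping exact: expanding $\Pr_{\mathcal{D}^{\otimes r}}[y]=\prod_i\Pr_{\mathcal{D}}[y^{(i)}]$ into a sum, each summand corresponds bijectively to one way of exhibiting $y$ as a perturbation of some $x\in C^r$ with $y\in P_x$, and the per-$(x,y)$ mass used above is (up to the $\tfrac{1}{|\Sens(f,x^{(i)})|}\ge\tfrac{1}{\Sens(f)}$ relaxation) precisely that summand. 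Hence $\sum_{x\in C^r}$ of the per-cluster lower bounds never exceeds the true error, and the claimed inequality follows; the tail over misclassified $y$ not of the form considered is simply discarded.

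I expect the main obstacle to be the quantitative sharpness of the signed-counting estimate. Restricting to one parity class discards ``half'' of each cluster, and the crux is that this costs only a fixed constant --- already absorbed into the $2^{(\cdot)}$ and $\Sens(f)$ factors --- and \emph{never} a factor that degrades with $r$; this is exactly the sharpness needed for this lemma to power the gap amplification of \Cref{thm:main-general}, where (as in \Cref{remark:sharp}) even a $(1-o(1))$ loss in the exponent would be fatal. A secondary subtlety is the degenerate case of an $x$ with all $d_i=0$ together with $b_x=1$, where $P_x$ collapses to $\{x\}$ and carries no error; the precise statement (\Cref{lem:hard-distribution-lemma-xor}) accounts for this, and it is harmless in the application since the relevant points of $C$ are chosen to be sensitive.
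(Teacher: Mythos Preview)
Your overall strategy is sound and will prove the lemma, but you take a more elaborate route than the paper. Both arguments rest on the same conditioning identity: $\error_{\mathcal{D}^{\otimes r}}(T,f^{\oplus r})=\sum_{x\in C^r}\frac{1}{|C|^r}\Pr[\text{error}\mid\text{first sampling }x]$; this is precisely why there is no overcounting, and your third paragraph is a roundabout restatement of it (each ``summand'' in your expansion of $\Pr_{\mathcal{D}^{\otimes r}}[y]$ is exactly the probability of first sampling a particular $x$ and then outputting $y$, and these events are disjoint across $x$). Where you diverge is in bounding the conditional error. You analyze the full perturbation cluster $P_x$, invoke a parity split, and appeal to the signed-counting inequality $\tfrac12\big(\prod(1+c_i)\pm\prod(1-c_i)\big)\ge\prod c_i$. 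The paper instead uses a bare \emph{two-point} argument: pick one block $j$ with $d_j>0$, note that $x$ and any single-flip perturbation in block $j$ reach the same leaf of $T$ but have opposite $f^{\oplus r}$-values, hence the conditional error is at least $\min\{2^{-r},\,2^{-(r-1)}d_j/|\Sens(f,x^{(j)})|\}\ge 2^{-r}c_j$; then simply \emph{multiply} by the remaining ratios $\max\{1,d_i\}/\max\{1,|\Sens(f,x^{(i)})|\}\le 1$ for $i\ne j$ to manufacture the product form. No cluster, no signed counting, and the $\max\{1,d_i\}$ factors come for free.

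That last point matters because your signed-counting bound $\ge\prod_i c_i$ does not actually deliver the stated per-$x$ contribution when \emph{some but not all} $d_i$ vanish: then $\prod c_i=0$ while $\prod\max\{1,d_i\}>0$. You only flag the all-zero case. The repair is easy---with $I=\{i:d_i>0\}$, take the single term $S=I$ or $S=I\setminus\{j\}$ of the correct parity and use $\prod_{i\in S}c_i\ge\prod_{i\in I}(d_i/\Sens(f))\ge\prod_i\max\{1,d_i\}/\Sens(f)^r$---but the paper's ``multiply down by ratios $\le 1$'' trick sidesteps the issue altogether and is worth internalizing.
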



Just as how~\Cref{lem:patchup-intro,lem:hard-distribution-intro} combine to yield~\Cref{claim:slight-inapprox}, combining their XOR-ed generalizations~\Cref{lem:patchup-intro,lem:hard-distribution-lemma-xor} yields the following amplified version of of~\Cref{claim:slight-inapprox}:

\violet{

\begin{claim}
\label{claim:strong-inapproximability}
    Let $G$ be a graph on $n$ vertices and $m$ edges. For all $\ell,r\ge 1$ and $\eps>0$, the following two cases hold.
    \begin{itemize}
        \item Yes case: if $G$ has a vertex cover of size $k$, then there is a decision tree $T$ computing $\ell\text{-}\mathrm{\IsEdge}_G^{\oplus r}$ whose size satisfies
        $$
        |T|\le \big[(\ell+1)(k+m)+mn\big]^r.
        $$
        \item No case: there is a distribution $\mathcal{D}$ such that if every vertex cover of $G$ has size at least $k'$, then any decision tree $T$ that is $\eps$-close to $\ell\text{-}\mathrm{\IsEdge}_G^{\oplus r}$ over $\mathcal{D}$ has size at least
        $$
        |T|\ge \big[(\ell+1)(k'+m)\big]^r-\eps \big[8m(\ell+1)\big]^r.
        $$
    \end{itemize}
\end{claim}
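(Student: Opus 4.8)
}
The proof runs parallel to that of \Cref{claim:slight-inapprox} (which it recovers, up to the value of a universal constant, when $r=1$), with \Cref{lem:patchup-intro,lem:hard-distribution-intro} replaced by their XOR-ed strengthenings. For the Yes case I start from the decision tree $T_0$ computing $\ell\text{-}\IsEdge_G$ of size $S_0 := (\ell+1)(k+m)+mn$ produced by the Yes case of \Cref{claim:slight-inapprox}, and build a tree for $\ell\text{-}\IsEdge_G^{\oplus r}$ by stacking: take a copy of $T_0$ on the first block of variables, graft a fresh copy of $T_0$ on the second block at each of its leaves, continue through all $r$ blocks, and XOR the $r$ leaf labels at the bottom. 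Grafting multiplies tree sizes, so the result has size at most $S_0^{\,r}=\big[(\ell+1)(k+m)+mn\big]^r$. (This is just the fact that a decision tree for $g_1\oplus g_2$ on disjoint variable sets can be built of size $|T_{g_1}|\cdot|T_{g_2}|$, iterated $r$ times.)

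For the No case, write $f:=\ell\text{-}\IsEdge_G$, let $C\sse f^{-1}(1)$ be the set of $1$-inputs singled out in the analysis of \Cref{claim:slight-inapprox}, and let $\mathcal{D}$ (supported on $C$ and its sensitive neighbors) and $\mathcal{D}^{\otimes r}$ be as supplied by the XOR-ed hard distribution lemma (\Cref{lem:hard-distribution-lemma-xor}). Suppose $T$ is $\eps$-close to $f^{\oplus r}$ under $\mathcal{D}^{\otimes r}$. Applying \Cref{lem:hard-distribution-lemma-xor} and rearranging gives
\[
\sum_{x\in C^r}\ \prod_{i=1}^r \max\{1,|\Sens(f_{\pi(x)},x^{(i)})|\}\ \le\ \eps\,\big(2|C|\,\Sens(f)\big)^r ,
\]
while applying the XOR-ed patch up lemma (\Cref{lem:patchup-xor}) to $T$ gives
\[
\dtsize(f^{\oplus r})\ \le\ |T| + 2^r\!\!\sum_{x\in f^{-1}(1)^r}\ \prod_{i=1}^r \max\{1,\Cert(f_{\pi(x)},x^{(i)})\}.
\]
I then invoke two structural facts about $\ell\text{-}\IsEdge_G$ that are already extracted in the analysis of \Cref{claim:slight-inapprox}: (i) for every restriction $\rho$ and every $1$-input $y$ of $f_\rho$ one has $\Cert(f_\rho,y)\le|\Sens(f_\rho,y)|$ (a minimal $1$-certificate of an edge-indicator-type function must pin down every free sensitive coordinate, since its $1$-inputs are isolated), together with the numerical values of $|C|$ and $\Sens(f)$ for $\ell\text{-}\IsEdge_G$; and (ii) the exact lower bound $\dtsize(f^{\oplus r})\ge\big[(\ell+1)(k'+m)\big]^r$ whenever every vertex cover of $G$ has size $\ge k'$. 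Fact (i) turns the patch up bound's certificate sum into $\eps\,\big(8m(\ell+1)\big)^r$, and combining with (ii) yields $|T|\ge\big[(\ell+1)(k'+m)\big]^r-\eps\,\big[8m(\ell+1)\big]^r$, as claimed.

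The bookkeeping in the last step is routine; the delicate point — and the reason \Cref{lem:patchup-xor,lem:hard-distribution-lemma-xor} are proved with exactly the parameters they have — is sharpness. Each inequality above must produce an \emph{honest} $r$-th power with precisely the advertised base: the $\Cert$-versus-$\Sens$ comparison has to hold with no constant slack so the product over the $r$ blocks does not accumulate a $c^r$ factor, the cross-terms arising when one expands $\prod_i\max\{1,\cdot\}$ must be controlled without loss, and the exact lower bound in (ii) must compose under $r$-fold XOR into a genuine $r$-th power rather than $\big[c\,(\ell+1)(k'+m)\big]^r$. This is precisely the regime that off-the-shelf decision-tree XOR lemmas such as \cite{Dru12} cannot reach, since they lose a constant factor in the \emph{exponent}; see \Cref{remark:sharp}. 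I expect fact (ii) — the tight, constant-loss-free exact XOR lower bound on $\dtsize(\ell\text{-}\IsEdge_G^{\oplus r})$ — to be the main obstacle, as a naive induction on $r$ would degrade the base at every step and the easy ``one distinct leaf per $1$-input'' argument available for a single block breaks for $f^{\oplus r}$ (its $1$-inputs need not be pairwise far apart).
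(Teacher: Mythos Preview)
Your plan is correct and matches the paper's proof essentially step for step: the Yes case is the stacking construction, and the No case chains \Cref{lem:hard-distribution-lemma-xor}, the equality $\Cert=|\Sens|$ for $1$-inputs of $\ell\text{-}\IsEdge$, \Cref{lem:patchup-xor}, and the exact bound $\dtsize(f^{\oplus r})\ge\dtsize(f)^r$. Your anticipated ``main obstacle'' (ii) is in fact clean: the paper proves $\dtsize(f^{(1)}\oplus\cdots\oplus f^{(r)})=\prod_i\dtsize(f^{(i)})$ for partial functions on disjoint variable sets by a Savick\'y-style induction on the \emph{total number of variables} (not on $r$), which incurs no constant loss at any step and then combines with the single-copy zero-error lower bound $\dtsize(\ell\text{-}\IsEdge_G)\ge(\ell+1)(k'+m)$.
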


With an appropriate choice of parameters,~\Cref{claim:strong-inapproximability} translates a gap of $\le k$ versus $\ge (1+\delta)k$ in the vertex cover complexity of $G$ into a gap of $\le s^r$ versus $\ge (1+\delta')^r s^r$ in the decision tree complexity of $\ell$-$\IsEdge_G^{\oplus r}$, where $\delta'$ is a constant that depends only on $\delta$. \Cref{thm:main-general} follows as a corollary. See \Cref{fig:amplification} for an illustration of this amplification and how it fits into our overall reduction from {\sc Vertex Cover}.


}

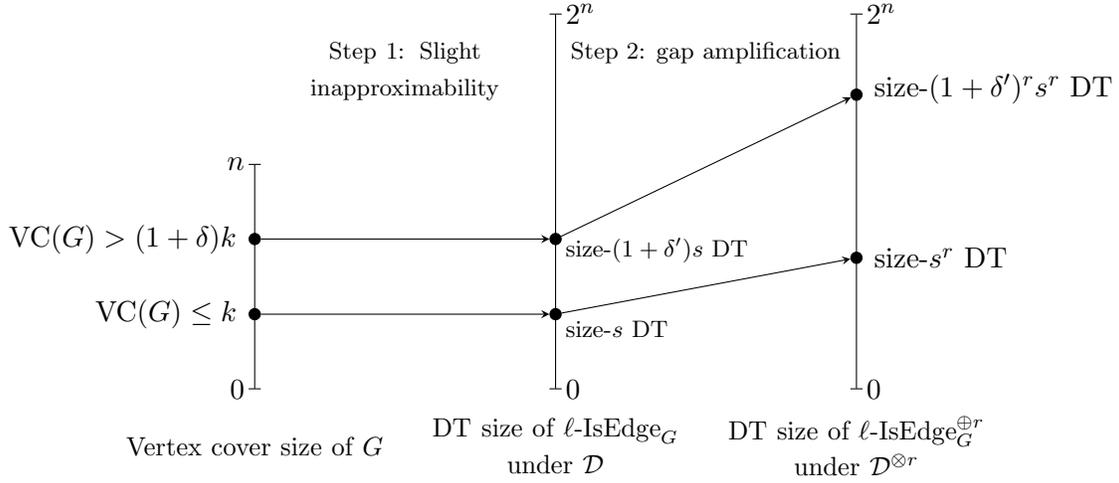
\begin{figure}[h!]
    \centering
    \begin{tikzpicture}[]
        \def\x{4.0}
        \def\c{0}
        \def\l{-0.1}
        \def\r{0.1}
        \draw[black,xshift=-\x cm,|-|] (\c,-3) node[left] {$0$} -- (\c,0) node[left]{$n$};
        \draw[black, |-|] (\c,-3) node[right] {$0$} -- (\c,2) node[right]{$2^n$};
        \draw[black,xshift=\x cm, |-|] (\c,-3) node[right] {$0$} -- (\c,2) node[right]{$2^n$};
        \draw[color=black,xshift=-\x cm] (\l,-2) node[left,fill=white] {{\color{black}$\VC(G)\le k$}};
        \node[draw,circle,fill=black,inner sep=1.5pt,xshift=-\x cm] (k) at (\c,-2) {};
        \draw[color=black,xshift=-\x cm] (\l,-1) node[left,fill=white] {\color{black} $\VC(G)>(1+\delta)k$};
        \node[draw,circle,fill=black,inner sep=1.5pt,xshift=-\x cm] (k') at (\c,-1) {};
        \node[draw,circle,fill=black,inner sep=1.5pt,xshift=0cm] (k'M) at (\c,-1) {};
        \node[draw,circle,fill=black,inner sep=1.5pt,xshift=0cm] (kM) at (\c,-2) {};
        \draw[] ([yshift=.15cm,xshift=0cm]k'M) node[anchor=north west] {\footnotesize{size-$(1+\delta')s$ DT}};
        \draw[] ([yshift=.05cm,xshift=0cm]kM) node[anchor=north west] {\footnotesize{size-$s$ DT}};

        \draw[color=black,xshift=\x cm] (\r,1) node[right,fill=white] {size-$(1+\delta')^rs^r$ DT};
        \node[draw,circle,fill=black,inner sep=1.5pt,xshift=\x cm] (k'l) at (\c,0.92) {};
        \draw[color=black,xshift=\x cm] (\r,-1.25) node[right,fill=white] {size-$s^r$ DT};
        \node[draw,circle,fill=black,inner sep=1.5pt,xshift=\x cm] (kl) at (\c,-1.25) {};
        \draw[xshift=\x*0.5 cm] (\c,1.75) node[below,fill=white] {\footnotesize Step 2: gap amplification};
        \draw[xshift=-\x*0.5 cm] (\c,1.75) node[below,fill=white,text width=3cm,text centered] {\footnotesize Step 1: Slight\\ inapproximability};
        \draw[black,-stealth] (kM) to node[midway,below,sloped] {} (kl);
        \draw[black,-stealth] (k'M) to node[midway,above,sloped] {} (k'l);
        \draw[black,-stealth] (k') to node[midway,above,sloped] {} (k'M);
        \draw[black,-stealth] (k) to node[midway,below,sloped] {} (kM);
        \draw[xshift=-\x cm] (\c,-3.75) node [fill=white,text width=3.5cm,text centered] {\small Vertex cover size of $G$};
        \draw[xshift=0cm] (\c,-3.75) node [fill=white,text width=3.9cm,text centered] {\small DT size of $\ell\text{-}\IsEdge_G$\\ under $\mathcal{D}$};
        \draw[xshift=\x cm] (\c,-3.75) node [fill=white,text width=3.9cm,text centered] {\small DT size of $\ell\text{-}\IsEdge_G^{\oplus r}$ under $\mathcal{D}^{\otimes r}$};
    \end{tikzpicture}
    \caption{An illustration of main reduction from {\sc Vertex Cover} in two steps. The first step, which establishes slight inapproximability of decision tree learning, is proved in \Cref{claim:slight-inapprox}. The second step amplifies this slight inapproximability gap using \Cref{claim:strong-inapproximability}. 
    }
    \label{fig:amplification}
\end{figure}
\section{Preliminaries}
\label{sec:prelims} 
\paragraph{Notation and naming conventions.}{We write $[n]$ to denote the set $\{1,2,\ldots,n\}$. We use lower case letters to denote bitstrings e.g. $x,y\in\zo^n$ and subscripts to denote bit indices: $x_i$ for $i\in [n]$ is the $i$th index of $x$. The string $x^{\oplus i}$ is $x$ with its $i$th bit flipped. We use superscripts to denote multiple bitstrings of the same dimension, e.g. $x^{(1)},x^{(2)},...,x^{(j)}\in\zo^n$. For a set $S$ and an integer $r\ge 1$, we write $S^r$ to denote the $r$-ary Cartesian product of the set. 
}

\paragraph{Distributions.}{We use boldface letters e.g. $\bx,\by$ to denote random variables. For a distribution $\mathcal{D}$, we write $\error_{\mathcal{D}}(f,g)=\Pr_{\bx\sim \mathcal{D}}[f(\bx)\neq g(\bx)]$. The support of the distribution is the set of elements with nonzero mass and is denoted $\supp(\mathcal{D})$. For $r\ge 1$, we write $\mathcal{D}^{\otimes r}$ to denote the $r$-wise product distribution $\underbrace{\mathcal{D}\times\cdots\times \mathcal{D}}_{r\text{ times }}$.
}

\paragraph{{Decision trees.}}{ The size of a decision tree $T$ is its number of leaves and is denoted $|T|$. In an abuse of notation, we also write $T$ for the function computed by the decision tree $T$. We say $T$ computes a function $f:\zo^n\to\zo$ if $T(x)=f(x)$ for all $x\in\zo^n$. The decision tree complexity of a function $f$ is the size of the smallest decision tree computing $f$ and is denoted $\dtsize(f)$.}

\paragraph{Restrictions and decision tree paths.}{
A restriction $\rho$ is a set $\rho\sse \{x_1,\overline{x}_1,\ldots,x_n,\overline{x}_n\}$ of literals, and $f_\rho$ is the subfunction obtained by restricting $f$ according to $\rho$: $f_\rho(x^\star)=f(x^\star\vert_\rho)$ where $x^\star\vert_\rho$ is the string obtained from $x^\star$ by setting its $i$th coordinate to $1$ if $x_i\in \rho$, $0$ if $\overline{x}_i\in \rho$, and otherwise setting it to $x^\star_i$. We say an input $x^\star$ is consistent with $\rho$ if $x_i\in \rho$ implies $x^\star_i=1$ and $\overline{x}_i\in \rho$ implies $x^\star_i=0$. 

Paths in decision trees naturally correspond to restrictions. A depth-$d$ path can be identified with a set of $d$ literals: $\pi=\{\ell_1,\ell_2,\ldots,\ell_d\}$ where each $\ell_i$ corresponds to a query of an input variable and is unnegated if $\pi$ follows the right branch and negated if $\pi$ follows the left branch. 
}

\paragraph{Boolean Functions.} \label{para:boolean-functions}
We use $f$ to denote an arbitrary $n$-bit Boolean function, $f:\zo^n\to\zo$. For a set $D\sse \zo^n$, we write $f:D\to\zo$ for the partial Boolean function defined on $D$. We use both partial and total functions and specify the setting by writing either $f:\zo^n\to\zo$ or $f:D\to\zo$. For $f:D\to\zo$, the sensitivity of $f$ on $x\in D$ and the certificate complexity of $f$'s value on $x$ are defined as 
\begin{align*}
    \Sens(f,x)&=\{x^{\oplus i}\in D: f(x)\neq f(x^{\oplus i})\text{ for }i\in [n]\}\\
    \Cert(f,x)&=|\pi|\text{ s.t. }\pi \text{ is the shortest restriction consistent with }x\text{ and }\\
    &\quad\quad f_\pi\text{ is a constant function}.
\end{align*}
Note that both of these definitions are with respect to $D$. Also, we refer to the \textit{sensitivity of $f$} which is $\Sens(f)\coloneqq\max_{x\in D}|\Sens(f,x)|$. 

\paragraph{Graphs.}{An undirected graph $G=(V,E)$ has $n$ vertices $V\sse [n]$ and $m=|E|$ edges $E\sse V\times V$. The degree of a vertex $v\in V$ is the number of edges containing it: $|\{e\in E: v\in e\}|$. The graph $G$ is degree-$d$ if every vertex $v\in V$ has degree at most $d$. We often use letters $v,u,w$ to denote vertices of a graph $G$. 
}

\paragraph{Learning.}{In the PAC learning model, there is an unknown distribution $\mathcal{D}$ and some unknown \textit{target} function $f\in\mathcal{C}$ from a fixed \textit{concept} class $\mathcal{C}$ of functions over a fixed domain. An algorithm for learning $\mathcal{C}$ over $\mathcal{D}$ takes as input an error parameter $\eps\in (0,1)$ and has oracle access to an \textit{example oracle} $\textnormal{EX}(f,\mathcal{D})$. The algorithm can query the example oracle to receive a pair $(\bx,f(\bx))$ where $\bx\sim\mathcal{D}$ is drawn independently at random. The goal is to output a \textit{hypothesis} $h$ such that $\dist_{\mathcal{D}}(f,h)\le \eps$. Since the example oracle is inherently randomized, any learning algorithm is necessarily randomized. So we require the learner to succeed with some fixed probability e.g.~$2/3$. A learning algorithm is \textit{proper} if it always outputs a hypothesis $h\in\mathcal{C}$. A learning algorithm with \textit{queries} is given oracle access to the target function $f$ along with the example oracle $\textnormal{EX}(f,\mathcal{D})$. }

\subsection{Vertex Cover} 

\paragraph{Vertex cover.}{A vertex cover for an undirected graph $G=(V,E)$ is a subset of the vertices $C\sse V$ such that every edge has at least one endpoint in $C$. We write $\VC(G)$ to denote the size of the smallest vertex cover of $G$. The {\sc Vertex Cover} problem is to decide whether a graph contains a vertex cover of size $\le k$. For $a>1$, an $a$-approximation of {\sc Vertex Cover} corresponds to the problem of deciding whether a graph contains a vertex cover of size $\le k$ or every vertex cover has size at least $a\cdot k$. There is a polynomial-time greedy algorithm for vertex cover which achieves a $2$-approximation. 

\paragraph{Hardness of approximation.} Constant factor hardness of \textsc{Vertex Cover} is known, even for bounded degree graphs (graphs whose degree is bounded by some universal constant). This is the main hardness result we reduce from in this work.

\begin{theorem}[Hardness of approximating \textsc{Vertex Cover}]
\label{thm:hardness of vertex cover} 
There are constants $\delta>0$ and $d\in \N$ such that if {\sc Vertex Cover} can be $(1+\delta)$-approximated on $n$-vertex degree-$d$ graphs in time $t(n)$, then \SAT\ can be solved in time $t(n\polylog n)$.
\end{theorem}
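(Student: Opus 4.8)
The plan is to obtain this statement by composing three classical reductions---a quasilinear-size PCP, a bounded-occurrence reduction, and the textbook clause gadget for {\sc Vertex Cover}---while tracking the instance-size blowup at every step. None of the three pieces is new; the only thing that needs attention is the size bookkeeping, which is why we must invoke the PCP theorem in its quasilinear-length form rather than in its generic (polynomial-blowup) form.

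First I would produce a quasilinear gap-$3$SAT instance. Starting from a \SAT\ instance of size $n$, apply the PCP theorem in its quasilinear-size form (Dinur's gap amplification on expander powers, composed with a short base PCP) together with the standard linear-time reduction from gap-CSP over a constant-size alphabet to gap-$3$SAT. This yields, in time $n\cdot\polylog n$, a $3$CNF $\psi$ with $m=n\cdot\polylog n$ clauses (each with exactly three literals, padding if necessary) and a universal constant $\eps_0>0$ such that if the original instance is satisfiable then $\psi$ is satisfiable, and otherwise every assignment falsifies at least an $\eps_0$-fraction of the clauses of $\psi$. Next, apply the Papadimitriou--Yannakakis expander-replacement transformation \citep{PY91}: for each variable, replace its occurrences by fresh copies tied together by equality constraints placed along the edges of a constant-degree expander. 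This runs in linear time, keeps the number of clauses at $n\cdot\polylog n$, and---because expanders have good vertex expansion---preserves a constant-fraction satisfiability gap $\eps_1>0$ while ensuring that every variable now occurs in at most $b$ clauses for some universal constant $b$. (Alternatively, one can observe that Dinur's construction already outputs a constant-degree constraint graph, so this step can be folded into the first.) Call the resulting bounded-occurrence gap-$3$SAT instance $\psi'$.

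Finally, apply the textbook reduction from $3$SAT to {\sc Vertex Cover}: introduce one vertex for each of the three literal-slots of each clause of $\psi'$, join the three slots of a common clause into a triangle, and add a consistency edge between every pair of slots holding complementary literals. The graph $G$ has $3m = n\cdot\polylog n$ vertices, and since $\psi'$ is bounded-occurrence every vertex has degree at most $b+2=:d$, a constant. A satisfying assignment of $\psi'$ selects one satisfied literal per clause, giving an independent set of size $m$ and hence a vertex cover of size $2m$; conversely, any independent set of size $t$ picks at most one vertex per triangle and no complementary pair, so it extends to a consistent assignment satisfying at least $t$ clauses, which in the soundness case forces $t\le(1-\eps_1)m$ and hence every vertex cover to have size at least $3m-(1-\eps_1)m=(2+\eps_1)m$. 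Taking $k=2m$, this is exactly a gap between vertex-cover size $\le k$ and vertex-cover size $\ge(1+\delta)k$ with $\delta=\eps_1/2$ on degree-$d$ graphs with $n\cdot\polylog n$ vertices, and the whole composition runs in time $n\cdot\polylog n$. Running a $t(\cdot)$-time $(1+\delta)$-approximation algorithm for {\sc Vertex Cover} on degree-$d$ graphs on the instance $G$ therefore decides the original \SAT\ instance within the reduction's $n\cdot\polylog n$ overhead plus time $t(n\cdot\polylog n)$, i.e.\ in time $t(n\polylog n)$.

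I expect the main (and essentially only) obstacle to be the size accounting across these steps: the completeness and soundness of the clause gadget under a constant-fraction gap is routine, but the argument only delivers the $t(n\polylog n)$ bound if one uses the quasilinear-size PCP theorem (a polynomial blowup would merely yield $t(\poly(n))$) and verifies that expander replacement and the clause gadget each inflate the instance by only a constant factor, so that the final graph still has $n\cdot\polylog n$ vertices and the whole reduction still runs in quasilinear time.
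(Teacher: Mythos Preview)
The paper does not actually prove this theorem; immediately after the statement it simply writes ``\Cref{thm:hardness of vertex cover} follows from the works of \cite{PY91,AS98,ALMSS98}'' and moves on. So there is no ``paper's own proof'' to compare against---the result is treated as a black box imported from the PCP literature.

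Your plan is a correct and standard way to unpack that black box: quasilinear PCP $\to$ bounded-occurrence gap-$3$SAT via expander replacement $\to$ the triangle gadget for \textsc{Vertex Cover}. The completeness/soundness analysis of the gadget and the degree bound $d=b+2$ are right, and your size bookkeeping is the point that deserves the attention you give it. One remark worth making explicit: the paper's citations \cite{AS98,ALMSS98} are the original PCP papers, which only give \emph{polynomial} blowup and would yield $t(\poly(n))$ rather than $t(n\polylog n)$. You are right that the $n\polylog n$ overhead in the theorem statement requires a quasilinear-length PCP (Dinur, or Ben-Sasson--Sudan), so your choice to invoke Dinur is not optional---it is exactly what is needed to match the stated bound, and is arguably a more precise attribution than the paper's own citation list.
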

}

\Cref{thm:hardness of vertex cover} follows from the works of \cite{PY91,AS98, ALMSS98}. The fact that \Cref{thm:hardness of vertex cover} holds for constant degree graphs will be essential for our lower bound because it allows us to assume that $k$ is large: $\VC(G)=\Theta(m)$.

\begin{fact}[Constant degree graphs require large vertex covers]
    \label{fact:constant degree graphs have large vc size}
    If $G$ is an $m$-edge degree-$d$ graph, then $\VC(G)\ge m/d$.
\end{fact}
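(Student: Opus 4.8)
The plan is to use a simple double-counting argument on the edge-vertex incidences. Let $C \subseteq V$ be any minimum vertex cover of $G$, so $|C| = \VC(G)$. By definition of vertex cover, every edge $e \in E$ has at least one endpoint in $C$; in other words, the sets $\{e \in E : v \in e\}$ for $v \in C$ collectively cover all of $E$.

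The key step is to bound the size of each of these sets. Since $G$ is degree-$d$, each vertex $v \in C$ is contained in at most $d$ edges, i.e.~$|\{e \in E : v \in e\}| \le d$. Therefore
$$
m = |E| \le \sum_{v \in C} |\{e \in E : v \in e\}| \le d \cdot |C| = d \cdot \VC(G),
$$
and rearranging gives $\VC(G) \ge m/d$, as claimed.

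I do not expect any obstacle here: the statement is an immediate consequence of the fact that a bounded-degree vertex can only ``pay for'' boundedly many edges in a cover, and the argument above is the standard one-line counting proof. The only thing to be slightly careful about is that the inequality $m \le \sum_{v\in C}|\{e : v\in e\}|$ is genuine (an edge with both endpoints in $C$ is counted twice, which only helps), and that ``degree-$d$'' is interpreted as ``maximum degree at most $d$'' as defined in the Graphs paragraph of \Cref{sec:prelims}.
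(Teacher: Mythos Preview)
Your proof is correct and is exactly the paper's argument: the paper simply notes that in a degree-$d$ graph each vertex can cover at most $d$ edges, and you have written out that observation as the explicit double-counting inequality $m \le d\cdot |C|$.
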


This fact follows from the observation that in a degree-$d$ graph each vertex can cover at most $d$ edges.

\input{body-of-paper}


%

\acks{We thank the COLT reviewers for their helpful feedback and suggestions.

The authors are supported by NSF awards 1942123, 2211237, 2224246, a Sloan Research Fellowship, and a Google Research Scholar Award. Caleb is also supported by an NDSEG Fellowship, and Carmen by a Stanford Computer Science Distinguished Fellowship and an NSF Graduate Research Fellowship.}

 \bibliography{ref}

\newpage

\appendix



\section{Proof of \Cref{claim:zero-error-patch-up}}
\label{subsec:zero-error-patch-up}
        Let $T$ be the decision tree built iteratively by the following procedure. In the first iteration, pick an arbitrary $x\in f^{-1}(1)$ and fully query the indices in $\Cert(f,x)$. Let $T_i$ be the tree formed after the $i$th iteration. Then $T_{i+1}$ is formed by choosing $x\in f^{-1}(1)$ which has not been picked in a previous iteration. Then, at the leaf reached by $x$ in $T_i$, fully query the indices in $\Cert(f,x)$ (ignoring those indices which have already been queried along the path followed by $x$ in $T_i$). Repeating this for $|f^{-1}(1)|$ steps, yields a decision tree with at most $\sum_{x\in f^{-1}(1)}\Cert(f,x)$ \textit{internal nodes}. Therefore, the number of leaves is at most $1+\sum_{x\in f^{-1}(1)}\Cert(f,x)$. 
        
        It remains to show that this tree exactly computes $f$. Specifically, we'll argue that $f_\pi$ is the constant function for every path $\pi$ in the decision tree. If not, then there is an $x\in f^{-1}(1)$ so that $x$ follows the path $\pi$ and $f_\pi$ is nonconstant. But this is a contradiction since $\pi$ consists of a certificate of $x$ by construction.\hfill $\blacksquare$




\section{Hard distribution lemma: Proof of \Cref{lem:hard-distribution-lemma}}
\label{sec:hard-dist-lemma}


First we prove a lemma which counts the error under $\mathcal{D}$ conditioned on the $1$-input obtained in the first sampling step. 
\begin{lemma}[Error with respect to the canonical hard distribution conditioned on a $1$-input]
\label[lemma]{lem:error-conditioned-on-one}
    Let $T$ be a decision tree, $f:D\to\zo$ be a function, and let $\mathcal{D}$ be the canonical hard distribution. For all $x\in f^{-1}(1)$, 
    $$
    \Prx_{\bz\sim\mathcal{D}_f}[T(\bz)\neq f(\bz)\mid \text{ first sampling }x\text{ from }f^{-1}(1)]\ge \frac{1}{2}\cdot\frac{|\Sens(f_{\pi(x)},x)|}{\max\{1,|\Sens(f,x)|\}}.
    $$
\end{lemma}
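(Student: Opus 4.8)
The plan is to unwind the definition of the canonical hard distribution $\mathcal{D}_f$ conditioned on its first-step draw. Recall that, conditioned on the first sampling step returning a fixed $x\in f^{-1}(1)$, the distribution outputs $x$ itself with probability $\tfrac12$ and a uniformly random sensitive neighbor of $x$ (i.e.\ a uniformly random element of the set $\Sens(f,x)\subseteq D$) with probability $\tfrac12$; if $\Sens(f,x)=\emptyset$ it simply outputs $x$, in which case the claimed bound is vacuous since then $\Sens(f_{\pi(x)},x)=\emptyset$ as well (by the containment established below). Writing $\pi=\pi(x)$ for the root-to-leaf path $x$ follows in $T$ and $b\in\zo$ for the label of its leaf, this gives
\[
\Prx_{\bz\sim\mathcal{D}_f}\!\big[T(\bz)\neq f(\bz)\mid \text{first sampling }x\big]
\;=\;\tfrac12\cdot\mathbf{1}\{b\neq 1\}\;+\;\tfrac12\cdot\Prx_{\by}\big[T(\by)\neq f(\by)\big],
\]
where $\by$ is uniform over $\Sens(f,x)$ (and the second term is absent when $\Sens(f,x)=\emptyset$).

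Next I would isolate the structural fact underlying both cases: every $x^{\oplus i}\in\Sens(f_{\pi},x)$ lies in $\Sens(f,x)$, reaches the \emph{same} leaf of $T$ as $x$, and satisfies $f(x^{\oplus i})=0$. Indeed, if coordinate $i$ were queried along $\pi$ then flipping bit $i$ would make the resulting string inconsistent with $\pi$, and unwinding the definition of the restricted partial function $f_\pi$ (whose domain consists of strings whose $\pi$-corrected version lies in $D$) one checks $x^{\oplus i}$ could not be a sensitive neighbor of $x$ in $f_\pi$; hence $i$ is not queried on $\pi$, so $x^{\oplus i}$ agrees with $x$ on every queried coordinate, follows $\pi$, and receives label $T(x^{\oplus i})=b$. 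Moreover $f(x^{\oplus i})=f_\pi(x^{\oplus i})\neq f_\pi(x)=f(x)=1$, so $f(x^{\oplus i})=0$ and $x^{\oplus i}\in\Sens(f,x)$. In particular $\Sens(f_{\pi},x)\subseteq\Sens(f,x)$, so the ratio $|\Sens(f_{\pi},x)|/\max\{1,|\Sens(f,x)|\}$ is at most $1$.

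Then I would split on $b$. If $b\neq 1$, then $T(x)\neq f(x)$, so the displayed expression is at least $\tfrac12\ge \tfrac12\cdot|\Sens(f_{\pi},x)|/\max\{1,|\Sens(f,x)|\}$, as needed. If $b=1$, the first term vanishes, but by the structural fact $T$ errs on every one of the $|\Sens(f_{\pi},x)|$ neighbors in $\Sens(f_{\pi},x)\subseteq\Sens(f,x)$ (each gets label $1$ while $f$ is $0$ there), so $\Prx_{\by}[T(\by)\neq f(\by)]\ge |\Sens(f_{\pi},x)|/|\Sens(f,x)|$ and the displayed expression is at least $\tfrac12\cdot|\Sens(f_{\pi},x)|/|\Sens(f,x)|$; when $|\Sens(f_{\pi},x)|\ge1$ this equals $\tfrac12\cdot|\Sens(f_{\pi},x)|/\max\{1,|\Sens(f,x)|\}$ (since then $|\Sens(f,x)|\ge1$), and when $|\Sens(f_{\pi},x)|=0$ the claimed bound is $0$. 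This exhausts all cases.

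The only genuinely delicate step is the claim that no coordinate sensitive for $f_{\pi(x)}$ at $x$ is queried along $\pi(x)$: this is what lets the sensitive neighbors "piggyback" on the path of $x$ and inherit $T$'s (possibly erroneous) leaf label, and it is exactly the point where one must read the definition of the restricted partial function carefully. Everything else is bookkeeping over the two branches of $\mathcal{D}_f$.
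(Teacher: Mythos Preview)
Your proof is correct and follows essentially the same approach as the paper's: both handle the vacuous case, use that every $y\in\Sens(f_{\pi(x)},x)$ follows the same path $\pi(x)$ in $T$ as $x$ and has $f(y)=0$, and then argue that whichever label the leaf carries, at least one of $\{x\}$ or $\Sens(f_{\pi(x)},x)$ is entirely misclassified. The only cosmetic difference is that the paper packages the case split as $\min\{\tfrac12,\ \tfrac12\cdot|\Sens(f_{\pi(x)},x)|/|\Sens(f,x)|\}$ rather than branching explicitly on the leaf label $b$; your version spells out the ``coordinate $i$ is not queried along $\pi$'' justification in slightly more detail than the paper does.
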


\begin{proof}
    If $|\Sens(f,x)|=0$ then $|\Sens(f_{\pi(x)},x)|=0$. And if the RHS is $0$ then the bound is vacuously true. Assume that $\Sens(f_{\pi(x)},x)\neq \varnothing$. Both $x$ and all $y\in \Sens(f_{\pi(x)},x)$ follow the same path in $T$ and have the same leaf label. Since $f(x)=1$ and $f(y)=0$, we can write
    \begin{align*}
        \Prx_{\bz\sim\mathcal{D}_f}\big[T(\bz)\neq f(\bz)\mid \textnormal{first sampling }x&\textnormal{ from }f^{-1}(1)\big]\\
        \ge \min\bigg\{&\Prx_{\bz\sim\mathcal{D}_f}\big[\bz=x\mid \textnormal{first sampling }x\textnormal{ from }f^{-1}(1)\big],\\
        &\Prx_{\bz\sim\mathcal{D}_f}\big[\bz\in\Sens(f_{\pi(x)},x)\mid \textnormal{first sampling }x\textnormal{ from }f^{-1}(1)\big]\bigg\}\\
        =\min\bigg\{&\frac{1}{2},\frac{1}{2}\cdot \frac{|\Sens(f_{\pi(x)},x)|}{|\Sens(f,x)|}\bigg\}\ge \frac{1}{2}\cdot\frac{|\Sens(f_{\pi(x)},x)|}{|\Sens(f,x)|}
    \end{align*}
    where the inequality follows from the fact that the probability of an error is lower bounded by the probability that $\bz=x$ when the label for the path in $T$ is $0$ and is lower bounded by the probability that $\bz\in\Sens(f_{\pi(x)},x)$ when the label for the path is $1$. Note that if $\Sens(f_{\pi(x)},x)\neq \varnothing$ then necessarily $|\Sens(f,x)|\ge 1$ and $\max\{1,|\Sens(f,x)|\}=|\Sens(f,x)|$. Therefore, the proof is complete.
\end{proof}

\begin{proof}[Proof of \Cref{lem:hard-distribution-lemma}]
    We prove the statement for $C=f^{-1}(1)$. If $C\neq f^{-1}(1)$, we can consider the function $f:C\cup C'\to \zo$ where $C'$ denotes the set of sensitive neighbors of strings in $C$, and the same proof holds. Let $\mathcal{D}$ denote the distribution from \Cref{def:canonical-hard-distribution} and notice that the support of this distribution is $C$ and all of its sensitive neighbors. We have
    \begin{align*}
        \mathrm{error}_{\mathcal{D}}(T,f)&=\Prx_{\bz\sim\mathcal{D}}[T(\bz)\neq f(\bz)]\\
        &=\sum_{x\in C}\frac{1}{|C|}\cdot \Prx_{\bz\sim\mathcal{D}}[T(\bz)\neq f(\bz)\mid \text{ first sampling }x\text{ from }f^{-1}(1)]\\
        &\ge \frac{1}{2|C|}\sum_{x\in C}\frac{|\Sens(f_{\pi(x)},x)|}{\max\{1,|\Sens(f,x)|\}}\tag{\Cref{lem:error-conditioned-on-one}}\\
        &\ge \frac{1}{2|C|\cdot\Sens(f)}\sum_{x\in C}|\Sens(f_{\pi(x)},x)|\tag{$\max\{1,|\Sens(f,x)|\}\le \Sens(f)$ for all $x$}
    \end{align*}
    where the last step uses the fact that $\Sens(f)\ge 1$ since $f$ is nonconstant. In particular, we have that $\max\{1,|\Sens(f,x)|\}\le \Sens(f)$ for all $x$.
\end{proof}

\section{Deferred proofs for \Cref{thm:KST}}

\subsection{Proof of \Cref{lem:ell-isedge-lb}}
\label{subsec:ell-isedge-lb}
First, we give a basic zero-error lower bound on $\ell\text{-}\IsEdge$ and observe some properties about the sensitivity and certificate complexity of $\ell\text{-}\IsEdge$ in \Cref{lem:ell-isedge-lb-zero-error,prop:sens-ell-isedge,prop:sens-equals-cert}, respectively.

\begin{lemma}[Zero-error lower bound for $\ell\text{-}\IsEdge_G$; see {\cite[Claim 6.6]{KST23}}]
\label[lemma]{lem:ell-isedge-lb-zero-error}
    Let $G$ be an $n$-vertex, $m$-edge graph where every vertex cover has size at least $k'$. Then, any decision tree $T$ computing $\ell\text{-}\IsEdge_G:\ell\text{-}{D}_G\to\zo$ over $\ell\text{-}{D}_G$, the support of the canonical hard distribution, must have size
    $$
    |T|\ge (\ell+1)(k'+m).
    $$
\end{lemma}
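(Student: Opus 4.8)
The plan is to prove the bound by a leaf-counting argument that extracts a vertex cover of $G$ from $T$ and then invokes the hypothesis $\VC(G)\ge k'$. The starting point is that $T$ computes $\ell\text{-}\IsEdge_G$ with \emph{zero} error on $\ell\text{-}D_G$ — the support of the canonical hard distribution, which by construction consists of the $1$-inputs of $\ell\text{-}\IsEdge_G$ together with all of their sensitive neighbors (all of which are $0$-inputs). Consequently every root-to-leaf path $\pi$ of $T$ is locally correct: $\ell\text{-}\IsEdge_G$ is constant on the elements of $\ell\text{-}D_G$ consistent with $\pi$. In particular, the path to a $1$-leaf is a certificate for each $1$-input it hosts, and — this is where we use that $T$ is correct on the sensitive $0$-neighbors, not merely on the $1$-inputs — such a path cannot be consistent with any sensitive neighbor of those $1$-inputs.

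First I would isolate the two contributions matching the two terms of the bound. For the $m$-term: using the structure of $\ell\text{-}\IsEdge_G$ (its $1$-inputs are the padded encodings $\Ind[e]$ of the $m$ edges, and the padding parameter $\ell$ splits each accepting computation into $\ell+1$ essentially independent branches), one shows that the $1$-inputs alone must occupy at least $(\ell+1)m$ distinct leaves of $T$. For the $k'$-term: fix one of the $\ell+1$ padding blocks, and let $S$ be the set of vertex-coordinates that are queried, before that block's accepting leaf is reached, on the accepting path of some edge $e=\{u,v\}$; if some edge had neither endpoint in $S$, then the $0$-inputs $\Ind[e]^{\oplus u}$ and $\Ind[e]^{\oplus v}$ (which lie in $\ell\text{-}D_G$) would be consistent with the same accepting path as $\Ind[e]$, contradicting zero error. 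Hence $S$ is a vertex cover, so $|S|\ge k'$, forcing at least $k'$ further internal nodes within that block; doing this for all $\ell+1$ blocks and combining with the $(\ell+1)m$ accepting leaves — via a charging argument that assigns all these nodes/leaves to pairwise distinct leaves of $T$ — gives $|T|\ge(\ell+1)(k'+m)$.

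The main obstacle is this last combination step: converting the per-path certificate and edge-endpoint statements into a bound on the \emph{number of leaves} without double counting, i.e.\ verifying that the $\ell+1$ blocks contribute disjointly and that the ``accepting'' leaves are disjoint from the ``vertex-cover'' internal nodes. This is exactly what the padded $\ell\text{-}\IsEdge_G$ construction is engineered to make go through, and carrying it out cleanly needs the precise definitions of $\ell\text{-}\IsEdge_G$ and $\ell\text{-}D_G$ from the preliminaries; the individual structural facts used along the way (a $1$-leaf's path is a certificate; an accepting path must query an endpoint of every edge) are routine given those definitions and the already-stated properties of $\Sens$ and $\Cert$. A cleaner alternative organization is induction on $\ell$: exhibit $\ell\text{-}\IsEdge_G$ on $\ell\text{-}D_G$ as containing a copy of $(\ell-1)\text{-}\IsEdge_G$ on $(\ell-1)\text{-}D_G$ plus one fresh block's worth of structure, so that $|T|\ge|T'|+(k'+m)$ for an induced subtree $T'$, with base case the unpadded bound $|T|\ge k'+m$ for $\IsEdge_G$.
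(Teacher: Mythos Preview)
The paper does not give a from-scratch argument here: it cites Claim~6.6 of \cite{KST23}, which proves the identical bound over the slightly larger domain $D'=\ell\text{-}D_G\cup\{0^{n+\ell n}\}$, and then observes that any $T$ correct on $\ell\text{-}D_G$ is automatically correct on $D'$ as well (every $1$-input is sensitive on each of its $1$-coordinates, so $T$ must query all of them along the accepting path, and hence we may take $T(0^{n+\ell n})=0$). That one-line reduction is the entire proof. Your proposal is therefore attempting the underlying \cite{KST23} argument rather than what the present paper does, and it contains a concrete gap.

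The gap is in the $m$-term. There are exactly $m$ $1$-inputs to $\ell\text{-}\IsEdge_G$ on $\ell\text{-}D_G$---one padded indicator $\ell\text{-}\Ind[e]$ per edge---so they occupy at most $m$ leaves of $T$, not $(\ell+1)m$. The padding multiplies the number of $1$-\emph{coordinates} of each $1$-input (so that $\Sens(\ell\text{-}\IsEdge_G)=2(\ell+1)$); it does not multiply the number of $1$-inputs or of accepting leaves. For the same reason your ``per-block'' picture---fix one of the $\ell+1$ blocks and look at \emph{that block's} accepting leaf---does not match the geometry of $T$: each edge has a single accepting leaf, and the $\ell+1$ copies of each endpoint coordinate are queried (in some order, possibly interleaved with other variables) along that one path, not in $\ell+1$ separate subtrees. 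So neither the $(\ell+1)m$ accepting leaves nor the $\ell+1$ disjoint vertex-cover extractions are available as you describe them; the disjointness issue you flag as the main obstacle is not the real difficulty, because the pieces you want to combine do not exist in the claimed form. Your induction-on-$\ell$ alternative is closer in spirit to what is needed, but ``contains a copy of $(\ell{-}1)\text{-}\IsEdge_G$'' is a statement about the function, whereas the inductive step must produce, from an \emph{arbitrary} tree $T$ for the $\ell$-version, a subtree of size at most $|T|-(k'+m)$ computing the $(\ell{-}1)$-version---and that is exactly the nontrivial content of the bound.
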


\begin{proof}
    The same lower bound is proved in Claim 6.6 of \cite{KST23} under a slightly different subset of inputs. Specifically, they prove $\dtsize(\ell\text{-}\IsEdge_G)\ge (\ell+1)(k'+m)$ where $\ell\text{-}\IsEdge_G:D'\to\zo$ for the set $D'=\ell\text{-}{D}_G\cup\{0^{n+\ell n}\}$ which adds the all 0s input. This small difference doesn't change the lower bound since any decision tree $T$ computing $\ell\text{-}\IsEdge_G$ over the set $\ell\text{-}{D}_G$ also computes it over $D'$. Indeed, every $1$-input to $\ell\text{-}\IsEdge_G$ is sensitive on every $1$-coordinate and so if $T$ satisfies $T(x)=\ell\text{-}\IsEdge_G(x)$ for every $x\in \ell\text{-}{D}_G$, then it must query every $1$-coordinate of each $1$-input. Therefore, we can assume without loss of generality that $T(0^{n+\ell n})=0$.
\end{proof}

\begin{proposition}[Sensitivity of $\ell\text{-}\IsEdge_G$]
\label{prop:sens-ell-isedge}
    For a graph $G$, $\ell\ge 1$, and $\ell\text{-}\IsEdge_G:\ell\text{-}{D}_G\to\zo$, we have
    $$
    \Sens(\ell\text{-}\IsEdge_G)=2(\ell+1).
    $$
\end{proposition}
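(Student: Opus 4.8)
The plan is to read off $\Sens(\ell\text{-}\IsEdge_G)$ directly from the structure of the domain $\ell\text{-}{D}_G$. Recall from \Cref{def:ell-isedge} that a $1$-input of $\ell\text{-}\IsEdge_G$ consists of $\ell+1$ identical length-$n$ blocks, each equal to $\Ind[e]$ for a common edge $e\in E$, and so has exactly $2(\ell+1)$ coordinates equal to $1$; and that the $0$-inputs lying in $\ell\text{-}{D}_G$ (the support of the canonical hard distribution of \Cref{def:canonical-hard-distribution}) are exactly the sensitive neighbors of the $1$-inputs, i.e.\ the strings obtained from some $1$-input by flipping a single $1$-coordinate down to $0$; each such string has Hamming weight exactly $2(\ell+1)-1$, with one block of weight $1$ and the other $\ell$ blocks still equal to $\Ind[e]$. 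I would prove the proposition by bounding $|\Sens(\ell\text{-}\IsEdge_G,x)|$ for $x$ of each of these two types and then taking the maximum.

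For a $1$-input $z$: flipping any of its $2(\ell+1)$ one-coordinates breaks the ``$\ell+1$ equal copies of an edge indicator'' pattern and produces a weight-$(2(\ell+1)-1)$ string that lies in $\ell\text{-}{D}_G$ and is a $0$-input, hence a sensitive neighbor of $z$ (this is also noted in the proof of \Cref{lem:ell-isedge-lb-zero-error}). Conversely, flipping any $0$-coordinate of $z$ yields a string of Hamming weight $2(\ell+1)+1$, which is neither a $1$-input (wrong weight) nor a $0$-input of $\ell\text{-}{D}_G$ (those have weight $2(\ell+1)-1$), hence not in the domain, so it does not contribute to $\Sens(\ell\text{-}\IsEdge_G,z)$. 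Thus $|\Sens(\ell\text{-}\IsEdge_G,z)|=2(\ell+1)$ for every $1$-input $z$. Next, for a $0$-input $w\in\ell\text{-}{D}_G$ (weight $2(\ell+1)-1$): every single-bit neighbor of $w$ has weight $2(\ell+1)-2$ or $2(\ell+1)$, so no neighbor of $w$ is a $0$-input of $\ell\text{-}{D}_G$; hence the only neighbors of $w$ inside $\ell\text{-}{D}_G$ are $1$-inputs, of which there are at most $2(\ell+1)$ (in fact exactly one, obtained by restoring the missing $1$ in the unique weight-$1$ block of $w$). Either way $|\Sens(\ell\text{-}\IsEdge_G,w)|\le 2(\ell+1)$. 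Combining the two cases, $\Sens(\ell\text{-}\IsEdge_G)=\max_{x\in\ell\text{-}{D}_G}|\Sens(\ell\text{-}\IsEdge_G,x)|=2(\ell+1)$.

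The argument is short and I do not anticipate a genuine obstacle; the one step worth stating carefully is that flipping a $0$-coordinate of a $1$-input leaves the domain. This is where the clean description of $\ell\text{-}{D}_G$ does the work: since every $0$-input of $\ell\text{-}{D}_G$ has Hamming weight exactly $2(\ell+1)-1$, a bare weight count rules out any accidental coincidence with a sensitive neighbor of some other $1$-input, and no case analysis over edges sharing a vertex is needed. (If \Cref{def:ell-isedge} also puts a few very-low-weight ``special'' inputs into $\ell\text{-}{D}_G$ — as with the all-zeros input appearing in the proof of \Cref{lem:ell-isedge-lb-zero-error} — the same weight bookkeeping shows such inputs do not affect any of the counts above.)
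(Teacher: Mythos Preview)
Your proposal is correct and follows essentially the same approach as the paper: compute $|\Sens(\ell\text{-}\IsEdge_G,x)|$ separately for $1$-inputs (getting exactly $2(\ell+1)$ via the one-coordinates) and for $0$-inputs (the paper states directly that each $0$-input has exactly one sensitive neighbor, namely the edge indicator it came from, while you bound it by $2(\ell+1)$ and note parenthetically that it is in fact one). Your extra Hamming-weight bookkeeping to rule out neighbors outside the domain is a bit more explicit than the paper's version, but the underlying argument is the same.
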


\begin{proof}
    Let $\ell\text{-}\ind[e]$ be an edge indicator for an edge $e\in E$. Let $i\in [n\ell+n]$ denote the index of a $1$-coordinate of $\ell\text{-}\ind[e]$. By definition, there are $2(\ell+1)$ many such $i$ and each $i$ is sensitive: $\ell\text{-}\IsEdge_G(\ell\text{-}\ind[e]^{\oplus i})=0$. Therefore, $|\Sens(\ell\text{-}\IsEdge_G,\ell\text{-}\ind[e])|=2(\ell+1)$. Conversely, for every sensitive neighbor $\ell\text{-}\ind[e]^{\oplus i}$, we have $\Sens(\ell\text{-}\IsEdge_G,\ell\text{-}\ind[e]^{\oplus i})=\{\ell\text{-}\ind[e]\}$ and so $|\Sens(\ell\text{-}\IsEdge_G,\ell\text{-}\ind[e]^{\oplus i})|=1$. Thus the overall sensitivity is $\Sens(\ell\text{-}\IsEdge_G)=2(\ell+1).$
\end{proof}

\begin{proposition}[Sensitivity equals certificate complexity of $1$-inputs]
    \label[proposition]{prop:sens-equals-cert}
    Let $G$ be a graph and $\ell\text{-}\IsEdge:\ell\text{-}{D}_G\to\zo$, the corresponding edge function. For all edge indicators $x=\ell\text{-}\ind[e]$ and for all restrictions $\pi$, we have
    $$
    \Cert(\ell\text{-}\IsEdge_\pi,x)=|\Sens(\ell\text{-}\IsEdge_\pi,x)|.
    $$
\end{proposition}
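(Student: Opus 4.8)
The plan is to establish the two inequalities $\Cert(\ell\text{-}\IsEdge_\pi,x)\ge|\Sens(\ell\text{-}\IsEdge_\pi,x)|$ and $\Cert(\ell\text{-}\IsEdge_\pi,x)\le|\Sens(\ell\text{-}\IsEdge_\pi,x)|$ separately. The first is a generic fact about partial Boolean functions; only the second uses the specific structure of $\ell\text{-}\IsEdge$, and that is where the real work lies.

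For the easy direction: for any partial function $g$ and any input $z$ in its domain, a certificate for $g$ at $z$ is a restriction $\tau$ consistent with $z$ on which $g$ is constant. If $z^{\oplus i}\in\Sens(g,z)$, then $z$ and $z^{\oplus i}$ agree in every coordinate except $i$, so $\tau$ must fix coordinate $i$ — otherwise $z^{\oplus i}$ would also be consistent with $\tau$, forcing $g_\tau$ to take two values. Hence every certificate has size at least $|\Sens(g,z)|$, and instantiating $g=\ell\text{-}\IsEdge_\pi$ and $z=x$ gives one inequality.

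For the reverse direction I would exhibit an explicit certificate of size exactly $|\Sens(\ell\text{-}\IsEdge_\pi,x)|$. It suffices to treat $\pi$ consistent with $x$, which is how the proposition is applied (with $\pi$ a root-to-leaf path that $x$ follows). Write $x=\ell\text{-}\ind[e]$ and let $L_e=\{i : x_i=1\}$; by the definition of $\ell\text{-}\IsEdge$ (\Cref{def:ell-isedge}) together with \Cref{prop:sens-ell-isedge}, $|L_e|=2(\ell+1)$, and likewise every edge indicator has exactly $2(\ell+1)$ ones. Let $R\subseteq L_e$ be the $1$-coordinates of $x$ that are not fixed by $\pi$, and let $\tau=\{x_i\mapsto 1 : i\in R\}$. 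Two observations finish the argument. First, $R=\Sens(\ell\text{-}\IsEdge_\pi,x)$: flipping a single $1$-coordinate of an edge indicator always produces a $0$-input that lies in the domain $\ell\text{-}{D}_G$ (a sensitive neighbor in the support of the canonical hard distribution), and $x^{\oplus i}$ is consistent with $\pi$ exactly when $i$ is free under $\pi$, so the sensitive coordinates of $\ell\text{-}\IsEdge_\pi$ at $x$ are precisely $R$. Second, $\tau$ is a certificate: writing $\rho=\pi\cup\tau$, any $z\in\ell\text{-}{D}_G$ consistent with $\rho$ has $z_i=1$ for all $i\in L_e$ — the coordinates in $R$ are forced by $\tau$, and those in $L_e\setminus R$ are fixed by $\pi$, necessarily to $x_i=1$ since $x$ is consistent with $\pi$. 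But every $0$-input of $\ell\text{-}\IsEdge$ in $\ell\text{-}{D}_G$ has the form $\ell\text{-}\ind[e'']^{\oplus j}$ and hence only $2(\ell+1)-1$ ones, too few to contain $L_e$; and every $1$-input is an edge indicator with exactly $2(\ell+1)$ ones, so if its $1$-set contains $L_e$ it equals $L_e$, i.e.\ the input equals $x$. Thus $x$ is the unique point of $\ell\text{-}{D}_G$ consistent with $\rho$, so $(\ell\text{-}\IsEdge_\pi)_\tau=\ell\text{-}\IsEdge_\rho$ is the constant $1$. Consequently $\Cert(\ell\text{-}\IsEdge_\pi,x)\le|\tau|=|R|=|\Sens(\ell\text{-}\IsEdge_\pi,x)|$, completing the proof.

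The only content is the cardinality count in the second observation: because all $1$-inputs of $\ell\text{-}\IsEdge$ have the same number of ones, pinning down the $1$-coordinates of a single $1$-input already rules out every other edge indicator, and because the $0$-inputs in the domain have strictly fewer ones, they are ruled out automatically. I do not expect a genuine obstacle here; the one point requiring care is unwinding the definitions of $\Sens(\ell\text{-}\IsEdge_\pi,x)$ and $\Cert(\ell\text{-}\IsEdge_\pi,x)$ correctly with respect to the partial-function domain in the case that $\pi$ already fixes some coordinates of $x$.
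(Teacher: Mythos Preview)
Your proof is correct and follows essentially the same approach as the paper: both identify the set of $1$-coordinates of $x$ not fixed by $\pi$ as simultaneously the sensitive set and a valid certificate. The paper's proof is quite terse (it simply asserts that this set is a certificate because fixing it forces the function to be constant $1$), whereas you spell out both inequalities, make the consistency assumption on $\pi$ explicit, and justify the certificate claim via the cardinality count on the ones of domain elements; this is the same argument with more detail filled in.
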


\begin{proof}
    By definition $|\Sens(\ell\text{-}\IsEdge_\pi,x)|$ is the number of $1$-coordinates in $x$ which are not restricted by $\pi$. The set of $1$-coordinates of $x$ not restricted by $\pi$ forms a certificate of $\ell\text{-}\IsEdge_\pi$ since fixing these coordinates forces $\ell\text{-}\IsEdge$ to be the constant $1$-function. It follows that $\Cert(\ell\text{-}\IsEdge_\pi,x)=|\Sens(\ell\text{-}\IsEdge_\pi,x)|$.
\end{proof}



\begin{proof}[Proof of \Cref{lem:ell-isedge-lb}]
    For a graph consisting of $m$ edges, the number of $1$-inputs to $\ell\text{-}\IsEdge:\ell\text{-}{D}_G\to\zo$ is $m$. Therefore,
    \begin{align*}
        \eps &\ge \frac{1}{2m\cdot\Sens(\ell\text{-}\IsEdge)}\sum_{x\in \ell\text{-}\IsEdge^{-1}(1)}|\Sens(\ell\text{-}\IsEdge_{\pi(x)},x)|\tag{\Cref{lem:hard-distribution-lemma}}\\
        &=\frac{1}{2m\cdot \Sens(\ell\text{-}\IsEdge)}\sum_{x\in \ell\text{-}\IsEdge^{-1}(1)}\Cert(\ell\text{-}\IsEdge_{\pi(x)},x)\tag{\Cref{prop:sens-equals-cert}}\\
        &\ge \frac{1}{2m\cdot \Sens(\ell\text{-}\IsEdge)}\left(\dtsize(\ell\text{-}\IsEdge,\ell\text{-}{D}_G)-|T|\right).\tag{\Cref{lem:patch-up-dt}}
    \end{align*}
    Rearranging the above, we obtain
    \begin{align*}
    |T|&\ge \dtsize(\ell\text{-}\IsEdge,\ell\text{-}{D}_G)-2\eps m\cdot\Sens(\ell\text{-}\IsEdge)\\
    &\ge(\ell+1)(k'+m)-2\eps m\cdot\Sens(\ell\text{-}\IsEdge)\tag{\Cref{lem:ell-isedge-lb-zero-error}}\\
    &=(\ell+1)(k'+m)-4\eps m(\ell+1)\tag{\Cref{prop:sens-ell-isedge}}
    \end{align*}
    which completes the proof.
\end{proof}

\subsection{Proof of \Cref{lem:technical-kst}}
\label{subsec:technical-kst}


This lemma is a consequence of the following proposition along with the upper and lower bounds we have obtained for $\ell\text{-}\IsEdge$. The proposition is a calculation involving the parameters that come into play in \Cref{lem:technical-kst}. We state it on its own, since we will reuse the calculation later when proving \Cref{thm:main-general}.

\begin{proposition}
\label[proposition]{prop:inequality}
    For all $\delta,\delta',\alpha>0$ and $\ell,m,n,k,d\ge 1$ satisfying $m\le dk$ and $\delta'>(\delta+\alpha)d+\delta+\frac{(1+\delta)mn}{k(\ell+1)}$, we have
    $$
    (1+\delta)\left[(\ell+1)(k+m)+mn\right]<(\ell+1)\left[(1+\delta')k+(1-\alpha) m\right].
    $$
\end{proposition}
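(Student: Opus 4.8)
The statement is a purely algebraic inequality, so the plan is simply to reduce it to the hypothesis on $\delta'$ by elementary manipulation; there is no conceptual content beyond careful bookkeeping. First I would divide both sides by the positive quantity $(\ell+1)$, which turns the claim into
$$
(1+\delta)(k+m) + \frac{(1+\delta)mn}{\ell+1} < (1+\delta')k + (1-\alpha)m.
$$
Then I would expand the left-hand side, move all terms carrying a factor of $k$ to the right and all terms carrying a factor of $m$ to the left, so that the target inequality becomes equivalent to
$$
(\delta'-\delta)\,k > (\delta+\alpha)\,m + \frac{(1+\delta)mn}{\ell+1}.
$$

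The second step is to invoke the hypothesis. Multiplying the assumed bound $\delta' > (\delta+\alpha)d + \delta + \frac{(1+\delta)mn}{k(\ell+1)}$ through by $k > 0$ yields
$$
(\delta'-\delta)\,k > (\delta+\alpha)\,dk + \frac{(1+\delta)mn}{\ell+1}.
$$
Finally I would use the assumption $m \le dk$ together with $\delta+\alpha > 0$ to get $(\delta+\alpha)dk \ge (\delta+\alpha)m$, which upgrades the last display to exactly the rearranged inequality from the first step, completing the argument.

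There is no genuine obstacle here: every quantity we divide or multiply by ($\ell+1$, $k$, $\delta+\alpha$) is positive by hypothesis, so no inequality directions flip, and the only subtlety is to keep track of which terms should and should not pick up a factor of $d$ — the hypothesis is phrased precisely so that the linear-in-$m$ term gets bounded via $m\le dk$ (hence the $d$) while the $mn$ term is passed through unchanged (hence no $d$). The "hard part", such as it is, is just making sure the chain of equivalences in the first step is done correctly so that the final comparison lines up term-for-term with what the second step produces.
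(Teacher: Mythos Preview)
Your proof is correct and follows essentially the same route as the paper: both arguments reduce the claim to showing $(\delta'-\delta)k > (\delta+\alpha)m + \frac{(1+\delta)mn}{\ell+1}$, obtain this from the hypothesis on $\delta'$ multiplied through by $k$, and invoke $m\le dk$ to replace $(\delta+\alpha)dk$ by $(\delta+\alpha)m$. The only cosmetic difference is that the paper runs the chain forward from the assumption to the conclusion, whereas you first rewrite the target inequality into its equivalent form and then match it.
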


\begin{proof}
    The proof is a calculation. We can write
    \begin{align*}
        (1+\delta')k-(1+\delta)k&=(\delta'-\delta)k\\
        &>\left(\delta+(\delta+\alpha)d+\frac{(1+\delta)mn}{k(\ell+1)}-\delta\right)k\tag{Assumption on $\delta'$}\\
        &=(\delta+\alpha)dk+\frac{(1+\delta)mn}{\ell+1}\\
        &\ge (\delta+\alpha)m+\frac{(1+\delta)mn}{\ell+1}\tag{$m\le dk$}\\
        &=(1+\delta)m-(1-\alpha)m+\frac{(1+\delta)mn}{\ell+1}.
    \end{align*}
    Therefore, rearranging we obtain
    $$
    (1+\delta)\left[(\ell+1)(k+m)+mn\right]<(\ell+1)\left[(1+\delta')k+(1-\alpha) m\right]
    $$
    which completes the proof. 
\end{proof}

\begin{proof}[Proof of \Cref{lem:technical-kst}]
    Given a degree-$d$, $m$-edge, $n$-vertex graph $G$ and parameter $k$, we choose $\ell=\Theta(n)$ so that $\delta'>(\delta+4\eps)d+\delta+\frac{(1+\delta)mn}{k(\ell+1)}$ and set $s=(\ell+1)(k+m)+mn$. Note that such an $\ell$ exists since $k=\Theta(n)$ for constant-degree graphs. We now prove the two points separately.

    \paragraph{Yes case.}{
        In this case, we have by \Cref{thm:ell-isedge-ub} that there is a decision tree $T$ computing $\ell\text{-}\IsEdge:\zo^{n\ell+n}\to\zo$ whose size satisfies
        $$
        |T|\le (\ell+1)(k+m)+mn=s.
        $$
    }
    \paragraph{No case.}{
        Let $T$ be a decision tree satisfying $\error_{\ell\text{-}\mathcal{D}_G}(T,\ell\text{-}\IsEdge)\le \eps$. Then, using our assumptions on the parameters: 
        \begin{align*}
        (1+\delta)s&=(1+\delta)\left[(\ell+1)(k+m)+mn\right]\tag{Definition of $s$}\\
            &<(\ell+1)\left[(1+\delta')k+(1-4\eps) m\right]\tag{\Cref{prop:inequality} with $\alpha=4\eps$}\\
            &\le|T|\tag{\Cref{lem:ell-isedge-lb} with $k'=(1+\delta')k$}.
        \end{align*}
    }
    We've shown the desired bounds in both the Yes and No cases so the proof is complete.
\end{proof}

\subsection{Proof of \Cref{thm:KST}}
\label{subsec:KST}

Let $G$ be a constant degree-$d$, $n$-vertex graph and $k\in \N$, a parameter. Let $\mathcal{A}$ be the algorithm for {\sc DT-Learn} from the theorem statement. We'll use $\mathcal{A}$ to approximate {\sc Vertex Cover} on $G$.

\paragraph{The reduction.}{
First, we check whether $dk\ge m$. If $dk<m$, our algorithm outputs ``No'' as $G$ cannot have a vertex cover of size at most $k$. Otherwise, we proceed under the assumption that $dk\ge m$. Let $s\in \N$ be the quantity from \Cref{lem:technical-kst}. We will run $\mathcal{A}$ over the distribution $\ell\text{-}\mathcal{D}_G$ and on the function $\ell\text{-}\IsEdge_G:\zo^{N}\to\zo$ where $\ell$ is as in \Cref{lem:technical-kst}. Note that $N=n\ell+n=O(n^2)$ and $s=O(n^2)=O(N)$. See \Cref{fig:solving vc with dt learn} for the exact procedure we run. 
}

\begin{figure}[h!]
\begin{tcolorbox}[colback = white,arc=1mm, boxrule=0.25mm]
\vspace{3pt}
$\textsc{Vertex Cover}(k,(1+\delta')\cdot k)$:
\begin{itemize}[leftmargin=20pt,align=left]
\item[\textbf{Given:}] $G$, an $m$-edge degree-$d$ graph over $n$ vertices and $k\in \N$
\item[\textbf{Run:}]\textsc{DT-Learn}$(s,(1+\delta)\cdot s,\eps)$ for $t(N,1/\eps)$ time steps providing the learner with
\begin{itemize}[align=left,labelsep*=0pt]
    \item \textit{queries}: return $\ell\text{-}{\mathrm{\IsEdge}}(v^{(0)},\ldots,v^{(\ell)})$ for a query $(v^{(0)},\ldots,v^{(\ell)})\in\zo^N$; and
    \item \textit{random samples}: return $(\bv^{(0)},\ldots,\bv^{(\ell)})\sim\ell\text{-}\mathcal{D}_G$ for a random sample.
\end{itemize}
\item[$T_{\text{hyp}}\leftarrow$ decision tree output of the learner]
\item[$\eps_{\text{hyp}}\leftarrow \mathrm{error}_{\ell\text{-}\mathcal{D}_G}(T_{\text{hyp}},\ell\text{-}{\mathrm{\IsEdge}})$]
\item[\textbf{Output:}] \textsc{Yes} if and only if $|T_{\text{hyp}}|\le (1+\delta)\cdot s$ and $\eps_{\text{hyp}}\le \eps$
\end{itemize}
\vspace{3pt}
\end{tcolorbox}
\medskip
\caption{Using an algorithm for \textsc{DT-Learn} to solve \textsc{Vertex Cover}.}
\label{fig:solving vc with dt learn}
\end{figure}

\paragraph{Runtime.}{
Any query to $\ell\text{-}{\mathrm{\IsEdge}}$ can be answered in $O(N)$ time. Similarly, a random sample can be obtained in $O(N)$ time. The algorithm uses $O(N\cdot t(N,1/\eps))$ time to run \textsc{DT-Learn}. Finally, computing $\mathrm{error}_{\ell\text{-}\mathcal{D}_G}(T_{\text{hyp}},\ell\text{-}{\mathrm{\IsEdge}})$ takes $O(N^2)$. Since $t(N,1/\eps)\ge N$, the overall runtime is $O(N\cdot t(N,1/\eps))=O(n^2t(n^2,1/\eps))$. 
}

\paragraph{Correctness.}{
Correctness follows from \Cref{lem:technical-kst}. Specifically, in the \textbf{Yes case}, if $G$ has a vertex cover of size at most $k$, then there is a decision tree of size at most $s$ computing $\ell\text{-}{\mathrm{\IsEdge}}$.  Therefore, by the guarantees of {\sc DT-Learn}, we have $|T_{\text{hyp}}|\le (1+\delta)\cdot s$ and $\eps_{\text{hyp}}\le \eps$ and our algorithm correctly outputs ``Yes''.  

In the \textbf{No case}, every vertex cover of $G$ has size at least $(1+\delta')k$. If $\eps_{\text{hyp}}>\eps$ then our algorithm for {\sc Vertex Cover} correctly outputs ``No''. Otherwise, assume that $\eps_{\text{hyp}}\le \eps$. Then, \Cref{lem:technical-kst} ensures that $(1+\delta)s<|T_{\text{hyp}}|$ and so our algorithm correctly outputs ``No'' in this case as well.
}
This completes the proof.\hfill$\blacksquare$
\section{Proof of \Cref{lem:patchup-xor}}
\label{sec:patchup-xor}


We require the following generalization of a result from~\cite{Sav02}. Savick\'y proved that for functions $f_1:\zo^n\to\zo$ and $f_2:\zo^n\to\zo$, it holds that $\dtsize(f_1\oplus f_2)\ge \dtsize(f_1)\cdot\dtsize(f_2)$ \cite[Lemma 2.1]{Sav02}. We will use the following analogous statement for partial functions.


\begin{theorem}[Generalization of Savick\'y \citep{Sav02}]
    \label{thm:zero-error-xor}
    Let $f^{(1)},\ldots, f^{(r)}$ be functions, $f^{(i)}:D^{(i)}\to\zo$ with $D^{(i)}\sse\zo^{n^{(i)}}$ for each $i=1,\ldots, r$. Then,
    $$
    \dtsize(f^{(1)}\oplus\cdots\oplus f^{(r)})=\prod_{i=1}^r \dtsize(f^{(i)}).
    $$
\end{theorem}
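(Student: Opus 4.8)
The plan is to prove the two inequalities $\dtsize(f^{(1)}\oplus\cdots\oplus f^{(r)})\le \prod_{i=1}^r\dtsize(f^{(i)})$ and $\dtsize(f^{(1)}\oplus\cdots\oplus f^{(r)})\ge \prod_{i=1}^r\dtsize(f^{(i)})$ separately, following the strategy of Savick\'y's argument for total functions and handling the extra bookkeeping that partiality introduces. Throughout I regard each $f^{(i)}$ as acting on its own block of $n^{(i)}$ variables, so the blocks are pairwise disjoint and every query made by any decision tree lies inside exactly one block; I also assume each $D^{(i)}$ is nonempty (the degenerate case being trivial). The $\oplus$ here is the partial function on the product domain $D^{(1)}\times\cdots\times D^{(r)}$.

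For the upper bound I would compose optimal trees $T_1,\ldots,T_r$ for $f^{(1)},\ldots,f^{(r)}$ sequentially: run $T_1$ on block $1$; at each leaf $\lambda$ of $T_1$ — which, since $T_1$ computes $f^{(1)}$ on $D^{(1)}$, carries the value $f^{(1)}$ takes on every input of $D^{(1)}$ reaching $\lambda$ — graft a fresh copy of $T_2$; iterate through all $r$ blocks; finally label each leaf by the XOR of the $r$ accumulated bits. On any $(x^{(1)},\ldots,x^{(r)})\in D^{(1)}\times\cdots\times D^{(r)}$ this tree outputs $\bigoplus_i f^{(i)}(x^{(i)})$, and it has exactly $\prod_i|T_i|=\prod_i\dtsize(f^{(i)})$ leaves.

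For the lower bound it suffices to treat $r=2$, i.e.\ to show $\dtsize(f\oplus g)\ge\dtsize(f)\cdot\dtsize(g)$ for partial $f,g$ on disjoint blocks with nonempty domains; the general case then follows by induction on $r$, grouping $g\coloneqq f^{(2)}\oplus\cdots\oplus f^{(r)}$ into a single partial function on $D^{(2)}\times\cdots\times D^{(r)}$ and combining the $r=2$ bound with the inductive hypothesis. For $r=2$ I would take a minimum-size tree $T$ computing $f\oplus g$ over $D_f\times D_g$ and induct on $|T|=\dtsize(f\oplus g)$. If $|T|=1$ then $f\oplus g$ is constant on $D_f\times D_g$; fixing a point of $D_g$ forces $f$ constant on $D_f$, and symmetrically $g$ is constant, so $\dtsize(f)=\dtsize(g)=1$ and both sides equal $1$. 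Otherwise the root of $T$ queries some $x_i$; since $T$ is minimum-size, $x_i$ cannot be constant on $D_f\times D_g$ (a dead child could be pruned), hence $x_i$ is non-constant on whichever block it belongs to — say $f$'s block, without loss of generality — and both restricted domains $D_f|_{x_i=b}$ are nonempty. The subtree $T_b$ then computes $(f\oplus g)|_{x_i=b}=(f|_{x_i=b})\oplus g$ over $D_f|_{x_i=b}\times D_g$, so $|T_b|\ge\dtsize\big((f|_{x_i=b})\oplus g\big)$, and since $|T_b|<|T|$ the inductive hypothesis gives $|T_b|\ge\dtsize(f|_{x_i=b})\cdot\dtsize(g)$. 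Summing over $b$ and using the subadditivity bound $\dtsize(f)\le\dtsize(f|_{x_i=0})+\dtsize(f|_{x_i=1})$ — valid for partial functions by querying $x_i$ at the root of the union of optimal subtrees for the two restrictions, which together cover $D_f$ — yields $|T|=|T_0|+|T_1|\ge\dtsize(f)\cdot\dtsize(g)$.

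The parts requiring genuine care — and the main obstacle relative to Savick\'y's total-function proof — are the partial-domain identities: that a minimum-size tree never wastes a query on a variable that is constant over $D_f\times D_g$, that $(f\oplus g)|_{x_i=b}$ is genuinely $(f|_{x_i=b})\oplus g$ over the correctly product-restricted domain $D_f|_{x_i=b}\times D_g$, that the subtree $T_b$ is a legitimate tree for this restricted partial XOR, and that subadditivity of $\dtsize$ under a root query survives domain shrinkage. None of these is deep, but they are precisely the places where the argument would silently break if the product structure $D^{(1)}\times\cdots\times D^{(r)}$ of the domain were not tracked carefully, and making the $r=2$ reduction rigorous also requires checking that $f^{(2)}\oplus\cdots\oplus f^{(r)}$ is itself a well-formed partial function on a product domain so the induction on $r$ closes.
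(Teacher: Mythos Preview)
Your proof is correct and shares the paper's core idea: for the upper bound, stack optimal trees sequentially; for the lower bound, look at the root query of an optimal tree, recurse on the two subtrees, and close with the subadditivity bound $\dtsize(f)\le\dtsize(f|_{x_i=0})+\dtsize(f|_{x_i=1})$. The organizational choices differ slightly: the paper runs a single induction on the total number of input variables $\sum_i n^{(i)}$ and handles all $r$ blocks at once, whereas you first reduce to $r=2$ via an outer induction on $r$ and then induct on $|T|$ for the two-function case. Both routes collapse to the same recurrence. Your version is more explicit about the partial-domain bookkeeping---arguing that a minimum-size tree never queries a variable constant on the product domain, so both restricted domains stay nonempty---while the paper leaves this implicit in the minimality of $T$. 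Neither approach buys anything substantive that the other does not; the paper's is a bit terser, yours a bit more self-contained on the edge cases that partiality introduces.
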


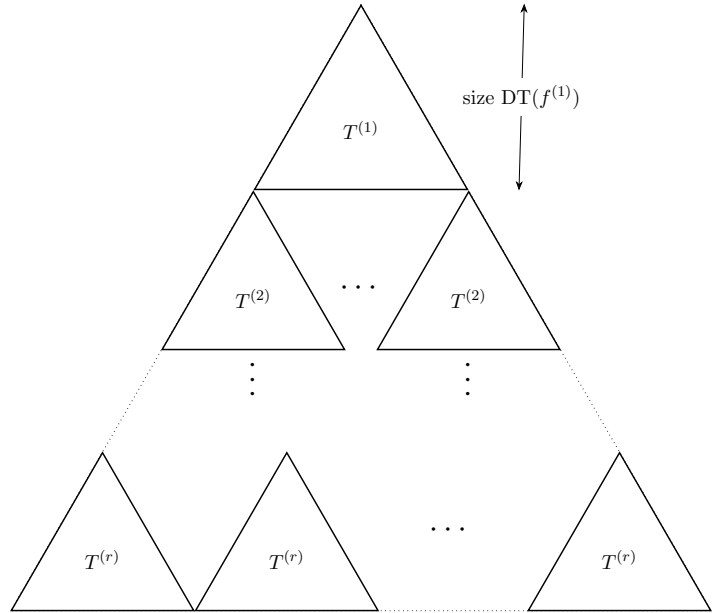
\begin{figure}[!ht]
    \centering
    \scalebox{0.7}{
    \begin{tikzpicture}[tips=proper]
        \node[isosceles triangle,
            draw,thick,
            isosceles triangle apex angle=60,
            rotate=90,
            minimum size=3.5cm] (T1) at (0,0){};
        
        \node[isosceles triangle,
            draw,thick,
            isosceles triangle apex angle=60,
            rotate=90,
            minimum size=3cm,
            anchor=east] (T2L) at (T1.left corner){};

        \node[isosceles triangle,
            draw,thick,
            isosceles triangle apex angle=60,
            rotate=90,
            minimum size=3cm,
            anchor=east] (T2R) at (T1.right corner){};
    
        \node[isosceles triangle,
            draw,
            dotted,
            isosceles triangle apex angle=60,
            rotate=90,
            minimum size=11.52cm,
            anchor=east] (T5) at (T1.east){};
        
        \node[isosceles triangle,
            draw,
            thick,
            isosceles triangle apex angle=60,
            rotate=90,
            minimum size=3cm,
            anchor=east] (T3LL) at ([yshift=-0.00cm]T5.99){};
        \node[isosceles triangle,
            draw,
            thick,
            isosceles triangle apex angle=60,
            rotate=90,
            minimum size=3cm,
            anchor=east] (T3L) at ([xshift=3.5cm,yshift=-0.00cm]T5.99){};
        \node[isosceles triangle,
            draw,thick,
            isosceles triangle apex angle=60,
            rotate=90,
            minimum size=3cm,
            anchor=east] (T3R) at ([yshift=-0.00cm]T5.261){};    
        
        \draw[] (0,-3) node [] {\LARGE $\cdots$};
        \draw[] (1.7,-7.6) node [] {\LARGE $\cdots$};
        \draw[] ([yshift=-0.5cm]T2L.west) node [rotate=90] {\LARGE $\cdots$};
        \draw[] ([yshift=-0.5cm]T2R.west) node [rotate=90] {\LARGE $\cdots$};

        \draw[] (T1.center) node [] {$T^{(1)}$};
        \draw[] (T2L.center) node [] {$T^{(2)}$};
        \draw[] (T2R.center) node [] {$T^{(2)}$};
        \draw[] (T3L.center) node [] {$T^{(r)}$};
        \draw[] (T3LL.center) node [] {$T^{(r)}$};
        \draw[] (T3R.center) node [] {$T^{(r)}$};

        \draw[{Stealth[scale=1]}-{Stealth[scale=1]}] ([xshift=3.1cm]T1.east) to node[midway,fill=white!30,scale=1] {size $\dtsize(f^{(1)})$} ([xshift=3cm]T1.west);
               
    \end{tikzpicture}
    }
  \captionsetup{width=.9\linewidth}

    \caption{Illustration of a stacked decision tree for a function $f^{(1)}\oplus\cdots\oplus f^{(r)}$. For an input $x=(x^{(1)},\ldots,x^{(r)})$, the decision tree sequentially computes $f^{(i)}(x^{(i)})$ for each $i=1,\ldots,r$ using a decision tree $T^{(i)}$ of size $\dtsize(f^{(i)})$ for $f^{(i)}$. Then at the leaf it outputs $f^{(1)}(x^{(1)})\oplus\cdots\oplus f^{(r)}(x^{(r)})$. The overall size of the decision tree is $\prod_{i=1}^r\dtsize(f^{(i)})$.}
    \label{fig:stacked}
\end{figure}

\begin{proof}
    First, the upper bound $\dtsize(f^{(1)}\oplus\cdots\oplus f^{(r)})\le \prod_{i=1}^r \dtsize(f^{(i)})$ follows by considering the decision tree for $f^{(1)}\oplus\cdots\oplus f^{(r)}$ which sequentially computes $f^{(i)}(x)$ for each $i=1,\ldots, r$ using a decision tree of size $\dtsize(f^{(i)})$. See \Cref{fig:stacked} for an illustration of this decision tree.

    The lower bound is by induction on $\sum_{i\in [r]} n^{(i)}$, the total number of input variables. In the base case, $n=0$ and the bound is trivially true: the constant function requires a decision tree of size $1$. For the inductive step, let $T$ be a decision tree for $f^{(1)}\oplus\cdots\oplus f^{(r)}$ of size $\dtsize(f^{(1)}\oplus\cdots\oplus f^{(r)})$, and let $x_{j}$ be the variable queried at the root. Assume without loss of generality that $x_{j}$ belongs to $f^{(1)}$. The subfunctions computed at the left and right branches of the root of $T$ are $f^{(1)}_{x_j\leftarrow 0}\oplus\cdots\oplus f^{(r)}$ and $f^{(1)}_{x_j\leftarrow 1}\oplus\cdots\oplus f^{(r)}$, respectively. Each is a function on $\left(\sum_{i\in [r]} n^{(i)}\right)-1$ many variables and so we can apply the inductive hypothesis. Therefore:
    \begin{align*}
        \dtsize(f^{(1)}\oplus\cdots\oplus f^{(r)})&=|T|\\
        &\ge\dtsize(f^{(1)}_{x_j\leftarrow 0}\oplus\cdots\oplus f^{(r)})+\dtsize(f^{(1)}_{x_j\leftarrow 1}\oplus\cdots\oplus f^{(r)})\tag{Root of $T$ is $x_j^{(1)}$}\\
        &\ge \dtsize(f^{(1)}_{x_j\leftarrow 0})\prod_{i=2}^r \dtsize(f^{(i)})+\dtsize(f^{(1)}_{x_j\leftarrow 1})\prod_{i=2}^r \dtsize(f^{(i)})\tag{Inductive hypothesis}\\
        &=\left(\dtsize(f^{(1)}_{x_j\leftarrow 0})+\dtsize(f^{(1)}_{x_j\leftarrow 1})\right)\prod_{i=2}^r \dtsize(f^{(i)})\\
        &\ge \prod_{i=1}^r \dtsize(f^{(i)})
    \end{align*}
    where the last step follows from the fact that $\dtsize(f^{(1)}_{x_j\leftarrow 0})+\dtsize(f^{(1)}_{x_j\leftarrow 1})\ge \dtsize(f^{(1)})$. Indeed, one can construct a decision tree for $f^{(1)}$ of size $\dtsize(f^{(1)}_{x_j\leftarrow 0})+\dtsize(f^{(1)}_{x_j\leftarrow 1})$ by querying $x_j$ at the root and on the left branch placing a tree for $f^{(1)}_{x_j\leftarrow 0}$ and on the right branch placing a tree for $f^{(1)}_{x_j\leftarrow 1}$.
\end{proof}

\begin{proof}[Proof of \Cref{lem:patchup-xor}]
    Let $\Pi$ denote the set of paths. For each path $\pi\in \Pi$, we write $\pi^{(i)}$ for $i\in [r]$ to denote the part of $\pi$ corresponding to the $i$th block of input variables. This way, the restricted function $f_\pi^{\oplus r}$ corresponds to the function $f_{\pi^{(1)}}\oplus\cdots\oplus f_{\pi^{(r)}}$ Then we have
\begin{align*}
    \dtsize(f_\pi^{\oplus r})&\le \sum_{\pi\in \Pi}\dtsize(f_\pi^{\oplus r})\\
    &= \sum_{\substack{\pi\in\Pi\\ f_\pi^{\oplus r} \text{ is constant}}}\dtsize(f_\pi^{\oplus r})+\sum_{\substack{\pi\in \Pi\\ f_\pi^{\oplus r} \text{ is nonconstant}}}\dtsize(f_\pi^{\oplus r})\\
    &= |T|+\sum_{\substack{\pi\in \Pi\\ f_\pi^{\oplus r} \text{ is nonconstant}}}\dtsize(f_\pi^{\oplus r})\tag{$\dtsize(f_\pi^{\oplus r})=1$ when $f_\pi^{\oplus r}$ is constant}\\
    &= |T|+\sum_{\substack{\pi\in \Pi\\ f_\pi^{\oplus r} \text{ is nonconstant}}}\prod_{i=1}^r\dtsize(f_{\pi^{(i)}})\tag{\Cref{thm:zero-error-xor}}\\
    &\le |T|+\sum_{\substack{\pi\in \Pi\\ f_\pi^{\oplus r} \text{ is nonconstant}}}\prod_{i=1}^r\left(1+\sum_{x: f_{\pi^{(i)}}(x)=1}\Cert(f_{\pi^{(i)}},x)\right)\tag{\Cref{claim:zero-error-patch-up}}
\end{align*}

We use the fact that $1+a\le 2\max\{1,a\}$ for all $a\in\R$ to rewrite the above in a simpler form, while suffering a factor of $2^r$:
\begin{align*}
    |T|+&\sum_{\substack{\pi\in \Pi\\ f_\pi^{\oplus r} \text{ is nonconstant}}}\prod_{i=1}^r\left(1+\sum_{x: f_{\pi^{(i)}}(x)=1}\Cert(f_{\pi^{(i)}},x)\right)\\
    &\le |T|+\sum_{\substack{\pi\in \Pi\\ f_\pi^{\oplus r} \text{ is nonconstant}}}\prod_{i=1}^r\left(2\max\{1,\sum_{x: f_{\pi^{(i)}}(x)=1}\Cert(f_{\pi^{(i)}},x)\}\right)\\
    &\le |T|+2^r\sum_{\substack{\pi\in \Pi\\ f_\pi^{\oplus r} \text{ is nonconstant}}}\prod_{i=1}^r\left(\sum_{x: f_{\pi^{(i)}}(x)=1}\max\{1,\Cert(f_{\pi^{(i)}},x)\}\right)\\
    &= |T|+2^r\sum_{\substack{\pi\in \Pi\\ f_\pi^{\oplus r} \text{ is nonconstant}}}\sum_{x^{(1)}:f_{\pi^{(1)}}(x^{(1)})=1}\cdots\sum_{x^{(r)}:f_{\pi^{(r)}}(x^{(r)})=1}\prod_{i=1}^r\max\{1,\Cert(f_{\pi^{(i)}},x^{(i)})\}\\
    &=|T|+2^r\sum_{\substack{x\in f^{-1}(1)^r \\ f_{\pi(x)}^{\oplus r}\textnormal{ is nonconstant} }}\prod_{i=1}^r\max\{1,\Cert(f_{\pi^{(i)}},x^{(i)})\}
\end{align*}

    where the last equality follows from the fact that $\Pi$ partitions the input space and so for every $x\in f^{-1}(1)^r$ there is exactly one path $\pi\in \Pi$ such that $f_{\pi^{(i)}}(x^{(i)})=1$ for all $i\in [r]$.
\end{proof}
\section{Proof of \Cref{lem:hard-distribution-lemma-xor}}
\label{sec:hard-distribution-xor}

\begin{lemma}[Error of $f^{\oplus r}$ conditioned on an input]
\label[lemma]{lem:xor-conditional-error}
    Let $T$ be a decision tree, $r\ge 1$, $f:D\to\zo$ be a nonconstant function, $x\in f^{-1}(1)^r$, $\mathcal{D}$ be the canonical hard distribution, and $\pi$ be the path followed by $x$ in $T$ and assume that $\Sens(f_{\pi(x)}^{\oplus r},x)\neq\varnothing$, then
    $$
    \Prx_{\bz\sim \mathcal{D}^{\otimes r}}[f^{\oplus r}(\bz)\neq T(\bz)\mid \text{ first sampling }x]\ge 2^{-r}\prod_{i=1}^r\frac{\max\{1,|\Sens(f_\pi^{(i)},x^{(i)})|\}}{\max\{1,|\Sens(f^{(i)},x^{(i)})|\}}.
    $$
\end{lemma}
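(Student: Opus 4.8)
The plan is to follow the template of the $r=1$ version \Cref{lem:error-conditioned-on-one}, upgrading its two‑way case split (on whether the leaf of $T$ that $x$ reaches is labeled $0$ or $1$) to a parity argument over subsets of $[r]$. Write $\pi = \pi(x)$ for the path $x$ follows in $T$, let $b \in \zo$ be the label of its leaf, and decompose $\pi = \pi^{(1)} \cup \cdots \cup \pi^{(r)}$ according to which block each queried variable lies in, so that $f_\pi^{\oplus r}$ is exactly $f_{\pi^{(1)}} \oplus \cdots \oplus f_{\pi^{(r)}}$. For a set $A \subseteq [r]$ define the event
\[
  E_A \;=\; \Big\{\, \bz^{(i)} \in \Sens(f_{\pi^{(i)}},x^{(i)}) \text{ for all } i \in A,\ \ \bz^{(i)} = x^{(i)} \text{ for all } i \notin A \,\Big\}.
\]
First I would record three structural facts. (i) Every $y \in \Sens(f_{\pi^{(i)}}, x^{(i)})$ is obtained from $x^{(i)}$ by flipping a coordinate that $\pi^{(i)}$ does not fix (flipping a fixed coordinate cannot change $f_{\pi^{(i)}}$), hence $y$ is still consistent with $\pi^{(i)}$, so $f(y) = f_{\pi^{(i)}}(y) \neq f_{\pi^{(i)}}(x^{(i)}) = 1$; in particular $\Sens(f_{\pi^{(i)}}, x^{(i)}) \subseteq \Sens(f, x^{(i)})$ and $f$ vanishes on it. (ii) Consequently, on $E_A$ the input $\bz$ is consistent with $\pi$ and reaches the same leaf as $x$, giving $T(\bz) = b$, while $f^{\oplus r}(\bz) = (r - |A|) \bmod 2$. (iii) The events $\{E_A\}_{A \subseteq [r]}$ are pairwise disjoint, since $x^{(i)} \notin \Sens(f_{\pi^{(i)}}, x^{(i)})$, so on their union $A$ is recovered from $\bz$ as $\{i : \bz^{(i)} \neq x^{(i)}\}$.

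By (ii), $E_A$ is an error event exactly when $b \not\equiv r - |A| \pmod 2$, i.e.\ when $|A| \equiv p \pmod 2$ for the fixed parity $p = (r + b + 1) \bmod 2$. Hence, conditioning on first sampling $x$, using (iii), the independence of the $r$ blocks of $\mathcal{D}^{\otimes r}$, and the fact that the canonical hard distribution (\Cref{def:canonical-hard-distribution}) keeps $\bz^{(i)} = x^{(i)}$ with probability $\tfrac12$ and otherwise resamples $\bz^{(i)}$ uniformly from $\Sens(f, x^{(i)})$,
\[
  \Prx_{\bz \sim \mathcal{D}^{\otimes r}}\big[f^{\oplus r}(\bz) \neq T(\bz) \mid \text{first sampling } x\big]
  \;\ge\; \sum_{\substack{A \subseteq [r] \\ |A| \equiv p}} \Pr[E_A]
  \;=\; 2^{-r}\!\!\!\sum_{\substack{A \subseteq [r] \\ |A| \equiv p}} \prod_{i \in A} \frac{s_i}{t_i},
\]
where $s_i = |\Sens(f_{\pi^{(i)}}, x^{(i)})|$ and $t_i = |\Sens(f, x^{(i)})|$; by (i), $s_i \le t_i$ and $s_i \ge 1 \Rightarrow t_i \ge 1$, and when $s_i = 0$ the factor $\Pr[\bz^{(i)} \in \Sens(f_{\pi^{(i)}}, x^{(i)})]$ is $0$, consistent with the displayed product.

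It remains to show the sum above is at least $\prod_{i=1}^r \frac{\max\{1, s_i\}}{\max\{1, t_i\}}$. Let $I = \{i : s_i \ge 1\}$; by (i) the hypothesis $\Sens(f_{\pi(x)}^{\oplus r}, x) \neq \varnothing$ is equivalent to $I \neq \varnothing$. Every term with $A \not\subseteq I$ vanishes, and since $I \neq \varnothing$, at least one of $I$ or $I \setminus \{i_0\}$ (for any chosen $i_0 \in I$) has size $\equiv p \pmod 2$; as dropping such an index multiplies $\prod_{i \in A}\frac{s_i}{t_i}$ by $\frac{t_{i_0}}{s_{i_0}} \ge 1$, the sum is at least $\prod_{i \in I}\frac{s_i}{t_i}$. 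Finally $\prod_{i=1}^r \frac{\max\{1,s_i\}}{\max\{1,t_i\}} = \prod_{i \in I}\frac{s_i}{t_i} \cdot \prod_{i \notin I}\frac{1}{\max\{1,t_i\}} \le \prod_{i \in I}\frac{s_i}{t_i}$, which closes the argument. The crux — and where the $r=1$ proof's simple $\min$-of-two-probabilities step breaks down — is the interplay between the leaf label $b$ and the parity of the XOR; the resolution is the freedom to pick a "correcting" subset $A^\star \subseteq I$ of the appropriate parity whose weight $\prod_{i \in A^\star} s_i/t_i$ still dominates $\prod_{i \in I} s_i/t_i$, which simultaneously absorbs the $\max\{1,\cdot\}$ bookkeeping for blocks that are $f$-sensitive but become insensitive once restricted by $\pi^{(i)}$.
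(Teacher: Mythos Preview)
Your argument is correct, but it is more elaborate than the paper's. In your notation, the paper simply picks a single index $j\in I$ (which exists by the hypothesis $\Sens(f_{\pi(x)}^{\oplus r},x)\neq\varnothing$) and applies the $r=1$ ``$\min$ of two probabilities'' step verbatim to the pair $E_\varnothing$ and $E_{\{j\}}$: these have opposite $|A|$-parities, so one of them is an error event regardless of the leaf label $b$, whence
\[
\Pr[\text{error}\mid\text{first sampling }x]\;\ge\;\min\bigl\{\Pr[E_\varnothing],\ \Pr[E_{\{j\}}]\bigr\}\;\ge\;2^{-r}\,\frac{s_j}{t_j}.
\]
Since every factor $\max\{1,s_i\}/\max\{1,t_i\}\le 1$, the full product can then be appended for free. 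So the $r=1$ argument does \emph{not} break down---it lifts directly once one commits to flipping in a single block---and your ``crux'' paragraph diagnoses a difficulty that is not actually present. Your full parity-over-subsets decomposition is sound and in principle sums many error events rather than keeping one, but you discard that potential sharpness at the end by retaining only the single term $A^\star\in\{I,\,I\setminus\{i_0\}\}$; the paper's choice $A^\star\in\{\varnothing,\{j\}\}$ gets there with less machinery.
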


\begin{proof}
    Let $y\in \Sens(f_\pi^{\oplus r},x)$ and let $j\in [r]$ be such that $y^{(j)}\in \Sens(f_\pi^{(j)},x^{(j)})$. Since $f^{\oplus r}(y)\neq f^{\oplus r}(x)$, we have
    \begin{align*}
        \Prx_{\bz\sim \mathcal{D}^{\otimes r}}&[f^{\oplus r}(\bz)\neq T(\bz)\mid \text{ first sampling }x]\\
        &\ge \min\big\{ \Prx_{\bz\sim \mathcal{D}^{\otimes r}}[\bz=x\mid \text{ first sampling }x],\\
        &\qquad\Prx_{\bz\sim \mathcal{D}^{\otimes r}}[\bz=(x^{(1)},...,y,...,x^{(r)})\text{ for }y\in\Sens(f_\pi^{(j)},x^{(j)})\mid \text{ first sampling }x] \big\}\tag{Either $x$ or $(x^{(1)},...,y,...,x^{(r)})$ makes an error}\\
        &= \min\left\{2^{-r},2^{-(r-1)}\frac{|\Sens(f_\pi^{(j)},x^{(j)})|}{|\Sens(f^{(j)},x^{(j)})|}\right\}\tag{Definition of $\mathcal{D}^{\otimes r}$}\\
        &\ge 2^{-r}\frac{|\Sens(f_\pi^{(j)},x^{(j)})|}{|\Sens(f^{(j)},x^{(j)})|}\\
        &\ge 2^{-r}\frac{|\Sens(f_\pi^{(j)},x^{(j)})|}{|\Sens(f^{(j)},x^{(j)})|}\cdot\prod_{i\in [r]\setminus\{j\}}\frac{\max\{1,|\Sens(f_\pi^{(i)},x^{(i)})|\}}{\max\{|\Sens(f^{(i)},x^{(i)})|\}}\tag{${\max\{1,|\Sens(f_\pi^{(i)},x^{(i)})|\}}\le \max\{1,|\Sens(f^{(i)},x^{(i)})|\} $ for all $i$}\\
        &=2^{-r}\prod_{i=1}^r\frac{\max\{1,|\Sens(f_\pi^{(i)},x^{(i)})|\}}{\max\{1,|\Sens(f^{(i)},x^{(i)})|\}}\tag{$|\Sens(f_\pi^{(j)},x^{(j)})|\neq 0$ by assumption}.
    \end{align*}
\end{proof}

We can now prove the main result of this section.

\begin{proof}[Proof of \Cref{lem:hard-distribution-lemma-xor}]
    As in the case of \Cref{lem:hard-distribution-lemma}, we assume without loss of generality that $C=f^{-1}(1)$. We lower bound the error as follows 
    \begin{align*}
        \eps&\ge \Prx_{\bz\sim\mathcal{D}^{\otimes r}}[T(\bz)\neq f^{\oplus r}(\bz)]\\
        &=\sum_{\substack{x\in C^r \\ \Sens(f_{\pi(x)}^{\oplus r},x)\neq\varnothing }}\frac{1}{|C|^r}\cdot\Prx_{\bz\sim\mathcal{D}^{\otimes r}}[T(\bz)\neq f^{\oplus r}(\bz)\mid \text{ first sampling }x\text{ from } C^r]\\
        &\ge (2|C|)^{-r}\sum_{\substack{x\in C^r \\ \Sens(f_{\pi(x)}^{\oplus r},x)\neq\varnothing }}\prod_{i=1}^r\frac{\max\{1,|\Sens(f_\pi^{(i)},x^{(i)})|\}}{\max\{1,|\Sens(f^{(i)},x^{(i)})|\}}\tag{\Cref{lem:xor-conditional-error}}\\
        &\ge (2|C|\cdot\Sens(f))^{-r}\sum_{\substack{x\in C^r \\ \Sens(f_{\pi(x)}^{\oplus r},x)\neq\varnothing }}\prod_{i=1}^r\max\{1,|\Sens(f_\pi^{(i)},x^{(i)})|\}\tag{$|\Sens(f^{(i)},x^{(i)})|\le\Sens(f)$}
    \end{align*}
    which completes the proof.
\end{proof}

\section{Proof of \Cref{thm:dt-learn-xor}}
\label{sec:dt-learn-xor}



Before proving this theorem, we establish a few properties of $\ell\text{-}\IsEdge^{\oplus r}$ which will be helpful for our analysis.

\begin{theorem}[Decision tree size lower bound for computing $\ell\text{-}\IsEdge^{\oplus r}$]
\label{thm:ellisedge-lb-xor}
Let $T$ be a decision tree for $\ell\text{-}\IsEdge_G^{\oplus r}$ with $\ell, r\ge 1$. Let $k$ be the minimum vertex cover size of $G$ and let $m$ denote the number of edges of $G$. Then, if $\error_{\ell\text{-}\mathcal{D}_G^{\otimes r}}(T,\ell\text{-}\IsEdge^{\oplus r})\le \eps$ for the canonical hard distribution $\ell\text{-}\mathcal{D}_G$, we have
    $$
    |T|\ge \big[(\ell+1)(k+m)\big]^r-\eps \big[8m(\ell+1)\big]^r
    $$
\end{theorem}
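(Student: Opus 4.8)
The plan is to mirror the structure of the proof of \Cref{lem:ell-isedge-lb} (the $r=1$ case), but using the XOR-ed versions of the patch up and hard distribution lemmas instead. Concretely, I would combine three ingredients: the XOR-ed hard distribution lemma (\Cref{lem:hard-distribution-lemma-xor}) applied with $C = \ell\text{-}\IsEdge_G^{-1}(1)$ and $f = \ell\text{-}\IsEdge_G$, the XOR-ed patch up lemma (\Cref{lem:patchup-xor}) to relate the surviving sensitivity mass to $\dtsize((\ell\text{-}\IsEdge_G)^{\oplus r})$, and the structural facts about $\ell\text{-}\IsEdge_G$ already established: $|\,\ell\text{-}\IsEdge_G^{-1}(1)\,| = m$ (exactly $m$ edge indicators), $\Sens(\ell\text{-}\IsEdge_G) = 2(\ell+1)$ (\Cref{prop:sens-ell-isedge}), the zero-error lower bound $\dtsize(\ell\text{-}\IsEdge_G, \ell\text{-}D_G)\ge (\ell+1)(k+m)$ (\Cref{lem:ell-isedge-lb-zero-error}), and the identity $\Cert(\ell\text{-}\IsEdge_\pi,x) = |\Sens(\ell\text{-}\IsEdge_\pi,x)|$ for $1$-inputs (\Cref{prop:sens-equals-cert}). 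The key point that makes the parameters work out is that for $\ell\text{-}\IsEdge$, certificate complexity and sensitivity of $1$-inputs coincide \emph{and} $\max\{1,\cdot\}$ is harmless here since every nonconstant restriction of a $1$-input still has at least one sensitive $1$-coordinate — this is exactly why no constant-factor loss in the exponent is incurred.

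The steps, in order. First, by \Cref{thm:zero-error-xor} together with the $r=1$ zero-error bound, $\dtsize((\ell\text{-}\IsEdge_G)^{\oplus r}, (\ell\text{-}D_G)^{\otimes r}) = \dtsize(\ell\text{-}\IsEdge_G,\ell\text{-}D_G)^r \ge [(\ell+1)(k+m)]^r$. Second, apply \Cref{lem:patchup-xor}: rearranging gives
\[
\sum_{x\in f^{-1}(1)^r}\prod_{i=1}^r \max\{1,\Cert(f_{\pi^{(i)}},x^{(i)})\} \ \ge\ 2^{-r}\big(\dtsize(f^{\oplus r}) - |T|\big).
\]
Third, since $\Cert(f_\pi,x) = |\Sens(f_\pi,x)|$ on $1$-inputs by \Cref{prop:sens-equals-cert}, the left-hand side equals $\sum_{x\in C^r}\prod_i \max\{1,|\Sens(f_{\pi^{(i)}},x^{(i)})|\}$, where $C = f^{-1}(1)$, and the terms with $\Sens(f_{\pi(x)}^{\oplus r},x) = \varnothing$ only make this larger, so I can also lower-bound the restricted sum that appears in \Cref{lem:hard-distribution-lemma-xor} by it — actually I want the reverse direction, so I would instead note that \Cref{lem:hard-distribution-lemma-xor} gives $\eps \ge (2|C|\Sens(f))^{-r}\sum_{x\in C^r,\ \Sens(f_{\pi(x)}^{\oplus r},x)\ne\varnothing}\prod_i \max\{1,|\Sens(f_{\pi^{(i)}},x^{(i)})|\}$, and then argue that the \emph{full} sum over all $x\in C^r$ differs from the restricted one only by terms where the product is forced (when $\Sens(f_{\pi(x)}^{\oplus r},x)=\varnothing$, each factor $\max\{1,|\Sens(f_{\pi^{(i)}},x^{(i)})|\}$ — hmm). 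The clean way: observe that when $\Sens(f_{\pi(x)}^{\oplus r},x)=\varnothing$ we have $|\Sens(f_{\pi^{(i)}},x^{(i)})|=0$ for all $i$, so $\prod_i\max\{1,|\Sens(f_{\pi^{(i)}},x^{(i)})|\}=1$, hence the unrestricted patch-up sum exceeds the restricted hard-distribution sum by at most $|C|^r = m^r$ — but that correction term is of lower order and can be absorbed; more carefully, one shows the restricted sum is $\ge \dtsize(f^{\oplus r})/2^r - |T|/2^r - (\text{small})$. Fourth, plug in $|C| = m$, $\Sens(f) = 2(\ell+1)$: \Cref{lem:hard-distribution-lemma-xor} yields $\eps \ge (4m(\ell+1))^{-r}\cdot(\text{restricted sum})$, so restricted sum $\le \eps(4m(\ell+1))^r$; combining with the patch-up bound, $2^{-r}(\dtsize(f^{\oplus r}) - |T|) \le \eps(4m(\ell+1))^r + m^r$, i.e. $|T| \ge \dtsize(f^{\oplus r}) - 2^r\eps(4m(\ell+1))^r - (2m)^r \ge [(\ell+1)(k+m)]^r - \eps[8m(\ell+1)]^r - (2m)^r$, and since $(2m)^r$ is dominated by the $\eps[8m(\ell+1)]^r$ slack (or can be folded into a slightly worse constant), we get the claimed bound $|T|\ge [(\ell+1)(k+m)]^r - \eps[8m(\ell+1)]^r$.

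The main obstacle I anticipate is reconciling the index sets of the two lemmas precisely enough to get the exact stated bound without an extra additive term: \Cref{lem:patchup-xor} sums over all $x\in f^{-1}(1)^r$ with $f_{\pi(x)}^{\oplus r}$ nonconstant, while \Cref{lem:hard-distribution-lemma-xor} sums only over $x$ with $\Sens(f_{\pi(x)}^{\oplus r},x)\neq\varnothing$. For $\ell\text{-}\IsEdge$ these two conditions actually coincide — a $1$-input $x$ has a nonconstant restriction $f_{\pi(x)}^{\oplus r}$ precisely when some coordinate block retains a sensitive $1$-coordinate, which is exactly $\Sens(f_{\pi(x)}^{\oplus r},x)\neq\varnothing$ — because every sensitive neighbor of an edge indicator that is reachable within $D_G$ is obtained by flipping a $1$-coordinate. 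So I would first record this equivalence as a short observation (leveraging \Cref{prop:sens-equals-cert} and the fact that flipping a $1$-coordinate of an edge indicator lands in $\ell\text{-}D_G$), after which the two sums are literally identical and the chain of inequalities closes cleanly with no lower-order correction at all. The rest is the arithmetic of substituting $|C|=m$ and $\Sens=2(\ell+1)$ and the factor-$2^r$ bookkeeping, which is routine.
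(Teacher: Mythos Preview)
Your final plan (last paragraph) is correct and is essentially the paper's proof. The paper's chain is: apply \Cref{lem:hard-distribution-lemma-xor} with $|C|=m$ and $\Sens(f)=2(\ell+1)$ to get $\eps\ge (4m(\ell+1))^{-r}\sum_{\Sens\ne\varnothing}\prod_i\max\{1,|\Sens|\}$; use \Cref{prop:sens-equals-cert} to replace $|\Sens|$ by $\Cert$; observe that for $\ell\text{-}\IsEdge$ the condition ``$f_{\pi(x)}^{\oplus r}$ nonconstant'' implies ``$\Sens(f_{\pi(x)}^{\oplus r},x)\ne\varnothing$'' (if some block's restriction is nonconstant then not all $1$-coordinates of that block's edge indicator were queried, so a sensitive neighbor survives), hence the patch-up sum over nonconstant $x$ is dominated term-by-term by the hard-distribution sum; apply \Cref{lem:patchup-xor} to get $\eps\ge (8m(\ell+1))^{-r}(\dtsize(f^{\oplus r})-|T|)$; and finish with \Cref{thm:zero-error-xor} plus \Cref{lem:ell-isedge-lb-zero-error}. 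Your observation that the two index sets actually coincide is correct but only the one inclusion is used; once you have it there is no $m^r$ or $(2m)^r$ correction term whatsoever, so the detour in your Step~4 should simply be dropped.
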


\begin{proof}
    Since $\ell\text{-}\IsEdge$ has $m$ many $1$-inputs over the dataset $\ell\text{-}D_G$ and $\Sens(\ell\text{-}\IsEdge)=2(\ell+1)$, we have
    \begin{align*}
        \eps&\ge \left(\frac{1}{4m(\ell+1)}\right)^r\sum_{\substack{x\in \ell\text{-}D_G^r\\ \Sens(\ell\text{-}\IsEdge_{\pi(x)}^{\oplus r},x)\neq\varnothing}}\prod_{i=1}^r\max\{1,\Sens(\ell\text{-}\IsEdge_{\pi(x)}^{(i)},x^{(i)})\}\tag{\Cref{lem:hard-distribution-lemma-xor}}\\
        &=\left(\frac{1}{4m(\ell+1)}\right)^r\sum_{\substack{x\in \ell\text{-}D_G^r\\ \Sens(\ell\text{-}\IsEdge_{\pi(x)}^{\oplus r},x)\neq\varnothing}}\prod_{i=1}^r\max\{1,\Cert(\ell\text{-}\IsEdge_{\pi(x)}^{(i)},x^{(i)})\}\tag{\Cref{prop:sens-equals-cert}}\\
        &\ge \left(\frac{1}{4m(\ell+1)}\right)^r\sum_{\substack{x\in \ell\text{-}D_G^r\\ \ell\text{-}\IsEdge_{\pi(x)}^{\oplus r} \text{ is nonconstant}    }}\prod_{i=1}^r\max\{1,\Cert(\ell\text{-}\IsEdge_{\pi(x)}^{(i)},x^{(i)})\}\\
        &\ge \left(\frac{1}{8m(\ell+1)}\right)^r(\dtsize(\ell\text{-}\IsEdge^{\oplus r})-|T|)\tag{\Cref{lem:patchup-xor}}.
    \end{align*}
    In this derivation, we used the fact that if an input $x\in \ell\text{-}D_G^r$ is such that $\ell\text{-}\IsEdge_{\pi(x)}^{\oplus r}$ is nonconstant, then it must be the case that there is some block $i\in [r]$ where the path $\pi$ does not fully restrict the sensitive coordinates in the edge indicator for block $i$, and therefore it must also be the case that $\Sens(\ell\text{-}\IsEdge_{\pi(x)}^{\oplus r},x)\neq\varnothing$. Now, we can rearrange this lower bound on $\eps$ to obtain:
    \begin{align*}
        |T|&\ge\dtsize(\ell\text{-}\IsEdge^{\oplus r})-\eps \big[8m(\ell+1)\big]^r\\
        &=\dtsize(\ell\text{-}\IsEdge)^r-\eps \big[8m(\ell+1)\big]^r\tag{\Cref{thm:zero-error-xor}}\\
        &\ge \big[(\ell+1)(k+m)\big]^r-\eps \big[8m(\ell+1)\big]^r\tag{\Cref{lem:ell-isedge-lb-zero-error}}
    \end{align*}
    which completes the proof.
\end{proof}

The following proposition allows us to translate the above lower bound into a slightly simpler form.
\begin{proposition}
\label{prop:inequality-for-xor}
    For all $a,b,r>0$ such that $a\ge b$, we have $a^r-b^r\ge (a-b)^r$.
\end{proposition}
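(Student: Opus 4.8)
The plan is to reduce the claim to the superadditivity inequality $(x+y)^r \ge x^r + y^r$ for nonnegative reals $x,y$ and exponent $r\ge 1$, applied with $x=a-b\ge 0$ and $y=b>0$. (As literally stated the proposition needs $r\ge 1$: for $r<1$ it fails, e.g.\ $a=4,b=1,r=\tfrac12$ gives $2-1=1 < \sqrt 3$. In the paper it is only ever invoked with integer $r\ge 1$, so I take $r\ge 1$ throughout; I would add a parenthetical remark to this effect.)

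First I would dispose of the degenerate cases. If $a=0$ then $b=0$ and both sides vanish; if $b=0$ or $b=a$ the inequality is an equality. So assume $0<b<a$, and in particular $a-b>0$ and $a>0$.

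The one substantive ingredient is the elementary fact that for every $u\in[0,1]$ and $r\ge 1$ we have $u^r\le u$ (since $u^r=u\cdot u^{r-1}$ and $u^{r-1}\le 1$). Applying this with $u=\tfrac{x}{x+y}$ and with $u=\tfrac{y}{x+y}$, which are nonnegative and sum to $1$ because $x+y>0$, gives
\[
\Big(\tfrac{x}{x+y}\Big)^{r}+\Big(\tfrac{y}{x+y}\Big)^{r}\ \le\ \tfrac{x}{x+y}+\tfrac{y}{x+y}\ =\ 1,
\]
and multiplying through by $(x+y)^r$ yields $x^r+y^r\le (x+y)^r$. Substituting $x=a-b$ and $y=b$ gives $(a-b)^r+b^r\le a^r$, which rearranges to $a^r-b^r\ge (a-b)^r$, as desired.

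I do not anticipate any real obstacle: this is a one-line normalization argument once the convexity-type bound $u^r\le u$ on $[0,1]$ is recorded. An equally short alternative, valid for integer $r$, is induction: from $a^r-b^r\ge (a-b)^r$ one gets $a^{r+1}-b^{r+1}\ge a\,a^r-a\,b^r=a(a^r-b^r)\ge a(a-b)^r\ge (a-b)(a-b)^r=(a-b)^{r+1}$, using $b\le a$ twice. The only points requiring care are the degenerate cases and the standing assumption $r\ge 1$.
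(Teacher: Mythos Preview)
Your proof is correct and follows essentially the same normalization argument as the paper: the paper simply asserts $1 \ge (1-b/a)^r + (b/a)^r$ and multiplies through by $a^r$, which is exactly your superadditivity bound after the substitution $x=a-b$, $y=b$ (and in fact you supply the justification the paper omits). Your remark that $r\ge 1$ is needed is apt---the paper's statement of the proposition omits this hypothesis, but its proof requires it just as yours does, and the proposition is only ever invoked with integer $r\ge 1$.
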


\begin{proof}
    Since $a\ge b$, we have
    $$
    1\ge \left(1-\frac{b}{a}\right)^r+\left(\frac{b}{a}\right)^r.
    $$
    Multiplying both sides of the inequality by $a^r$ and rearranging gives the desired bound.
\end{proof}

With \Cref{thm:ellisedge-lb-xor,prop:inequality-for-xor}, we are able to prove the main technical lemma used for our reduction.

\begin{lemma}[Main technical lemma for \Cref{thm:dt-learn-xor}]
\label{lem:technical-xor}
For all $\delta,\delta',\eps >0$ and $d,k,r\ge 1$, the following holds. Given a constant degree-$d$ graph $G$ with $m$ edges, $n$ vertices, and parameter $k$, there is a choice of $\ell=\Theta(n)$ and a polynomial-time computable quantity $s\in\N$ such that so long as $\delta'>(\delta+8\eps^{1/r})d+\delta$, $dk\ge m$, and $\eps<2^{-3r}$ we have:
\begin{itemize}
    \item \textbf{Yes case}: if $G$ has a vertex cover of size at most $k$, then there is a decision tree of size at most $s$ which computes $\ell\text{-}\IsEdge^{\oplus r}:\zo^{r(n\ell+n)}\to\zo$; and
    \item \textbf{No case}: if every vertex cover of $G$ has size at least $(1+\delta')k$, then $(1+\delta)^r s<|T|$ for any decision tree $T$ with $\error_{\ell\text{-}\mathcal{D}_G^{\otimes r}}(T,\ell\text{-}\IsEdge)\le \eps$.
\end{itemize}
\end{lemma}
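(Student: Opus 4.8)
\textbf{Proof plan for \Cref{lem:technical-xor}.} The plan is to run the same reduction template as in \Cref{lem:technical-kst}, but now applied to $\ell\text{-}\IsEdge_G^{\oplus r}$ rather than $\ell\text{-}\IsEdge_G$, using the XOR-ed upper and lower bounds established above. First I would fix the padding parameter $\ell = \Theta(n)$ exactly as in the proof of \Cref{lem:technical-kst}: since $G$ is degree-$d$ we have $k = \VC(G) = \Theta(m) = \Theta(n)$ by \Cref{fact:constant degree graphs have large vc size}, so we may pick $\ell$ large enough that $\delta' > (\delta + 8\eps^{1/r})d + \delta + \frac{(1+\delta)mn}{k(\ell+1)}$; the hypothesis $\delta' > (\delta + 8\eps^{1/r})d + \delta$ leaves room for the vanishing $\frac{(1+\delta)mn}{k(\ell+1)}$ term once $\ell$ is taken suitably large. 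Then set $s \coloneqq \big[(\ell+1)(k+m) + mn\big]$, which matches the base-case parameter from \Cref{lem:technical-kst} and is polynomial-time computable; note the hypothesis decision tree for the XOR problem will have size budget $(1+\delta)^r s$, while the witnessing tree in the Yes case has size $s^r \le ((1+\delta)s)^r / (1+\delta)^r$... actually more carefully, the Yes-case tree has size at most $\big[(\ell+1)(k+m)+mn\big]^r = s^r$, so I should instead take the statement's ``size at most $s$'' to refer to the tree for $\ell\text{-}\IsEdge^{\oplus r}$ having size bounded by the quantity I will call $s$, and I would define $s$ directly as $\big[(\ell+1)(k+m)+mn\big]^r$ so that the Yes case reads cleanly; the downstream reduction then invokes \textsc{DT-Learn} with target size $s$ and hypothesis size $(1+\delta)^r \cdot s$ ... but to keep alignment with the non-XOR statement I will instead carry the ``per-copy'' quantity and phrase things in terms of its $r$-th power where needed. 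Either way this is just bookkeeping.

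For the \textbf{Yes case}: if $G$ has a vertex cover of size at most $k$, then by \Cref{thm:ell-isedge-ub} there is a decision tree of size at most $(\ell+1)(k+m)+mn$ computing $\ell\text{-}\IsEdge_G : \zo^{n\ell+n} \to \zo$, and stacking $r$ copies of this tree (as in \Cref{fig:stacked}, via the upper-bound direction of \Cref{thm:zero-error-xor}) yields a tree of size at most $\big[(\ell+1)(k+m)+mn\big]^r$ computing $\ell\text{-}\IsEdge_G^{\oplus r}$. This is exactly $s$ in my definition, so the Yes case is immediate.

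For the \textbf{No case}: suppose every vertex cover of $G$ has size at least $(1+\delta')k$, and let $T$ be any decision tree with $\error_{\ell\text{-}\mathcal{D}_G^{\otimes r}}(T, \ell\text{-}\IsEdge^{\oplus r}) \le \eps$. By \Cref{thm:ellisedge-lb-xor} applied with minimum vertex cover size $(1+\delta')k$,
\[
|T| \ge \big[(\ell+1)((1+\delta')k + m)\big]^r - \eps\big[8m(\ell+1)\big]^r.
\]
Now I would apply \Cref{prop:inequality-for-xor} with $a = (\ell+1)((1+\delta')k+m)$ and $b = 8m(\ell+1)\eps^{1/r}$ — checking $a \ge b$ holds since $\eps < 2^{-3r}$ forces $\eps^{1/r} < 1/8$ and $(1+\delta')k + m \ge m$ — to conclude $|T| \ge \big[(\ell+1)((1+\delta')k + m) - 8m(\ell+1)\eps^{1/r}\big]^r = \big[(\ell+1)((1+\delta')k + (1 - 8\eps^{1/r})m)\big]^r$. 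It then remains to show the base-case inequality $(1+\delta)\big[(\ell+1)(k+m)+mn\big] < (\ell+1)\big[(1+\delta')k + (1-8\eps^{1/r})m\big]$, which is precisely \Cref{prop:inequality} instantiated with $\alpha = 8\eps^{1/r}$ (this is why \Cref{prop:inequality} was stated with a free parameter $\alpha$, and why the $\ell$ above was chosen so that $\delta' > (\delta+\alpha)d + \delta + \frac{(1+\delta)mn}{k(\ell+1)}$ with $\alpha = 8\eps^{1/r}$). Raising both sides of that strict inequality to the $r$-th power gives $(1+\delta)^r s < |T|$ as desired, recalling $s = \big[(\ell+1)(k+m)+mn\big]^r$.

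The main obstacle, such as it is, is not any single hard step but keeping the quantifiers and the choice of $\ell$ consistent: one must verify that the ``slack'' in the hypothesis $\delta' > (\delta + 8\eps^{1/r})d + \delta$ is exactly enough to absorb the extra $\frac{(1+\delta)mn}{k(\ell+1)}$ term demanded by \Cref{prop:inequality}, which works because $k = \Theta(n)$ for bounded-degree graphs lets $\frac{mn}{k(\ell+1)} = O(\frac{n}{\ell})$ be driven below any positive constant by taking $\ell = \Theta(n)$ with a large enough hidden constant — and that the condition $\eps < 2^{-3r}$ is what makes $8\eps^{1/r} < 1$ so that $(1 - 8\eps^{1/r})m$ is a genuine (positive) fraction of $m$ and so that $a \ge b$ in the application of \Cref{prop:inequality-for-xor}. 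Everything else is a direct substitution into the three already-proved ingredients \Cref{thm:ell-isedge-ub}, \Cref{thm:ellisedge-lb-xor}, and \Cref{prop:inequality,prop:inequality-for-xor}.
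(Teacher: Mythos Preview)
Your proposal is correct and follows essentially the same approach as the paper: choose $\ell=\Theta(n)$ so that the strict hypothesis on $\delta'$ absorbs the extra $\frac{(1+\delta)mn}{k(\ell+1)}$ term, set $s=\big[(\ell+1)(k+m)+mn\big]^r$, handle the Yes case via \Cref{thm:ell-isedge-ub} and the upper-bound direction of \Cref{thm:zero-error-xor}, and handle the No case by chaining \Cref{thm:ellisedge-lb-xor}, \Cref{prop:inequality-for-xor} (with $a=(\ell+1)((1+\delta')k+m)$, $b=8m(\ell+1)\eps^{1/r}$), and \Cref{prop:inequality} with $\alpha=8\eps^{1/r}$. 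Your explicit check that $a\ge b$ using $\eps<2^{-3r}$ is exactly the role that hypothesis plays, and your initial hesitation about the definition of $s$ resolves correctly to the $r$-th power, which is what the paper uses.
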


\begin{proof}
    Given a degree-$d$, $m$-edge, $n$-vertex graph $G$ and parameter $k$, we choose $\ell=\Theta(n)$ so that $\delta'>(\delta+8\eps^{1/r})d+\delta+\frac{(1+\delta)mn}{k(\ell+1)}$ and set $s=\left[(\ell+1)(k+m)+mn\right]^r$. Note that such an $\ell$ exists since $k=\Theta(n)$ for constant-degree graphs. We now prove the two points separately.

    \paragraph{Yes case.}{
        In this case, we have
        \begin{align*}
            \dtsize(\ell\text{-}\IsEdge^{\oplus r})&=\dtsize(\ell\text{-}\IsEdge)^r\tag{\Cref{thm:zero-error-xor}}\\
        &\le \left[(\ell+1)(k+m)+mn\right]^r=s\tag{\Cref{thm:ell-isedge-ub}}.
        \end{align*}
    }
    \paragraph{No case.}{
        In this case, let $T$ be a decision tree with $\error_{\ell\text{-}\mathcal{D}_G^{\otimes r}}(T,\ell\text{-}\IsEdge)\le \eps$. Then we have
        \begin{align*}
            (1+\delta)^r s&=\bigg[(1+\delta)[(\ell+1)(k+m)+mn]\bigg]^r\\
            &<\left[(\ell+1)(1+\delta')k+(\ell+1)(1-8\eps^{1/r}) m\right]^r\tag{\Cref{prop:inequality} with $\alpha=8\eps^{1/r}$}\\
            &\le \big[(\ell+1)(k+m)\big]^r-\eps \big[8m(\ell+1)\big]^r\tag{\Cref{prop:inequality-for-xor}}\\
            &\le |T|\tag{\Cref{thm:ellisedge-lb-xor}}.
        \end{align*}
    }
    We've shown the desired bounds in both the Yes and No cases so the proof is complete.
\end{proof}

\begin{proof}[Proof of \Cref{thm:dt-learn-xor}]
Let $\mathcal{A}$ be the algorithm for {\sc DT-Learn} from the theorem statement. Given an $n$-vertex, $m$-edge graph $G$ of constant degree $d$, we'll use $\mathcal{A}$ to approximate {\sc Vertex Cover} on $G$.

\paragraph{The reduction.}{
    First, we check whether $dk\ge m$. If $dk<m$, our algorithm outputs ``No'' as $G$ cannot have a vertex cover of size at most $k$ (see \Cref{fact:constant degree graphs have large vc size}). Otherwise, we proceed under the assumption that $dk\ge m$. Let $s\in \N$ be the quantity from \Cref{lem:technical-xor}. We run $\mathcal{A}$ over the distribution $\ell\text{-}\mathcal{D}_G^{\otimes r}$ and on the function $\ell\text{-}\IsEdge^{\oplus r}:\zo^N\to\zo$ where $\ell$ is as in \Cref{lem:technical-xor}. Note that $N=r\cdot(n\ell+n)=O(rn^2)$ and $s=O(n^{2r})$. See \Cref{fig:solving vc with dt learn xor} for the exact procedure we run.
}

\begin{figure}[!ht]
\begin{tcolorbox}[colback = white,arc=1mm, boxrule=0.25mm]
\vspace{3pt}
$\textsc{Vertex Cover}(k,(1+\delta')\cdot k)$:
\begin{itemize}[leftmargin=20pt,align=left]
\item[\textbf{Given:}] $G$, an $m$-edge degree-$d$ graph over $n$ vertices and $k\in \N$
\item[\textbf{Run:}]\textsc{DT-Learn}$(s,A\cdot s,\eps)$ for $t(s,1/\eps)$ time steps providing the learner with
\begin{itemize}[align=left,labelsep*=0pt]
    \item \textit{queries}: return $\ell\text{-}{\mathrm{\IsEdge}}^{\oplus r}(x)$ for a query $x\in\zo^N$; and
    \item \textit{random samples}: return $\bx\sim\ell\text{-}\mathcal{D}_G^{\otimes r}$ for a random sample.
\end{itemize}
\item[$T_{\text{hyp}}\leftarrow$ decision tree output of the learner]
\item[$\eps_{\text{hyp}}\leftarrow \mathrm{error}_{\ell\text{-}\mathcal{D}_G^{\otimes r}}(T_{\text{hyp}},\ell\text{-}{\mathrm{\IsEdge}}^{\oplus r})$]
\item[\textbf{Output:}] \textsc{Yes} if and only if $|T_{\text{hyp}}|\le A\cdot s$ and $\eps_{\text{hyp}}\le \eps$
\end{itemize}
\vspace{3pt}
\end{tcolorbox}
\medskip
\caption{Using an algorithm for \textsc{DT-Learn} on $\ell\text{-}{\mathrm{\IsEdge}}^{\oplus r}$ to solve \textsc{Vertex Cover}.}
\label{fig:solving vc with dt learn xor}
\end{figure}

    \paragraph{Runtime.}{Any query to $\ell\text{-}\IsEdge^{\oplus r}_{G}$ can be answered in $O(N)$ time. Similarly, a random sample can be obtained in $O(N)$ time. The algorithm requires time $O(N\cdot t(N^r,1/\eps))$ to run the learner and then $O(N^2)$ time to compute $\dist_{\mathcal{D}^{\otimes r}}(T,\ell\text{-}\IsEdge^{\oplus r}_{G})$. This implies an overall runtime of $O(N\cdot t(N^r,1/\eps))=O(rn^2\cdot t(n^{2r},1/\eps))$.}

    \paragraph{Correctness.}{
        If we let $\delta\coloneqq A^{1/r}-1$, then the assumption of the theorem statement is that $\delta'>(\delta+8\eps^{1/r})d+\delta$. Therefore, we are able to apply \Cref{lem:technical-xor} from which we deduce correctness.
        
        In the \textbf{Yes case}, if $G$ has a vertex cover of size at most $k$, then there is a decision tree of size at most $s$ computing $\ell\text{-}\IsEdge^{\oplus r}_{G}$. So by the guarantees of {\sc DT-Learn}, our algorithm correctly outputs ``Yes''. }

        In the \textbf{No case}, every vertex cover of $G$ has size at least $(1+\delta')k$. If $\eps_{\text{hyp}}>\eps$ then our algorithm for {\sc Vertex Cover} correctly outputs ``No''. Otherwise, assume that $\eps_{\text{hyp}}\le \eps$. Then, \Cref{lem:technical-xor} ensures that $(1+\delta)^rs<|T_{\text{hyp}}|$ and so our algorithm correctly outputs ``No'' in this case as well.

\end{proof}

\begin{remark}[Why we require such sharp lower bounds in the proof of \Cref{thm:dt-learn-xor}]
\label[remark]{remark:sharp}
    A key step in the analysis of the correctness of our reduction is \Cref{lem:technical-xor}. Since our upper bound for $\ell$-$\IsEdge$ is of the form $s^r$, we require an equally strong lower bound of the form $(s')^r$. A weaker lower bound $(s')^{cr}$ for some $c<1$ would be insufficient, since the parameter $s$ would no longer separate the Yes and No cases in \Cref{lem:technical-xor}. 
\end{remark}

\section{Decision tree minimization given a subset of inputs}
\label{sec:dt-dataset-min}

In this section, we show how our results yield new lower bounds for minimizing decision trees. First, we recall the problem of decision tree minimization \citep{ZB00,Sie08}.

\begin{definition}[Decision tree minimization]
\label{def:dtmin}
    {\sc DT-Min}$(s,s')$ is the following. Given a decision tree $T$ over $n$ variables and parameters $s,s'\in \N$, distinguish between
    \begin{itemize}
        \item \textbf{Yes case}: there is a size-$s$ decision tree $T'$ such that $T'(x)=T(x)$ for all $x\in \zo^n$; and
        \item \textbf{No case}: all decision trees $T'$ such that $T'(x)=T(x)$ for all $x\in \zo^n$ have size at least $s'$.
    \end{itemize}
\end{definition}

\cite{Sie08} proves the following hardness results for {\sc DT-Min}:

\begin{theorem}[Hardness of approximating {\sc DT-Min} \citep{Sie08}]
\label{thm:hardness-dtmin}
    The following hardness results hold for {\sc DT-Min}:
    \begin{itemize}
        \item for all constants $C>1$, {\sc DT-Min}$(s,Cs)$ is \textnormal{NP}-hard; and
        \item for all constants $\gamma<1$, there is no quasipolynomial time algorithm for {\sc DT-Min}$(s,2^{(\log s)^\gamma}\cdot s)$ unless $\mathrm{NP} \sse\mathrm{DTIME}(n^{\polylog (n)})$
    \end{itemize}
\end{theorem}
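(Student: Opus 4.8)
The plan is to recover \Cref{thm:hardness-dtmin} from the $\ell$-$\IsEdge^{\oplus r}$ construction, replacing the $\eps$-close bounds used for \Cref{thm:main-general} by their zero-error counterparts. Three ingredients suffice. (i) A generic, polynomial-time constructible decision tree for $\ell\text{-}\IsEdge_G$ of size $O(n\ell)$ \emph{independent of} $\VC(G)$: this is the Yes-case tree of \Cref{thm:ell-isedge-ub} instantiated with the trivial vertex cover $C=V$; stacking $r$ such trees in the manner of \Cref{fig:stacked} produces, in time $n^{O(r)}$, an explicit decision tree $T$ for the total function $\ell\text{-}\IsEdge_G^{\oplus r}:\zo^{r(n\ell+n)}\to\zo$ of size $[(\ell+1)(n+m)+mn]^r=n^{O(r)}$. (ii) The Yes-case bound $\dtsize(\ell\text{-}\IsEdge_G)\le (\ell+1)(k+m)+mn$ when $\VC(G)\le k$ (\Cref{thm:ell-isedge-ub}), which via Savick\'y's identity \Cref{thm:zero-error-xor} yields $\dtsize(\ell\text{-}\IsEdge_G^{\oplus r})\le[(\ell+1)(k+m)+mn]^r$. (iii) The zero-error lower bound $\dtsize(\ell\text{-}\IsEdge_G)\ge(\ell+1)(k'+m)$ when $\VC(G)\ge k'$ (\Cref{lem:ell-isedge-lb-zero-error}), which via \Cref{thm:zero-error-xor} yields $\dtsize(\ell\text{-}\IsEdge_G^{\oplus r})\ge[(\ell+1)(k'+m)]^r$. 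The point of working with exact complexity is that \textsc{DT-Min} concerns \emph{all} decision trees equivalent to a \emph{fixed} input tree, and $\dtsize(\cdot)$ is exactly the quantity governing this.

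\textbf{The reduction.} Let $\Delta>0$ and $d\in\N$ be the constants from \Cref{thm:hardness of vertex cover}. Given a degree-$d$ graph $G$ on $n$ vertices, $m$ edges, and a parameter $k$ (we may assume $dk\ge m$, else $\VC(G)>k$ by \Cref{fact:constant degree graphs have large vc size} and we answer trivially), fix a parameter $r\ge 1$ and choose $\ell=\Theta(n)$ large enough --- possible since $k=\Theta(n)$ for constant-degree graphs. Construct $T$ as in ingredient~(i) and output the \textsc{DT-Min}$(s,s')$ instance $T$ with $s\coloneqq[(\ell+1)(k+m)+mn]^r$ and $s'\coloneqq\lfloor[(\ell+1)((1+\Delta)k+m)]^r\rfloor$. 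By \Cref{prop:inequality} (applied with an arbitrarily small $\alpha>0$, with ``$\delta$'' a small constant and ``$\delta'$''$=\Delta$) there is a constant $\delta^\star>0$ depending only on $\Delta,d$ such that, once $\ell$ is large enough, $(1+\delta^\star)[(\ell+1)(k+m)+mn]<(\ell+1)((1+\Delta)k+m)$, and hence $s'>(1+\delta^\star)^r\,s$. For correctness: if $\VC(G)\le k$ then by ingredient~(ii) there is a decision tree of size $\le s$ computing $\ell\text{-}\IsEdge_G^{\oplus r}$, hence equivalent to $T$ --- a Yes instance; if every vertex cover of $G$ has size $\ge(1+\Delta)k$ then by ingredient~(iii) every decision tree equivalent to $T$ has size $\ge[(\ell+1)((1+\Delta)k+m)]^r>(1+\delta^\star)^r s$ --- a No instance of \textsc{DT-Min}$(s,(1+\delta^\star)^r s)$.

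\textbf{Parameter settings.} For \Cref{thm:hardness-dtmin}(i), given any constant $C>1$, pick $r$ a constant with $(1+\delta^\star)^r\ge C$; the reduction runs in time $n^{O(r)}=\poly(n)$ and outputs $s=\poly(n)$, so it is a polynomial-time Karp reduction from $(1+\Delta)$-gap \textsc{Vertex Cover} to \textsc{DT-Min}$(s,Cs)$, and \textnormal{NP}-hardness of the latter follows from \Cref{thm:hardness of vertex cover}. For \Cref{thm:hardness-dtmin}(ii), given any constant $\gamma\in(0,1)$, pick $r=\Theta((\log n)^{\gamma/(1-\gamma)})$; since $s=n^{\Theta(r)}$ we have $\log s=\Theta(r\log n)$, whence $(1+\delta^\star)^r=2^{\Omega(r)}\ge 2^{(\log s)^\gamma}$ for $n$ large. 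The reduction runs in time $n^{O(r)}=n^{\polylog n}$ and the instance $T$ has size $n^{\polylog n}$, so a quasipolynomial-time algorithm for \textsc{DT-Min}$(s,2^{(\log s)^\gamma}\cdot s)$ would, composed with the reduction and \Cref{thm:hardness of vertex cover}, solve \textsc{SAT} in time $n^{\polylog n}$, i.e.\ place $\mathrm{NP}\sse\mathrm{DTIME}(n^{\polylog(n)})$.

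\textbf{Main obstacle.} The only ingredient not already stated verbatim is (i): that $\ell\text{-}\IsEdge_G$, and via stacking $\ell\text{-}\IsEdge_G^{\oplus r}$, admits an \emph{explicit}, efficiently constructible decision tree of size $n^{O(r)}$ to hand to \textsc{DT-Min}. This, however, is merely the $C=V$ case of the Yes-case construction of \Cref{thm:ell-isedge-ub} followed by the stacking of \Cref{fig:stacked}, both manifestly constructive; everything else is bookkeeping --- the $r$-th-power transfer of the upper and lower bounds is \Cref{thm:zero-error-xor}, the gap arithmetic is \Cref{prop:inequality} with $\alpha\to 0^+$, and the link back to \textsc{SAT} is \Cref{thm:hardness of vertex cover}. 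Finally, running the identical argument with ingredient~(iii) replaced by the $\eps$-close lower bound \Cref{thm:ellisedge-lb-xor} and declaring the support of the canonical hard distribution to be the prescribed ``subset of inputs'' yields the stronger variant advertised in the introduction, in which the output tree need only mostly agree with $T$ on a given subset of inputs.
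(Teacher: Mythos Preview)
Your proposal is correct and follows essentially the same route as the paper: build an explicit decision tree for $\ell\text{-}\IsEdge_G^{\oplus r}$ by instantiating the Yes-case construction of \Cref{thm:ell-isedge-ub} with the trivial cover $C=V$ and stacking $r$ copies as in \Cref{fig:stacked}, then separate the Yes/No cases via \Cref{thm:zero-error-xor} combined with \Cref{thm:ell-isedge-ub} and \Cref{lem:ell-isedge-lb-zero-error}, with the gap arithmetic handled by \Cref{prop:inequality}. The paper in fact proves the stronger \Cref{thm:hardness-dataset-min} directly (using the $\eps$-close lower bound \Cref{thm:ellisedge-lb-xor} in place of your zero-error ingredient~(iii)) and notes that \Cref{thm:hardness-dtmin} is recovered as a special case; your write-up is precisely that specialization, and your final paragraph correctly identifies how to upgrade to the dataset variant.
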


We observe that our proof of \Cref{thm:main-general} recovers \Cref{thm:hardness-dtmin} and also strengthens the hardness results to hold even when the no case in \Cref{def:dtmin} is strengthened to: there is an explicit set of inputs $D$ and an explicit distribution $\mathcal{D}$ over $D$ such that any decision tree $T'$ which agrees with $T$ with probability $1-\eps$ for $\bx\sim\mathcal{D}$ has size at least $s'$. This is a strict strengthening since any decision tree $T'$ such that $T'(x)=T(x)$ for all $x\in \zo^n$ also agrees with $T$ over the distribution~$\mathcal{D}$.

\begin{theorem}[Hardness of approximating {\sc DT-Dataset-Min}]
\label{thm:hardness-dataset-min}
    Let {\sc DT-Dataset-Min}$(s,s')$ be the variant of {\sc DT-Min}$(s,s')$ where the input includes a subset of inputs $D\sse \zo^n$, the pmf of a distribution $\mathcal{D}$ over $D$, and a parameter $\eps$, and the No case is changed to ``all decision trees $T'$ such that $T'(\bx)=T(\bx)$ with probability $1-\eps$ for $\bx\sim\mathcal{D}$ have size at least $s'$.'' Then the following hardness results hold
    \begin{itemize}
        \item for all constants $C>1$ there is a constant $\eps>0$ such that {\sc DT-Dataset-Min}$(s,Cs)$ with error parameter $\eps$ is \textnormal{NP}-hard; and
        \item for all constants $\gamma<1$, there is a parameter $\eps=2^{-(\log s)^\gamma}$ such that there is no quasipolynomial time algorithm for {\sc DT-Min}$(s,2^{(\log s)^\gamma}\cdot s)$ with error parameter $\eps$ unless $\mathrm{NP} \sse\mathrm{DTIME}(n^{\polylog (n)})$
    \end{itemize}
\end{theorem}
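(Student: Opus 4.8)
The plan is to notice that the reduction already used to establish \Cref{thm:dt-learn-xor} is, after one cosmetic change, a reduction from the $(1+\delta')$-gap \textsc{Vertex Cover} problem of \Cref{thm:hardness of vertex cover} to \textsc{DT-Dataset-Min}: instead of feeding a learner oracle access to $\ell\text{-}\IsEdge_G^{\oplus r}$ and samples from $\ell\text{-}\mathcal{D}_G^{\otimes r}$, we hand the minimizer an \emph{explicit} decision tree $T$ that computes $\ell\text{-}\IsEdge_G^{\oplus r}$ everywhere, together with the explicitly tabulated distribution $\mathcal{D}=\ell\text{-}\mathcal{D}_G^{\otimes r}$ (with support $D=\ell\text{-}D_G^{\,r}$) and the error parameter $\eps$. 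Concretely: given a constant-degree-$d$ graph $G$ on $n$ vertices and $m$ edges and a parameter $k$, output ``No'' if $dk<m$ (\Cref{fact:constant degree graphs have large vc size}); otherwise take $\ell=\Theta(n)$ and $s$ as in \Cref{lem:technical-xor}, and fix an explicit $\poly(n)$-size decision tree $T_0$ for $\ell\text{-}\IsEdge_G:\zo^{n\ell+n}\to\zo$ --- e.g.\ the tree given by the Yes-case construction of \Cref{claim:slight-inapprox} instantiated with the trivial vertex cover $C=V$, which has size $O((\ell+1)(n+m)+mn)=\poly(n)$. By the stacked construction behind \Cref{thm:zero-error-xor} (see \Cref{fig:stacked}), running a copy of $T_0$ on each of the $r$ input blocks and XOR-ing the $r$ labels yields an explicit decision tree $T$ computing $\ell\text{-}\IsEdge_G^{\oplus r}:\zo^{r(n\ell+n)}\to\zo$ of size $|T_0|^r=n^{O(r)}$. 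Output the \textsc{DT-Dataset-Min} instance $(T,D,\mathcal{D},\eps)$ with size parameters $s$ and $s'\coloneqq(1+\delta)^r s$; since $|D|=n^{O(r)}$, the reduction runs in time $n^{O(r)}$.

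Correctness follows from \Cref{lem:technical-xor}. In the Yes case ($\VC(G)\le k$), the lemma produces a decision tree of size at most $s$ computing $\ell\text{-}\IsEdge_G^{\oplus r}$ on \emph{all} of $\zo^{r(n\ell+n)}$; as $T$ computes the same function everywhere, this tree agrees with $T$ on every input, placing us in the Yes case of \textsc{DT-Dataset-Min}. In the No case ($\VC(G)\ge(1+\delta')k$), observe that since $T$ equals $\ell\text{-}\IsEdge_G^{\oplus r}$ pointwise, ``$T'$ agrees with $T$ with probability $1-\eps$ under $\mathcal{D}$'' is precisely $\error_{\ell\text{-}\mathcal{D}_G^{\otimes r}}(T',\ell\text{-}\IsEdge_G^{\oplus r})\le\eps$, so \Cref{lem:technical-xor} forces $|T'|>(1+\delta)^r s=s'$. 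Finally, any $T'$ agreeing with $T$ on all of $\zo^n$ a fortiori agrees with probability $1\ge 1-\eps$ under $\mathcal{D}$, so these same instances are Yes/No instances of plain \textsc{DT-Min}; hence the identical reduction recovers \Cref{thm:hardness-dtmin}, reproving \cite{Sie08}.

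It remains to choose $r,\delta,\eps$. The constraints of \Cref{lem:technical-xor} --- $\delta'>(\delta+8\eps^{1/r})d+\delta$ and $\eps<2^{-3r}$ --- are satisfied by taking $\delta$ a sufficiently small constant (depending only on the fixed $\delta',d$) and $\eps$ of the form $2^{-\Theta(r)}$. For the first bullet, fix $r$ to a constant large enough that $(1+\delta)^r\ge C$ and let $\eps$ be the resulting constant: the reduction is polynomial-time, giving NP-hardness of \textsc{DT-Dataset-Min}$(s,Cs)$ at error $\eps$. For the second bullet, since $\log s=\Theta(r\log n)$, choosing $r=\Theta\!\big((\log n)^{\gamma/(1-\gamma)}\big)$ makes the gap $(1+\delta)^r=2^{\Theta(r)}$ at least $2^{(\log s)^\gamma}$, with the matching error parameter $\eps=2^{-(\log s)^\gamma}$; the reduction then runs in time $n^{O(r)}=n^{\polylog n}$, so a quasipolynomial-time algorithm for \textsc{DT-Dataset-Min}$(s,2^{(\log s)^\gamma}s)$ at error $\eps$ would $(1+\delta')$-approximate \textsc{Vertex Cover} in quasipolynomial time, hence put $\mathrm{NP}\sse\mathrm{DTIME}(n^{\polylog n})$ by \Cref{thm:hardness of vertex cover}.

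I expect the only genuinely non-routine ingredient to be the explicit tree $T=T_0^{\oplus r}$: one must exhibit a $\poly(n)$-size decision tree for $\ell\text{-}\IsEdge_G$ whose description does \emph{not} presuppose a good vertex cover of $G$ (the trivial cover $C=V$ works, but this is worth spelling out), and check that stacking costs only an $r$-th power in size via \Cref{thm:zero-error-xor}. Once $T$ is in hand, everything reduces to substituting into \Cref{lem:technical-xor}, and the remaining parameter bookkeeping for the two bullets is routine.
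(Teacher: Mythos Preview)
Your proposal is correct and matches the paper's proof essentially line for line: the paper also builds the explicit tree by applying the Yes-case upper bound (\Cref{thm:ell-isedge-ub}) with the trivial cover $C=V$, stacks $r$ copies via \Cref{thm:zero-error-xor}/\Cref{fig:stacked}, hands off $\ell\text{-}\mathcal{D}_G^{\otimes r}$ on its support, and invokes \Cref{lem:technical-xor} for both cases; the parameter choice $r^{1-\gamma}\ge\Omega((\log n)^\gamma)$ in the paper is exactly your $r=\Theta((\log n)^{\gamma/(1-\gamma)})$. Your added remark that the same instances also certify the plain \textsc{DT-Min} gap (recovering \cite{Sie08}) is a nice observation the paper states just before the theorem but does not spell out inside the proof.
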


\begin{proof}
    These hardness results follow from the reduction in \Cref{thm:dt-learn-xor}. Specifically, we construct a decision tree $T^\star$ computing $\ell\text{-}\IsEdge^{\oplus r}$ over $\zo^{r(n\ell+n)}$. The set of all $n$ vertices of $G$ trivially forms a vertex cover of $G$. Therefore, we can apply \Cref{thm:ell-isedge-ub} to obtain a decision tree for $\ell\text{-}\IsEdge$ of size $(\ell+1)(n+m)+mn$. We can stack $r$ independent copies of this decision tree as in the proof of \Cref{thm:zero-error-xor} (see \Cref{fig:stacked}) to get a decision tree for $\ell\text{-}\IsEdge^{\oplus r}$ whose size is $\left[(\ell+1)(n+m)+mn\right]^r$. We then choose $\ell\text{-}D_G^r=\supp(\ell\text{-}\mathcal{D}_G^{\otimes r})$ to be the subset of inputs for the minimization instance. Moreover, it is straightforward to compute the pmf of the distribution $\ell\text{-}\mathcal{D}_G^{\otimes r}$ and provide this to the algorithm for {\sc DT-Dataset-Min}. 
    
    Therefore, as in the proof of \Cref{thm:main}, $(1+\delta')$-approximating {\sc Vertex Cover} reduces in polynomial-time to {\sc DT-Dataset-Min}$(s,Cs)$. This completes the proof of the first point in the theorem statement. 

    For the second point, let $\gamma<1$ be given. We choose $r$ large enough so that $(1+\delta)^r>2^{(\log s)^\gamma}$ where $s$ and $\delta$ are parameters from \Cref{lem:technical-xor}. Since $s=O(n^{2r})$, any $r=\polylog n$ satisfying $r^{1-\gamma}\ge \Omega((\log n)^{\gamma})$ is sufficient. For this choice of $r$, our reduction runs in quasipolynomial-time and reduces $(1+\delta')$-approximating {\sc Vertex Cover} to {\sc DT-Dataset-Min}$(s,2^{(\log s)^\gamma}\cdot s)$. Therefore, the proof is complete.
\end{proof}

\end{document}